\definecolor{red}{RGB}{255,0,0}
\definecolor{blue}{RGB}{0,0,255}
\definecolor{green}{RGB}{0,255,0}
\newcommand {\abs}[1]  {\left\vert#1\right\vert}
\newcommand {\set}[1]  {\left\{#1\right\}}
\newcommand {\defined} {\stackrel{def} {=}}
\newcommand {\npc}     {\textsc{NP}\textrm{-complete}}
\newcommand {\nph}     {\textsc{NP}\textrm{-hard}}
\newcommand {\bigoh}   {{\mathcal O}}
\newcommand{\commentfig}[1] {#1}
\newcommand {\itemref}[1] {\ref{itm:#1}\emph{)}}
\newtheorem{theorem}      {Theorem}     [section]
\newtheorem{corollary}    {Corollary}   [section]
\newtheorem{lemma}        {Lemma}       [section]
\newtheorem{proposition}  {Proposition} [section]
\newtheorem{obs}          {Observation} [section]
\newcommand{\epu} {\textsc{EP}}
\newcommand{\ept} {\textsc{EPT}}
\newcommand{\epg} {\textsc{EPG}}
\newcommand{\enpu} {\textsc{ENP}}
\newcommand{\enpt} {\textsc{ENPT}}
\newcommand{\enpg} {\textsc{ENPG}}
\newcommand{\epug}[1] {\textsc{Ep}(#1)}
\newcommand{\eptg}[1] {\textsc{Ept}(#1)}
\newcommand{\epgg}[1] {\textsc{Epg}(#1)}
\newcommand{\enpug}[1] {\textsc{Enp}(#1)}
\newcommand{\enptg}[1] {\textsc{Enpt}(#1)}
\newcommand{\enpgg}[1] {\textsc{Enpg}(#1)}
\newcommand{\epugp} {\epug{\pp}}
\newcommand{\eptgp} {\eptg{\pp}}
\newcommand{\epggp} {\epgg{\pp}}
\newcommand{\enpugp} {\enpug{\pp}}
\newcommand{\enptgp} {\enptg{\pp}}
\newcommand{\enpggp} {\enpgg{\pp}}
\newcommand{\bepg}[1] {\textsc{B}_{{#1}}\textsc{-EPG}}
\newcommand{\benpg}[1] {\textsc{B}_{{#1}}\textsc{-ENPG}}
\newcommand{\boneepg} {\bepg{1}}
\newcommand{\boneenpg} {\benpg{1}}
\newcommand{\btwoenpg} {\benpg{2}}
\newcommand{\rept} {\left< T,\pp \right>}
\newcommand{\rep} {\left< H,\pp \right>}
\newcommand{\repprime} {\left< H',\pp' \right>}
\renewcommand{\split} {\textit{split}}
\newcommand{\pp} {{\cal P}}
\newcommand{\cs} {{\cal S}}
\author{Arman Boyac{\i} \affiliationmark{1}
  \and T{\i}naz Ekim \affiliationmark{1}
  \and Mordechai Shalom \affiliationmark{1,2} \thanks{Part of this work is accomplished while this author was visiting Bogazici University, Department of Industrial  Engineering, under the TUBITAK 2221 Program whose support is greatly acknowledged.}
  \and Shmuel Zaks \affiliationmark{3}}
\title{Graphs of Edge-Intersecting and Non-Splitting One Bend Paths in a Grid \thanks{This work was supported in part by TUBITAK PIA BOSPHORUS Grant No. 111M303.}
}
\affiliation{
  Department of Industrial Engineering, Bogazici University, Istanbul, Turkey \\
  TelHai College, Upper Galilee, 12210, Israel\\
  Department of Computer Science, Technion, Haifa, Israel}
\keywords{Intersection Graphs, Path Graphs, EPT Graphs, EPG Graphs}
\begin{document}
\maketitle
\begin{abstract}
The families $\ept$ (resp. $\epg$) Edge Intersection Graphs of Paths in a tree (resp. in a grid) are well studied graph classes. Recently we introduced the graph classes Edge-Intersecting and Non-Splitting Paths in a Tree ($\enpt$), and in a Grid ($\enpg$). It was shown that $\enpg$ contains an infinite hierarchy of subclasses that are obtained by restricting the number of bends in the paths. Motivated by this result, in this work we focus on one bend $\enpg$ graphs. We show that one bend $\enpg$ graphs are properly included in two bend $\enpg$ graphs. We also show that trees and cycles are one bend $\enpg$ graphs, and characterize the split graphs and co-bipartite graphs that are one bend $\enpg$. We prove that the recognition problem of one bend $\enpg$ split graphs is $\npc$ even in a very restricted subfamily of split graphs. Last we provide a linear time recognition algorithm for one bend $\enpg$ co-bipartite graphs.

\end{abstract}

\section{Introduction}\label{sec:intro}
\subsection{Background}
Given a host graph $H$ and a set $\pp$ of paths in $H$, the Edge Intersection Graph of Paths ($\epu$ graph)
of $\pp$ is denoted by $\epugp$. The graph $\epugp$ has a vertex for each path in $\pp$, and two vertices of $\epugp$ are adjacent if the corresponding two paths intersect in at least one edge. A graph $G$ is $\epu$ if there exist a graph $H$ and a set $\pp$ of  paths in $H$ such that $G=\epugp$. In this case, we say that $\rep$ is an $\epu$ representation of $G$. We also denote by $\epu$ the family of all graphs $G$ that are $\epu$.

The main application area of $\epu$ graphs is communication networks. Messages to be delivered are sent through routes of a communication network. Whenever two paths use the same link on the communication network, we say that they conflict. Noting that this conflict model is equivalent to an $\epu$ graph, several optimization problems in communication networks (such as message scheduling) can be seen as graph problems (such as vertex coloring) in the corresponding  $\epu$ graph.

In many applications it turns out that the host graphs are restricted to certain families such as paths, cycles, trees, grids, etc. Several known graph classes are obtained with such restrictions: when the host graph is restricted to paths, cycles, trees and grids, we obtain interval graphs, circular-arc graphs,  Edge Intersection Graph of Paths in a Tree ($\ept$) (see \cite{GJ85}), and Edge Intersection Graph of Paths in a Grid ($\epg$) (see \cite{DBLP:journals/networks/GolumbicLS09}), respectively.

Given a representation $\rept$ where $T$ is a tree and $\pp$ is a set of paths of $T$, the graph of edge intersecting and non-splitting paths of  $\rept$ (denoted by $\enptgp$) is defined as follows in \cite{BESZ13-ENPT1-DAM}:
$\enptgp$ has a vertex $v$ for each path $P_v$ of $\pp$ and two vertices $u,v$ of this graph are adjacent if the paths $P_u$ and $P_v$ edge-intersect and do not split (that is, their union is a path). We note that  $\enptgp$ is a subgraph of $\eptgp$. The motivation to study these graphs arises from all-optical Wavelength Division Multiplexing (WDM) networks in which two streams of signals can be transmitted using the same wavelength only if the paths corresponding to these streams do not split from each other (see \cite{BESZ13-ENPT1-DAM} for a more detailed discussion). A graph $G$ is an $\enpt$ graph if there is a tree $T$ and a set of paths $\pp$ of $T$ such that $G=\enptgp$. Clearly, when $T$ is a path, $\eptgp=\enptgp$ and this graph is an interval graph. Therefore, interval graphs are included in the class $\enpt$. In \cite{BESZ14-ENPG-TCS} we obtain the so-called $\enpu$ graphs by extending this definition to the case where the host graph is not necessarily a tree. In the same work, it has been shown that $\enpu=\enpg$ where $\enpg$ is the family of $\enpu$ graphs where the host graphs are restricted to grids. Whenever the host graph is a grid, it is common to use the following notion: a \emph{bend} of a path on a grid is an internal point in which the path changes direction. An $\enpg$ graph is $\benpg{k}$ if it has a representation in which every path has at most $k$ bends.

\subsection{Related Work}
While $\enpt$ and $\enpg$ graphs have been recently introduced,  $\ept$ and $\epg$ graphs are well studied in the literature. The recognition of $\ept$ graphs is NP-complete (\cite{Golumbic1985151}), whereas one can solve in polynomial time the maximum clique (\cite{Golumbic1985151}) and the maximum stable set (\cite{RobertE1985221}) problems in this class.

Several recent papers consider the edge intersection graphs of paths on a grid. Since all graphs are $\epg$ (see \cite{DBLP:journals/networks/GolumbicLS09}), most of the studies focus on the sub-classes of $\epg$ obtained by limiting the number of bends in each path.
An $\epg$ graph is $\bepg{k}$ if it admits a representation in which every path has at most $k$ bends.
The work of \cite{DBLP:journals/dmtcs/BiedlS10} investigates the minimum number $k$ such that $G$ has a $\bepg{k}$ representation for some special graph classes.
The work of \cite{DBLP:journals/networks/GolumbicLS09} studies the $\boneepg$ graphs. In particular it is shown that every tree is $\boneepg$, and a characterization of $C_4$ representations is given.
In \cite{DBLP:journals/dmtcs/BiedlS10} the existence of an outer-planar graph which is not $\boneepg$ is shown.
The recognition problem of $\boneepg$ graphs is shown to be $\npc$ in \cite{HKU10}. Similarly, in the class of $\boneepg$, the minimum coloring and the maximum stable set problems are $\npc$ (\cite{EGM2013}), however one can solve in polynomial time the maximum clique problem (\cite{EGM2013}).
\cite{AR2012} give a characterization of graphs that are both $\boneepg$ and belong to some subclasses of chordal graphs. Recently, \cite{CCH16} consider subclasses of $\boneepg$ obtained by restricting the representations to contain only certain subsets of the four possible single bend rectilinear paths. It is shown that for each possible non-empty subset of these four shapes, the recognition of the corresponding subclass of $\boneepg$ is an $\npc$ problem.

In \cite{BESZ13-ENPT1-DAM} we defined the family of $\enpt$ graphs and investigated the representations of induced cycles. These representations turn out to be much more complex than their counterpart in the $\ept$ graphs (discussed in \cite{GJ85}). In \cite{BESZ14-ENPG-TCS} we extended this definition to the general case in which the host graph is not necessarily a tree. We showed that the family of $\enpu$ graphs coincides with the family of $\enpg$ graphs, and that unlike $\epg$ graphs, not every graph is $\enpg$. We also showed that, in a way similar to the family of $\epg$ graphs, the sub families $\benpg{k}$ of $\enpg$ contains an infinite subset totally ordered by proper inclusion.

\subsection{Our Contribution}
In this work, we consider $\boneenpg$ graphs. In Section~\ref{sec:prelim} we present definitions and preliminary results among which we show that cycles and trees are $\boneenpg$ graphs. In Section~\ref{sec:split} we show that the $\boneenpg$ recognition problem is $\npc$ even for a very restricted subfamily of split graphs, i.e. graphs whose vertex sets can be partitioned into a clique and an independent set. In Section~\ref{sec:cobipartite} we show that $\boneenpg$ graphs can be recognized in  polynomial time within the family of co-bipartite graphs. As a byproduct, we also show that, unlike $\bepg{k}$ graphs, $\benpg{k}$ graphs do not necessarily admit a representation where every path has exactly $k$ bends. We summarize and point to further research directions in Section~\ref{sec:summary}.

\section{Preliminaries}\label{sec:prelim}
Given a simple graph (no loops or parallel edges) $G=(V(G),E(G))$ and a vertex $v$ of $G$, we denote by $N_G(v)$ the set of neighbors of $v$ in $G$, and by $d_G(v)=\abs{N_G(v)}$ the degree of $v$ in $G$. A graph is called \emph{$d$-regular} if every vertex $v$ has $d(v)=d$. Whenever there is no ambiguity we omit the subscript $G$ and write $d(v)$ and $N(v)$. Given a graph $G$ and $U \subseteq V(G)$, $N_U(v) \defined N_G(v) \cap U$. Two adjacent (resp. non-adjacent) vertices $u,v$ of $G$ are \emph{twins} (resp. \emph{false twins}) if $N_G(u) \setminus \set{v}=N_G(v) \setminus \set{u}$. For a graph $G$ and $U \subseteq V(G)$, we denote by $G[U]$ the subgraph of $G$ induced by $U$.

A vertex set $U \subseteq V(G)$ is a clique (resp. stable set) (of $G$) if every pair of vertices in $U$ is adjacent (resp. non-adjacent). A graph $G$ is a \emph{split graph} if $V(G)$ can be partitioned into a clique and a stable set. A graph $G$ is \emph{co-bipartite} if $V(G)$ can be partitioned into two cliques. Note that these partitions are not necessarily unique. We denote bipartite, co-bipartite and split graphs as $X(V_1,V_2, E)$ where
\begin{enumerate}[a)]
\item $X = B$ (resp. $C,S$) whenever $G$ is bipartite (resp. co-bipartite, split),
\item $V_1 \cap V_2 = \emptyset$, $V_1 \cup V_2 = V(G)$,
\item for bipartite graphs $V_1,V_2$ are stable sets,
\item for co-bipartite graphs $V_1$ and $V_2$ are cliques,
\item for split graphs $V_1$ is a clique and $V_2$ is a stable set, and
\item $E \subseteq V_1 \times V_2$ (in other words $E$ does not contain the cliques' edges).
\end{enumerate}
Unless otherwise stated we assume that $G$ is connected and both $V_1$ and $V_2$ are non-empty.

In this work every single path is simple, i.e. without duplicate vertices. However, if a union of paths is a path, the resulting path is not necessarily simple. For example, consider a graph on $5$ vertices $v_1,v_2,v_3,v_4,v_5$ and 5 edges $e_1=v_1 v_2, e_2=v_2 v_3, e_3=v_3 v_4, e_4=v_4 v_2, e_5=v_2 v_5$. Each of the paths $P_1 = \set{e_1 e_2 e_3}$ and $P_2 = \set{e_3, e_4, e_5}$ is simple. On the other hand, though $P_1 \cup P_2$ is a path, it is not simple. Whenever $v$ is an internal vertex of a path $P$, we sometimes say that $P$ \emph{crosses} $v$. Given two paths $P,P'$, a \emph{split} of $P,P'$ is a vertex with degree at least $3$ in $P \cup P'$. We denote by $\split(P,P')$ the set of all splits of $P$ and $P'$. When $\split(P, P') \neq \emptyset$ we say that $P$ and $P'$ are \emph{splitting}. Whenever $P$ and $P'$ edge intersect and $\split(P, P') = \emptyset$  we say that $P$ and $P'$ are \emph{non-splitting} and denote this by $P \sim P'$.  Clearly, for any two paths $P$ and $P'$ exactly one of the following holds: 
\begin{enumerate}[i)]
\item $P$ and $P'$ are edge disjoint,
\item $P$ and $P'$ are splitting, 
\item $P \sim P'$.
\end{enumerate}

A two-dimensional \emph{grid graph}, also known as a square grid graph, is an $m\times n$ lattice graph that is the Cartesian product graph of two paths $P$ and $P'$ of respectively length $n$ and $m$. Such a grid has vertex set $V=[n]\times [m]$.
A \emph{bend} of a path $P$ in a grid $H$ is an internal vertex of $P$ whose incident edges (in the path) have different directions, i.e. one vertical and one horizontal.

Let $\pp$ be a set of paths in a graph $H$. The graphs $\epugp$ and $\enpugp$ are such that $V(\enpugp) = V(\epugp) = V$, and there is a one-to-one correspondence between $\pp$ and $V$, i.e. $\pp=\set{P_v: v \in V}$. Given two paths $P_u,P_v \in \pp$, $\set{u,v}$ is an edge of $\epugp$ if and only if $P_u$ and $P_v$ have a common edge (cases (ii) and (iii)), whereas $\set{u,v}$ is an edge of $\enpugp$ if and only if $P_u \sim P_v$ (case (iii)). Clearly, $E(\enpugp) \subseteq E(\epugp)$. A graph $G$ is $\enpu$ if there is a graph $H$ and a set of paths $\pp$ of $H$ such that $G=\enpugp$. In this case $\rep$ is an $\enpu$ \emph{representation} of $G$. When $H$ is a tree (resp. grid) $\epugp$ is an $\ept$ (resp. $\epg$) graph, and $\enpugp$ is an $\enpt$ (resp. $\enpg$) graph; these graphs are denoted also as $\eptgp$, $\epggp$, $\enptgp$ and $\enpggp$, respectively. We say that two representations are \emph{equivalent} if they are representations of the same graph.

Let $\rep$ be a representation of an $\enpu$ graph $G$. For each edge $e$ of $H$, $\pp_e$ denotes the set of the paths of $\pp$ containing the edge $e$, i.e. $\pp_e \defined \set{P \in \pp|~e \in P}$. For a subset $U \subseteq V(G)$ we define $\pp_U \defined \set{P_v \in \pp : v \in U}$. Following standard notations, $\cup \pp_U \defined \cup_{P \in \pp_U} P$.

Given two paths $P$ and $P'$ of a graph, a \emph{segment} of $P \cap P'$ is a maximal path that constitutes a sub-path of both $P$ and $P'$. Clearly, $P \cap P'$ is the union of edge disjoint segments. We denote the set of these segments by $\cs(P,P')$.

Throughout the paper, whenever a representation $\rep$ of an $\enpg$ graph is given, we assume the host graph $H$ is a grid on sufficiently many vertices each of which is denoted by an ordered pair of integers.

The following Proposition that is proven in \cite{BESZ14-ENPG-TCS} is the starting point of many of our results.
\begin{proposition}\label{prop:BendsInClique}
\cite{BESZ14-ENPG-TCS} Let $K$ be a clique of a $\boneenpg$ graph $G$ with a representation $\rep$. Then $\cup \pp_K$ is a path with at most $2$ bends. Moreover, there is an edge $e_K \in E(H)$ such that every path of $\pp_K$ contains $e_K$.
\end{proposition}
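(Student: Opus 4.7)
The approach is to analyze the subgraph $H^* = \cup \pp_K$ in three stages: first show it has maximum degree at most $2$, then rule out that it is a cycle, and finally extract the common edge $e_K$ together with the bend bound via a one-dimensional Helly argument.

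For the degree bound, I would suppose some vertex $v$ has three distinct incident edges $e_1, e_2, e_3$ in $H^*$ and argue by cases on how these three edges are distributed among the paths of $\pp_K$; since each simple path has degree at most $2$ at $v$, no single path carries all three. In each case the goal is to exhibit two paths $P, Q \in \pp_K$ such that $v$ has degree at least $3$ in $P \cup Q$, contradicting $P \sim Q$. The subtle sub-case occurs when $v$ is an endpoint of three distinct paths with three distinct first edges at $v$; here the pair-union has $v$ of degree exactly $2$, but pairwise edge-intersection away from $v$ combined with the union being a simple walk would force a split at the second meeting point between $P$ and $Q$.

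With the degree bound and the connectivity coming from pairwise edge-intersection, $H^*$ is either a simple path or a cycle. To exclude the cycle case, I would use that any rectilinear grid cycle has at least four sides, that each path of $\pp_K$ (having at most one bend) spans at most two consecutive sides of the cycle, and that pairwise edge-intersection forces any two paths to share a side. A short combinatorial case analysis on which side-sets of size at most $2$ can pairwise share a side then shows that no such pairwise-intersecting family can cover every side of a cycle with four or more sides, so $H^*$ is in fact a simple path.

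Each $P \in \pp_K$ is then a sub-interval of the edge sequence of $H^*$, and the pairwise overlapping intervals share a common edge by the one-dimensional Helly property, producing the edge $e_K$. For the bend bound, I would partition $H^*$ into its $b+1$ maximal straight segments, where $b$ is the number of bends; every $P \in \pp_K$ has at most one bend and so spans at most two consecutive segments, so a path using segment $0$ lies within $\{0, 1\}$ and one using segment $b$ lies within $\{b-1, b\}$, which are disjoint for $b \geq 3$, contradicting pairwise edge-intersection and forcing $b \leq 2$. The main obstacle I expect is the cycle-elimination step, as the Helly property for arcs on a cycle is weaker than for intervals on a line and the argument must carefully use the restricted bend budget of each path together with the geometric rigidity of grid cycles.
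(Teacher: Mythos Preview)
The paper does not prove this proposition; it is quoted from \cite{BESZ14-ENPG-TCS} and used as a black box, so there is no in-paper argument to compare your proposal against, and I can only assess your outline on its own merits.

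Your three-stage plan is sound, and both the cycle-elimination via side-sets and the Helly/segment-count arguments for $e_K$ and the bend bound go through exactly as you describe (for the side-set step, just record explicitly that an edge lies on a single side, so edge-intersection of two arcs forces intersection of their side-sets). The one place that is not yet correct is the ``subtle sub-case'' in the degree-$2$ step. Your phrase ``would force a split at the second meeting point'' is the wrong picture: if $P\cup Q$ has maximum degree $2$ and happens to be a simple path, then $v$ is an internal vertex of it, and since $P$ and $Q$ are simple and visit $v$ only once, each is confined to one of the two halves of $P\cup Q$ determined by $v$; hence they share \emph{no} edge, contradicting edge-intersection directly rather than producing a split. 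You must also separately rule out the possibility that $P\cup Q$ is a simple cycle, which is not excluded by $P\sim Q$ alone. The cleanest fix uses the one-bend hypothesis explicitly: a one-bend grid path with endpoint $v$ lies on the union of the axis line through $v$ in the direction of its first edge and (possibly) one perpendicular line through its unique bend; for two such paths with distinct first edges at $v$ these line-pairs are edge-disjoint, so the paths cannot edge-intersect at all. This dispatches the sub-case in one line and makes the cycle issue moot.
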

Note that whenever $\cup \pp_K$ has two bends, $e_K$ lies between these two bends. Based on the above proposition, given two cliques $K,K'$ of a $\boneenpg$ graph we denote $\cs(K,K') \defined \cs(\cup \pp_K, \cup \pp_{K'})$.

By the following two observations, in the sequel we focus on connected twin-free graphs.
\begin{obs}\label{obs:twin}
Let $G$ be a graph and $G'$ obtained from $G$ by removing a twin vertex until no twins remain. Then, $G$ is $\benpg{k}$ if and only if $G'$ is $\benpg{k}$.
\end{obs}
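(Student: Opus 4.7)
The plan has two directions. For the ``only if'' direction, I rely on the fact that $\benpg{k}$ is closed under induced subgraphs: if $\rep$ is a $\benpg{k}$ representation of $G$, then restricting the indexed family of paths to those indexed by $V(G')$ gives a $\benpg{k}$ representation of $G[V(G')] = G'$, with no increase in any path's bend count.

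For the ``if'' direction I induct on the number of twin removals needed to turn $G$ into $G'$. The inductive step reduces to the following key claim: if $u,v$ are twin vertices of a graph $G$ (so $uv \in E(G)$ and $N(u)\setminus\set{v} = N(v)\setminus\set{u}$) and $G-v$ admits a $\benpg{k}$ representation $\repprime$, then so does $G$. My construction is to duplicate the path of $u$, i.e.\ adjoin to $\pp'$ a new path $P_v$ having the same edge set as $P_u$, indexed by the new vertex $v$. The new path inherits the bend count ($\le k$). For any third vertex $w$ of $G-v$ with $w\neq u$, the identities $P_v\cap P_w = P_u\cap P_w$ and $P_v\cup P_w = P_u\cup P_w$ show that $P_v\sim P_w$ iff $P_u\sim P_w$, which matches the required adjacencies by the twin property $N(u)\setminus\set{v}=N(v)\setminus\set{u}$. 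Finally $P_u\cup P_v = P_u$ is a path and the two share edges, so $P_u\sim P_v$ and the edge $uv$ appears in the new $\enpg$ graph, as needed.

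The only subtlety I anticipate is the degenerate case in which $P_u$ has no edges, which can happen only when $v$ is $u$'s unique neighbor in $G$ (so $u$ is isolated in $G-v$). To handle it I would first modify $\repprime$ by rerouting $P_u$ onto a single edge placed in an otherwise unused region of the grid, which we are free to enlarge; this alters no intersections with the other paths and adds no bends. After this preprocessing, the duplication argument goes through unchanged. This is the only real obstacle and it is easy to dispose of.
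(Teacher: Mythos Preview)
The paper states this observation without proof, treating it as self-evident; your argument is correct and is precisely the natural justification one would supply. The duplication-of-paths idea for the ``if'' direction and the hereditary property for the ``only if'' direction are exactly what is implicit in the paper's use of the observation, so there is nothing to contrast.
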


\begin{obs}\label{obs:connected}
A graph $G$ is $\benpg{k}$ if and only if every connected component of $G$ is $\benpg{k}$.
\end{obs}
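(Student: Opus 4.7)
The plan is to prove both directions separately, with the forward direction being essentially a restriction argument and the backward direction being a disjoint-placement argument.

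For the forward direction, suppose $G$ is $\benpg{k}$ with a representation $\rep$, and let $C$ be a connected component of $G$ with vertex set $U \subseteq V(G)$. I would take $\pp_U = \set{P_v : v \in U}$ and argue that $\rep[U] \defined \left< H, \pp_U \right>$ is a $\benpg{k}$ representation of $C$. Every path in $\pp_U$ still has at most $k$ bends since the paths themselves are unchanged, and the $\enpu$ relation between two paths in $\pp_U$ is determined only by those two paths, so $\enpugg{\pp_U} = G[U] = C$. (Strictly speaking one can embed $H$ back into a grid if needed, but this does not affect bend counts.)

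For the backward direction, suppose $G$ has connected components $C_1, \dots, C_t$ and each $C_i$ admits a $\benpg{k}$ representation $\repn{i} = \left< H_i, \pp_i \right>$ where $H_i$ is a grid. The key idea is to place these grid representations in pairwise disjoint regions of one large grid $H$. Concretely, if each $H_i$ is an $m_i \times n_i$ grid, I would translate $H_i$ into the rectangle with corners at $\left(1 + \sum_{j<i}(m_j+1), 1\right)$ and its obvious extent, so that the vertex sets of the translated $H_i$'s are pairwise disjoint and sit inside one sufficiently large grid $H$. Each path $P \in \pp_i$ then becomes a path of $H$ with exactly the same bends and the same number of bends, and I define $\pp \defined \bigcup_i \pp_i$ (under this translation).

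It then remains to verify that $\enpugp$ equals $G$. For $v, w$ in the same component $C_i$, the relation $P_v \sim P_w$ is preserved since both paths lie entirely in the translated copy of $H_i$, which is an induced subgrid of $H$. For $v \in C_i$ and $w \in C_j$ with $i \neq j$, the paths $P_v$ and $P_w$ live in vertex-disjoint subgrids of $H$, so they share no edge and are thus edge-disjoint (case (i) of the trichotomy stated in the preliminaries), hence not adjacent in $\enpugp$; this matches $G$ since $v$ and $w$ lie in different components and are therefore non-adjacent. Thus $\rep$ is a $\benpg{k}$ representation of $G$. The only mild subtlety — and the closest thing to an obstacle — is making sure that the placement of the component grids creates no unintended edge intersections and no unintended splits between paths from different components, which is exactly what the strict vertex-disjointness of the translated subgrids guarantees.
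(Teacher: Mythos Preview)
Your proof is correct; the paper states this as an observation without proof, treating it as self-evident, and your argument is exactly the standard elaboration one would supply (restriction for the forward direction, disjoint placement of component grids for the converse).
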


We first observe that some well-known graph classes are included in $\boneenpg$.

\begin{proposition}\label{prop:SomeOneBendENPGGraphs}
~\\
i) Every cycle is $\boneenpg$.\\
ii) Every tree is $\boneenpg$, and it has a representation $\rep$ where $\cup \pp$ is a tree.
\end{proposition}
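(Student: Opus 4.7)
\noindent\emph{Proof plan.} For part (i), I plan to exhibit explicit $\boneenpg$ representations in two cases. When $n=3$, I take three straight horizontal paths on row $0$ of the grid that all contain a common edge, for example $P_1=(-1,0)-(1,0)$, $P_2=(0,0)-(2,0)$ and $P_3=(-1,0)-(2,0)$. Each pair intersects on the common edge $(0,0)-(1,0)$ and each pairwise union is a subpath of row $0$, so every pair is non-splitting; this yields $K_3=C_3$. When $n\ge 4$, I take a sufficiently large rectangle in the grid and place $n$ division points on its perimeter so that no point is a corner and each of the four sides carries at least one point---which is feasible because $n\ge 4$. The points cut the perimeter into $n$ cyclically ordered arcs $A_1,\dots,A_n$, each containing at most one corner. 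Defining $P_i$ to be $A_i$ together with the first edge of $A_{i+1}$ (indices modulo $n$), every $P_i$ has at most one bend (the appended edge lies on the same side of the rectangle as the last edge of $A_i$ because no division point is a corner), consecutive paths share exactly the appended edge with non-splitting union (a connected portion of the perimeter, hence a path), and non-consecutive paths are edge-disjoint. This gives $\enpggp=C_n$.

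For part (ii), I plan a recursive construction: root $T$ at an arbitrary vertex $r$ and traverse it top-down. For each $v\in V(T)$, $P_v$ will have exactly $d_T(v)$ grid edges, each reserved in a one-to-one manner for a distinct neighbor of $v$ in $T$; a leaf of $T$ gives a single edge, and an internal vertex gives an L-shape (or a straight segment), in either case with at most one bend. Starting from $P_r$ placed arbitrarily in the grid, I recurse as follows: for each child $v$ of $u$, let $e_{uv}$ be the edge of $P_u$ reserved for $v$, and route $P_v$ so that its overlap with $P_u$ runs along $P_u$ from $e_{uv}$ all the way to the nearest endpoint $p$ of $P_u$; the remaining edges of $P_v$ then leave $P_u$ at $p$, bending into a previously unused region of the grid, and I recurse into $v$'s subtree inside that fresh region. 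Because every recursion step adds only new grid edges, $\cup\pp$ stays acyclic and therefore a tree.

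The step I expect to be the main obstacle is ensuring, simultaneously, that $P_u\sim P_v$ and that $P_v$ splits from the paths of $u$'s other neighbors whose reserved edges lie along the extended overlap. Non-splitting of $P_u$ and $P_v$ follows from extending the overlap all the way to the endpoint $p$ of $P_u$: the bend of $P_v$ then sits at a degree-one vertex of $P_u$, adding no degree-three vertex to $P_u\cup P_v$, which therefore remains a path. Splitting between $P_v$ and $P_w$ (for another neighbor $w$ of $u$ whose reserved edge lies in the overlap) will be automatic, since $P_v$ and $P_w$ leave $P_u$ into distinct fresh directions from some endpoint of their shared segment, so $P_v\cup P_w$ has a degree-three vertex at that endpoint and the two paths split. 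Combined with the fact that every new grid edge added by the recursion is fresh, this will produce a valid $\boneenpg$ representation of $T$ in which $\cup\pp$ is a tree.
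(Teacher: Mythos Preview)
Your plan for part (i) is sound and close to the paper's approach: small cycles are handled directly, and larger cycles via arcs of a rectangle with each path picking up one extra edge of the next arc. The paper gives $C_4$ its own figure while your uniform treatment of $n\ge 4$ works just as well.

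Part (ii), however, has a real gap. The requirement that $P_v$ have exactly $d_T(v)$ edges is incompatible with the requirement that its overlap with $P_u$ run from $e_{uv}$ all the way to an endpoint of $P_u$. Already for the star $K_{1,3}$ rooted at its center $u$: the path $P_u$ has three edges and three leaf children, so some reserved edge $e_{uv_2}$ is the middle edge of $P_u$; the overlap to either endpoint then spans two edges, yet $P_{v_2}$ is supposed to be a single edge. Worse, your splitting argument for siblings depends on \emph{both} siblings leaving $P_u$ into distinct fresh directions; a leaf whose path consists only of overlap edges has no fresh edge with which to split. In the $K_{1,3}$ picture, if $P_{v_1}$ is the single end-edge $e_1$ of $P_u$ (no fresh part) and $P_{v_2}$ contains $e_1$ in its overlap, then $P_{v_1}\subset P_{v_2}$ and $P_{v_1}\cup P_{v_2}$ is a path, so $P_{v_1}\sim P_{v_2}$ even though $v_1,v_2$ are non-adjacent leaves.

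Patching this by giving every path at least one fresh edge breaks the edge count you stated and immediately hits a second obstacle: all fresh edges must emanate from the two endpoints of $P_u$, and a grid vertex offers at most three outgoing directions not already used by $P_u$. Hence at most six children can be handled, and any star $K_{1,k}$ with $k\ge 7$ overflows. The paper's construction avoids both issues by a different recursion: it lays the subtrees $T_1,\dots,T_k$ out in a horizontal row with disjoint bounding boxes, extends each child-root path $P_{r_i}$ (for $i\ge 2$) leftward along a common baseline so that they all reach $T_1$'s box, and then places $P_r$ so that it meets every $P_{r_i}$ while splitting from everything else (because every other path already spends its unique bend inside its own box). This accommodates arbitrarily many children without crowding a single grid vertex.
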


\begin{proof}
\begin{enumerate}[i)]

\item For $k=3$ three identical paths consisting of one edge constitutes a $\boneenpg$ representation of $C_3$. For $k=4$ Figure~\ref{fig:B1ENPG-cycle} (a) depicts a $\boneenpg$ representation of $C_4$. Finally for any $k > 4$, we can construct a $C_k$ as shown in Figure~\ref{fig:B1ENPG-cycle} (b) for the case $k=11$.

\begin{figure}
\centering
\commentfig{\includegraphics{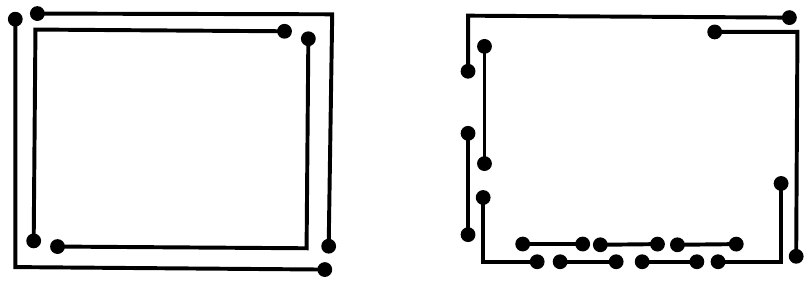}}
\caption{(a) A $\bepg{1}$ representation of $C_4$, (b) A $\bepg{1}$ representation of $C_{11}$.}
\label{fig:B1ENPG-cycle}
\end{figure}

\item Given a representation $\rep$ of a $\boneenpg$ graph $G$ and $U \subseteq V(G)$, we denote by $R_U$ the bounding rectangle of $\pp_U$. Let $T$ be a tree with a root $r$. We prove the following claim by induction on the structure of $T$ (see Figure~\ref{fig:B1ENPG-tree}). The tree $T$ has a $\boneenpg$ representation $\rep$ in which the corners of the bounding rectangle $R_T$ can be renamed as $a_T,b_T,c_T,d_T$ in counterclockwise order such that i) every path of $\pp$ has exactly one bend, ii) $b_T$ is a bend of $P_r$, iii) $a_T$ is an endpoint of $P_r$, iv) the line between $a_T$ and $d_T$ is used exclusively by $P_r$, v) $\cup \pp$ is a tree.

If $T$ is an isolated vertex, any path with one bend is a representation of $T$. Moreover, it is easy to verify that it satisfies conditions i) through v).

Otherwise let $T_1,\ldots,T_k$ be the subtrees of $T$ obtained by the removal of $r$, with roots $r_1,\ldots,r_k$ respectively. By the inductive hypothesis every such subtree $T_i$ has a representation with bounding box $a_{T_i},b_{T_i},c_{T_i},d_{T_i}$ satisfying conditions i) through iv). We now build a representation of $T$ satisfying the same conditions. We shift and rotate the representations of $T_1,\ldots,T_k$ so that the bounding rectangles do not intersect and the vertices $a_{T_1},b_{T_1},a_{T_2},b_{T_2},\ldots,a_{T_k},b_{T_k}$ are on the same horizontal line and in this order (See Figure~\ref{fig:B1ENPG-tree}). We extend the paths $P_{r_2},\ldots,P_{r_k}$ representing the roots of the trees $T_2,\ldots,T_k$ such that the endpoint $a_{T_i}$ of $P_{r_i}$ is moved to $a_{T_1}$.

Since $a_{T_i}$ is used exclusively by $P_{r_i}$ this modification does not cause $P_{r_i}$ to split from a path of $\pp_{V(T_i)}$. Therefore, the individual trees $T_1,\ldots,T_K$ are properly represented. Clearly, if two paths from different subtrees $T_i,T_j$ ($i<j$) intersect, then one of the intersecting paths must be $P_{r_j}$. The path $P_{r_j}$ intersects the bounding rectangle of $T_i$ only at the path between $a_i$ and $b_i$. As every path of $\pp_{V(T_i)}$, in particular one intersecting $P_{r_j}$ has one bend, such a path splits from $P_{r_j}$. Therefore, for any pair of vertices $(v_i,v_j) \in T_i \times T_j$ we have that $v_i$ and $v_j$ are non-adjacent in $\enpg(\pp)$, as required.

We rename the corners of the bounding rectangle $R_T$ such that $b_T = a_{T_1}$. We now add the path $P_r$ from $b_{T_1}$ to $a_T$ with a bend at $b_T$. The conditions i), ii), iii) are satisfied. We extend $P_r$ by one edge at $a_T$ to make sure that the line between $a_T$ and $d_T$ is exclusively used by $P_r$, thus satisfying condition iv). The extension of the paths $P_{r_2},\ldots,P_{r_k}$ does not add new edges to regions bounded by $a_{T_i},b_{T_i},c_{T_i},d_{T_i}$ and they don't introduce cycles between this regions. Moreover, since the line between $a_{T_1}$ and $d_{T_1}$ is used only at $a_{T_1}$ the path $P_r$ doe s not introduce any cycles either. Therefore, $\cup \pp$ is a tree, i.e. condition v) is satisfied.

The path $P_r$ intersects only $R_{T_1}$. This intersection is the path between $b_{T_1}$ and $d_{T_1}$ bending at $a_{T_1}$. Every path that intersects $P_r$ and does not split from it must bend at $a_{T_1}$. As $a_{T_1}$ is used exclusively by $P_{r_1}$, $P_{r_1}$ is the only path that possibly satisfies $P_{r_1} \sim P_r$. We now observe that $P_{r_i} \sim P_r$ for every $i \in [k]$. Therefore $r$ is adjacent to the root of $T_j$ in $\enpg(\pp)$, as required. \begin{figure}
\centering
\commentfig{\includegraphics[width=1.0\textwidth]{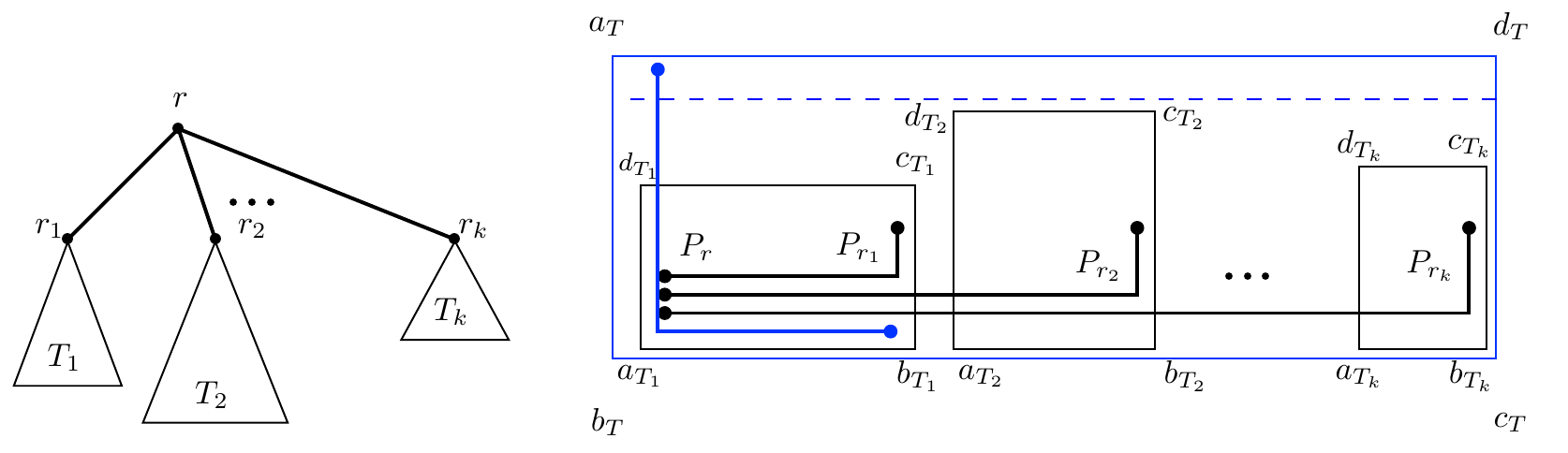}}
\caption{A construction for $\boneenpg$ representation of trees.}
\label{fig:B1ENPG-tree}
\end{figure}

\end{enumerate}
\end{proof}

\section{Split Graphs}\label{sec:split}
In this section, we present a characterization theorem (Theorem~\ref{thm:b1-enpgSplitStructure}) for $\boneenpg$ split graphs. In Sections \ref{subsec:split-maintheorem} and \ref{subsec:split-results} we proceed with some properties of these graphs implied by this theorem. An interesting implication of one of these properties is that the family of $\boneenpg$ is properly included in the family of $\btwoenpg$ graphs. Finally, using Theorem~\ref{thm:b1-enpgSplitStructure}, we prove in Section~\ref{subsec:split-npc} that the recognition problem of  $\boneenpg$ graphs is $\npc$ even in a very restricted subfamily of split graphs. Throughout this section, $G$ is a split graph $S(K,S,E)$ unless indicated otherwise. We assume without loss of generality that $K$ is maximal, i.e. that no vertex in $S$ is adjacent to all vertices of $K$, and $G$ is connected (in particular $S$ does not contain isolated vertices).

\subsection{Characterization of $\boneenpg$ Split Graphs}\label{subsec:split-maintheorem}
Consult Figure \ref{fig:B1ENPG-split-CharacterizationOnlyIf} for the following discussion.

\begin{figure}
\commentfig{\includegraphics[width=1.0\textwidth]{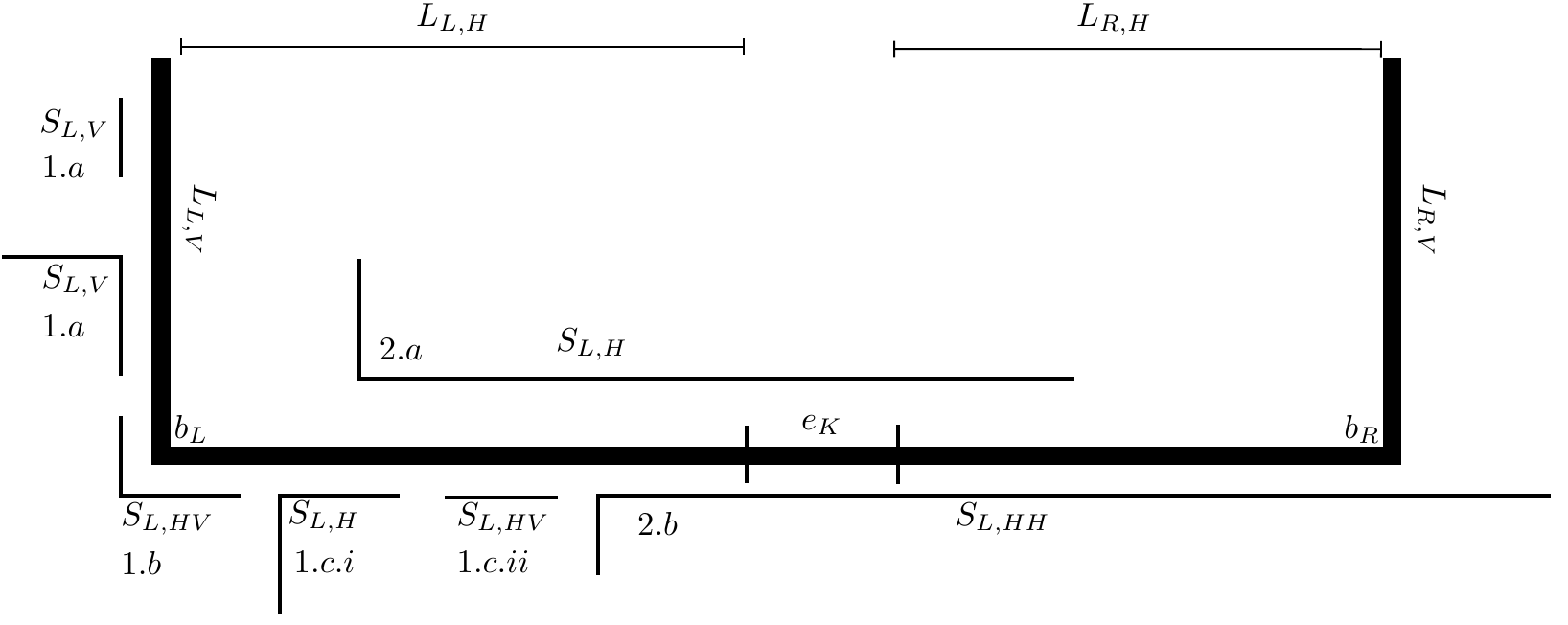}}
\caption{Regions of $\cup \pp_K$ and possible representations of a vertex in $S$ for a $\boneenpg$ split graph $S(K,S,E)$.}\label{fig:B1ENPG-split-CharacterizationOnlyIf}
\end{figure}

Let $G$ be a $\boneenpg$ split graph $S(K,S,E)$ with a representation $\rep$. By Proposition~\ref{prop:BendsInClique}, we know that $\cup \pp_K$ is a path with at most two bends, such that there is an edge $e_K$ contained in every path of $\pp_K$.
Moreover, if $\cup \pp_K$ contains two bends then $e_K$ is between the two bends.
Assume without loss of generality that $e_K$ is a horizontal edge.
Therefore, $\cup \pp_K$ consists of a horizontal segment $L_H$ between two vertices $b_L, b_R$ and two vertical segments (each of which is possibly empty). The subgrid $L_H \setminus e_K$ consists of two horizontal subsegments $L_{L,H}, L_{R,H}$. Finally, $\cup \pp_K \setminus L_H$ consists of two vertical segments $L_{L,V}$ and $L_{R,V}$. The segment $L_{L,Y}$ (resp. $L_{R,Y}$) is on the left (resp. right) of $e_K$ for every $Y \in \set{H,V}$.

For $(X,Y) \in \set{L,R} \times \set{H,V}$, let $K_{X,Y}$ be the set of vertices $v$ of $K$ such that $P_v$ has an endpoint in $L_{X,Y}$.
Every path of $\pp_K$ has its left (resp. right) endpoint on $L_{L,H} \cup L_{L,V}$ (resp. $L_{R,H} \cup L_{R,V}$) since it contains $e_K$.
Therefore, $\set{K_{X,H},K_{X,V}}$ is a partition of $K$ for every $X \in \set{L,R}$. For every $(X,Y) \in \set{L,R} \times \set{H,V}$, let $\sigma_{X,Y}$ be the permutation of $K_{X,Y}$ obtained by ordering the endpoints of  $\pp_K$  in $L_{X,Y}$ in increasing distance from $e_K$.
Moreover, $K_{L,V} \cap K_{R,V} = \emptyset$ since otherwise this implies a path containing both $b_L$ and $b_R$ as bends.

The following theorem characterizes the $\boneenpg$ split graphs. If further  provides a canonical representation for them using the above mentioned partitions and by partitioning the vertices of $S$ according their neighborhoods.

\begin{theorem}\label{thm:b1-enpgSplitStructure}
A connected split graph $G=S(K,S,E)$ is $\boneenpg$ if and only if there are two partitions $\set{K_{L,H},K_{L,V}}$, $\set{K_{R,H},K_{R,V}}$ of $K$ such that $K_{L,V} \cap K_{R,V}=\emptyset$, there is a permutation $\sigma_{X,Y}$ of $K_{X,Y}$ for every $(X,Y) \in \set{L,R} \times \set{H,V}$, and a partition $\cs = \set{S_{X,H},S_{X,V},S_{X,HH},S_{X,HV} | X \in \set{L,R}}$ of $S$ such that the following hold.
\begin{enumerate}[i)]
\item {}\label{itm:Interval} If $s \in S_{X,Y}$ then $N(s)$ is an interval $\sigma_s$ of $\sigma_{X,Y}$.
\item {}\label{itm:PrefixIntersection} If $s \in S_{X,HH}$ then $N(s)$ consists of the intersection of a prefix $\sigma_s$ of $\sigma_{X,H}$ with $K_{\bar{X},H}$ where $\bar{X} = \set{L, R} \setminus X$.
\item {}\label{itm:SuffixUnion} If $s \in S_{X,HV}$ then $N(s)$ is the union of a suffix $\sigma_s$ of $\sigma_{X,H}$ with $K_{X,V}$.
\item {}\label{itm:SuffixPrefix} If $s \in S_{X,H} \cup S_{X,HH}$ then there is at most one $s' \in S_{X,HV}$ such that the interval $\sigma_s$ of $\sigma_{X,H}$ and the suffix $\sigma_{s'}$ of $\sigma_{X,H}$ overlap.
\item {}\label{itm:UniqSuffixUnion} If $S_{{X},HH} \neq \emptyset$ then $\abs{S_{\bar{X,}HV}} \leq 1$ where $\bar{X} = \set{L, R} \setminus X$.
\item {}\label{itm:VerticalSmallerThanOtherHorizontal} $K_{X,V} \subseteq K_{\bar{X},H}$  where $\bar{X} = \set{L, R} \setminus X$.
\end{enumerate}
\end{theorem}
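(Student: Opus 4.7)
The plan is a two-direction case analysis. For the necessity (``only if'') direction, I would begin with a $\boneenpg$ representation $\rep$ of $G = S(K,S,E)$ and apply Proposition~\ref{prop:BendsInClique} to the clique $K$, obtaining that $\cup \pp_K$ is the (at-most) two-bend bracket-shaped path described just above the theorem; after rotating we may assume its spine edge $e_K$ is horizontal. This justifies the entire setup, including the partitions $\set{K_{X,H}, K_{X,V}}$, the orderings $\sigma_{X,Y}$, and the disjointness $K_{L,V} \cap K_{R,V} = \emptyset$. Item (vi) follows immediately: any $v \in K_{X,V}$ has already spent its single bend at the corner $b_X$, so its other endpoint must lie on $L_{\bar{X},H}$.

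The main task is then to classify each $s \in S$ by how $P_s$ intersects $\cup \pp_K$. Since $P_s \cap (\cup \pp_K)$ is a subpath of $P_s$ having at most one bend, three possibilities arise: (a) $P_s$ has edges on a single $L_{X,Y}$ only, placing $s$ in $S_{X,Y}$; (b) $P_s$ crosses $e_K$ horizontally with its bend in the interior of $L_{X,H}$, placing $s$ in $S_{X,HH}$; (c) $P_s$ touches both $L_{X,H}$ and $L_{X,V}$ through the corner $b_X$, placing $s$ in $S_{X,HV}$. In each case the adjacencies follow from the standard non-splitting test: $v \in N(s)$ iff $P_s$ and $P_v$ share an edge and the union has no degree-three vertex. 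The observation used repeatedly is that whenever $P_s$ bends at a vertex $w$ which appears in the interior of some straight segment of $P_v$, the union has degree three at $w$ and splits; hence only those $v \in K$ whose relevant endpoint has already passed the bend of $P_s$ remain in $N(s)$. This turns $N(s)$ into an interval of $\sigma_{X,Y}$ in case (a), a prefix of $\sigma_{X,H}$ intersected with $K_{\bar{X},H}$ in case (b) (the intersection by $K_{\bar{X},H}$ coming from the split that $P_s$ forces on any $v$ bending at $b_{\bar{X}}$), and a suffix of $\sigma_{X,H}$ united with $K_{X,V}$ in case (c). Items (iv) and (v) then follow from the remark that any two paths $P_{s'}, P_{s''}$ with $s' \in S_{X,H} \cup S_{X,HH}$ and $s'' \in S_{X,HV}$ which simultaneously require the same column immediately past $b_X$ (resp.\ $b_{\bar{X}}$) produce a non-splitting $P_{s'} \cup P_{s''}$ unless only one of them is present; a short geometric argument bounds the number of such competitors.

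For sufficiency I would build an explicit representation. Place the bracket $\cup \pp_K$ with $L_H$ along one row and $L_{L,V}, L_{R,V}$ along the two incident columns, allocating enough length in each $L_{X,Y}$ to accommodate $\sigma_{X,Y}$; draw each $P_v$ through $e_K$ so that its endpoints realize those orderings. For each $s \in S$, reserve a row or column disjoint from $\cup \pp_K$ as a private track carrying the off-grid extension of $P_s$, and draw $P_s$ according to its category: for $s \in S_{X,Y}$ along the sub-interval of $L_{X,Y}$ determined by $\sigma_s$, then turning onto its private track; for $s \in S_{X,HH}$ leaving $L_{X,H}$ at the boundary of the prefix $\sigma_s$, traversing the private track, returning to cross $e_K$, and continuing past $b_{\bar{X}}$; for $s \in S_{X,HV}$ through the corner $b_X$ with horizontal and vertical extents matching $\sigma_s$ and reaching beyond every member of $K_{X,V}$. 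Verifying $P_s \sim P_v \iff v \in N(s)$ is then the mirror image of the necessity analysis. The main obstacle is coordinating the off-grid portions so that no spurious non-splitting adjacencies arise, and conditions (iv) and (v) are precisely what makes this possible: (iv) bounds, for each $s \in S_{X,H} \cup S_{X,HH}$, the number of $S_{X,HV}$-paths with which $P_s$ would clash on the private track near $b_X$; and (v) bounds the number of $S_{\bar{X},HV}$-paths compatible with the full-column traversal forced by a non-empty $S_{X,HH}$.
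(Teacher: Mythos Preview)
Your overall strategy matches the paper's, but several steps in the execution are incorrect.

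\textbf{Necessity, case analysis.} Your trichotomy (a)/(b)/(c) does not coincide with the correct classification. In case (a), a path $P_s$ whose intersection with $\cup\pp_K$ lies in a single $L_{X,H}$ but which does \emph{not} bend off $L_{X,H}$ at its far end has $N(s)=(\text{suffix of }\sigma_{X,H})\cup K_{X,V}$, not an interval of $\sigma_{X,H}$; such $s$ belongs to $S_{X,HV}$, not $S_{X,H}$. In case (b), a path $P_s$ crossing $e_K$ with its bend inside $L_{X,H}$ may or may not reach $b_{\bar X}$. If it does not, its neighbourhood is a prefix of $\sigma_{X,H}$ (an interval), so $s\in S_{X,H}$; only when $P_s$ continues horizontally past $b_{\bar X}$ (and $\cup\pp_K$ bends there) does the extra split at $b_{\bar X}$ force the intersection with $K_{\bar X,H}$ and put $s\in S_{X,HH}$. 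The paper's proof separates exactly these subcases; without them your assignment of $S$-vertices to the parts of $\cs$ is wrong.

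\textbf{Items (iv) and (v).} The contradiction you describe is between a vertex of $S_{X,H}\cup S_{X,HH}$ and one of $S_{X,HV}$. That is not where the obstruction lies: two such paths \emph{do} split (one bends off $L_{X,H}$, the other does not). The actual argument is that if $\sigma_s$ overlaps two distinct suffixes $\sigma_{s'},\sigma_{s''}$ with $s',s''\in S_{X,HV}$, then $P_{s'}$ and $P_{s''}$ are forced to share an edge of $L_{X,H}$ and, since neither bends off $L_L$, $P_{s'}\sim P_{s''}$, contradicting the independence of $S$. The same mechanism (two $S_{\bar X,HV}$-paths forced through $b_{\bar X}$) yields (v).

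\textbf{Sufficiency, the $S_{X,HH}$ paths.} Your description ``leaving $L_{X,H}$ \ldots\ traversing the private track, returning to cross $e_K$, and continuing past $b_{\bar X}$'' is a path with at least two bends, so it is not admissible. In the correct construction an $S_{X,HH}$-path is a single-bend path: one horizontal arm runs along $L_H$ from a point just beyond $b_{\bar X}$ through $e_K$ to the point of $L_{X,H}$ determined by the prefix $\sigma_s$, and there it makes its unique bend onto a private vertical track.
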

\begin{proof}
$(\Rightarrow)$
We fix a $\boneenpg$ representation of $G$ and consider the sets $K_{X,Y}$ and their permutations $\sigma_{X,Y}$ defined by this representation.\\
\ref{itm:Interval}, \ref{itm:PrefixIntersection} \ref{itm:SuffixUnion}) For each vertex $s \in S$ we will determine its membership to one of the sets of the partition $\cs$ depending on its representation $P_s$.
Suppose that there exists a vertex $s \in S$ such that $\abs{\cs(P_s,\cup \pp_K)} > 1$.
Then $P_s \cup \cup \pp_K$ contains a cycle, therefore at least $4$ bends.
But $P_s$ has at most one bend and $\cup \pp_K$ has at most two bends, a contradiction.
Therefore, $\cs(P_s,\cup \pp_K)$ consists of one segment $Q_s$.

Given a vertex $s \in S$, we consider two disjoint and complementary cases for $P_s$.
\begin{enumerate}
\item{$e_K \notin Q_s$:}
Let $c_s$ (resp. $f_s$) be the vertex of $Q_s$ closer to (resp. farther from) $e_K$.
Assume without loss of generality that $Q_s \subseteq L_L$ where $L_L = L_{L,H} \cup L_{L,V}$ (we define similarly $L_R = L_{R,H} \cup L_{R,V}$).
We observe that $P_s$ does not split from $L_L$ at $c_s$, since otherwise $P_s$ splits from every path of $\pp_K$ that it intersects, implying that $s$ is an isolated vertex. We further consider three subcases:
\begin{enumerate}
\item{$Q_s \subseteq L_{L,V}$:} If $P_s$ splits from $L_L$ then $N(s)$ consists of the vertices $v$ of $K$ such that the left endpoint of $P_v$ is between $c_s$ and $f_s$.
    Therefore, $N(s)$ is an interval of $\sigma_{L,V}$.
    If $P_s$ does not split from $L_L$ then $N(s)$ consists of the vertices $v$ of $K$ such that the left endpoint of $P_v$ is farther than $c_s$ on $L_{L,V}$ (with respect to $b_L$).
    Therefore, $N(s)$ is an interval of $\sigma_{L,V}$.
    In both cases we set $s \in S_{L,V}$.
\item{$b_l$ is an internal vertex of $Q_s$:} In this case $P_s$ does not split from $L_L$. Then $N(s)$ consists of the vertices $v$ of $K$ such that the left endpoint of $P_v$ is on the left of $c_s$ on $L_L$. Therefore, $N(s)$ consists of the union of a suffix of $\sigma_{L,H}$ with $K_{L,V}$, and we set $s \in S_{L,HV}$.
\item{$Q_s \subseteq L_{L,H}$:}
\begin{enumerate}
\item{$P_s$ splits from $L_L$.} In this case $N(s)$ consists of the vertices $v$ of $K$ such that the left endpoint of $P_v$ is between $c_s$ and $f_s$.
    Therefore, $N(s)$ is an interval of $\sigma_{L,H}$ and we set $s \in S_{L,H}$.
\item{$P_s$ does not split from $L_L$.} In this case $N(s)$ consists of the vertices $v$ of $K$ such that the left endpoint of $P_v$ is on the left of $c_s$ on $L_L$.
    Therefore, $N(s)$ consists of the union of a suffix of $\sigma_{L,H}$ with $K_{L,V}$ and we set $s \in S_{L,HV}$.
\end{enumerate}
\end{enumerate}
\item{$e_K \in Q_s$:} In this case, let $l_s, r_s$ be the endpoints of $Q_s$ on $L_L$ and $L_R$ respectively. The path $P_s$ splits from $\cup \pp_K$, since otherwise $s$ is adjacent to every vertex of the clique, contradicting the maximality of $K$. Since $P_s$ intersects every path of $\pp_K$, $N(s)$ consists of the vertices $v$ of $K$ such that $P_v$ does not split from $P_s$. We consider two subcases:
\begin{enumerate}
\item{$P_s$ splits from exactly one of $L_L$ and $L_R$:} Assume without loss of generality that $P_s$ splits from $L_L$ but not from $L_R$. We observe that $l_s \in L_{L,H}$. In this case $N(s)$ is the set of vertices $v$ of $K$ such that the left endpoint of $P_v$ is closer than $l_s$ to $e_K$ which corresponds to a prefix of $\sigma_{L,H}$. In this case we set $s \in S_{L,H}$.
\item{$P_s$ splits from both of $L_L, L_R$:} In this case at least one of the endpoints of $Q_s$ is a bend of $\cup \pp_K$, i.e., $\set{l_s, r_s} \cap \set{b_L, b_R} \neq \emptyset$.
    Assume without loss of generality that $r_s=b_R$.
    Then $N(s)$ consists of those vertices $v$ of $K$ such that the left endpoint of $P_v$ is closer to $e_K$ than $l_s$ and the right endpoint of $P_v$ is in $L_{R,H}$.
    This is exactly a prefix of $\sigma_{L,H}$ intersected with $K_{R,H}$, thus we set $s \in S_{L,HH}$.
\end{enumerate}
\end{enumerate}

\itemref{SuffixPrefix} Assume, for a contradiction that for some $X \in \set{L,R}$, say $L$, the condition does not hold, i.e. there is a vertex $s \in S_{L,H} \cup S_{L,HH}$ and two vertices $s',s'' \in S_{L,HV}$ such that the interval $\sigma_s$ of $\sigma_{L,H}$ corresponding to $s$ overlaps both of the suffixes $\sigma_{s'}, \sigma_{s''}$ corresponding to $s', s''$ respectively. By the above case analysis we know that $P_s$ is either of type 2 or of type 1.c.i, and that $P_{s'}$ and $P_{s''}$ are of one of the types 1.b, 1.c.ii. Note that $\sigma_{s'}$ and $\sigma_{s'}$ are determined by the right endpoints of the corresponding paths $P_{s'}$ and $P_{s''}$. Since $\sigma_{s'}$ and $\sigma_s$ overlap, $Q_s$ contains the right endpoint of $P_{s'}$. Therefore these two paths intersect.  Moreover $P_{s'}$ contains the left endpoint of $Q_s$ (which is the bend point of $P_s$) otherwise $P_s \sim P_{s'}$. By the same arguments $P_{s''}$ also contains the left endpoint of $Q_s$ and therefore $P_{s'}$ intersects $P_{s''}$. Moreover, since none of them splits from $L_L$ we have $P_{s'} \sim P_{s''}$, i.e. $s$ and $s'$ are adjacent in $G$, a contradiction.

\itemref{UniqSuffixUnion} Assume, for a contradiction that for some $X \in \set{L,R}$, say $L$, the condition does not hold, i.e. there exists $s \in S_{R,HH}$ and $s',s'' \in S_{L,HV}$.
    Then both $P_{s'}$ and $P_{s''}$ are of one of the types 1.b, 1.c.ii. Moreover, $P_s$ is of type 2.b with $l_s = b_L$. We proceed as in the previous case to get a contradiction.

\itemref{VerticalSmallerThanOtherHorizontal} This immediately follow from the already observed fact that $K_{L,V} \cap K_{R,V}=\emptyset$.

$(\Leftarrow)$
Suppose that the partitions and the permutations stated in the claim exist.
We construct a representation $\rep$ as follows (see Figure \ref{fig:B1ENPG-split-CharacterizationIf}). The host graph $H$ is a $(2+4 \abs{S} (\abs{K}+1))$ by $2 \abs{S} \cdot \abs{K}$ vertices grid, where each vertex is represented by an ordered pair from $[-1-2 \abs{S} (\abs{K}+1),1+2 \abs{S} (\abs{K}+1)] \times [0,2 \abs{S} \abs{K}]$.
For $(X,Y) \in \set{L,R} \times \set{H,V}$, let $k_{X,Y} = \abs{K_{X,Y}}$. The coordinates of $b_L$ and $b_R$ are respectively $(-1 - 2\abs{S} (k_{L,H}+1), 0)$ and $(1 + 2\abs{S} (k_{L,H}+1),0)$. The horizontal line between $b_L$ and $b_R$ is called $L_H$. For $(X,Y) \in \set{L,R}$, $L_{X,V}$ is a vertical line of length 2$\abs{S} (k_{X,V})$ starting at $b_X$. We choose $k_{X,Y}$ vertices on each line $L_{X,Y}$ such that their distances from each other and from each of $b_L, b_R, (-1,0), (1,0)$ is at least $2\abs{S}$.
We label these vertices as $w_{X,\sigma_{X,Y}(1)}, \ldots, w_{X,\sigma_{X,Y}(k_{X,Y})}$ in increasing order of their distances from the origin. Every vertex $v \in K$ is represented by a path $P_v \subseteq \cup L_H \cup L_{L,V} \cup L_{R,V}$ between $w_{L,v}$ and $w_{R,v}$. Since $K_{X,V} \subseteq K_{\bar{X},H}$ every such path has at most one bend. Since $e_K=(-1,0)(1,0)$ is contained in every such path, these paths constitute a proper representation of the clique $K$.

We proceed with the representation of the vertices of $S$. Let $W_X = \set{w_{X,1}, \ldots w_{X,\abs{K}}}$. The endpoints of paths $Q_s, s \in S$ will be chosen between two vertices of $W \cup \set{b_X}$ so that they are all distinct. We first determine the representations of the vertices of $S \setminus \set{S_{L,HV} \cup S_{R,HV}}$ such that all are one bend paths with distinct bends on the endpoint of $Q_s$ farther from the origin. We determine the endpoints of $Q_s$ according to the permutation $\sigma_s$ and as close to the origin as possible. Since all these paths have distinct bends, they represent an independent set. Last, we represent the vertices $s' \in \set{S_{L,HV} \cup S_{R,HV}}$. For each such vertex its representation will be a path $P_{s'} \subseteq \cup \pp_K$. We choose the endpoint closer to the origin according to the suffix $\sigma_{s'}$. Let $O_{s'}$ be the set of vertices such that for all vertices $s \in O_{s'}$, $\sigma_s$ and $\sigma_{s'}$ overlap. The other endpoint is chosen as the endpoint closest to the origin that is farther from the origin than all the endpoints of the paths $P_s$ where $s \in O_{s'}$. Conditions \ref{itm:SuffixPrefix} and \ref{itm:UniqSuffixUnion} guarantee that after this is done, every path $P_{s'}$ that intersects with $P_s$ splits from it.
\begin{figure}
\commentfig{\includegraphics[width=1.0\textwidth]{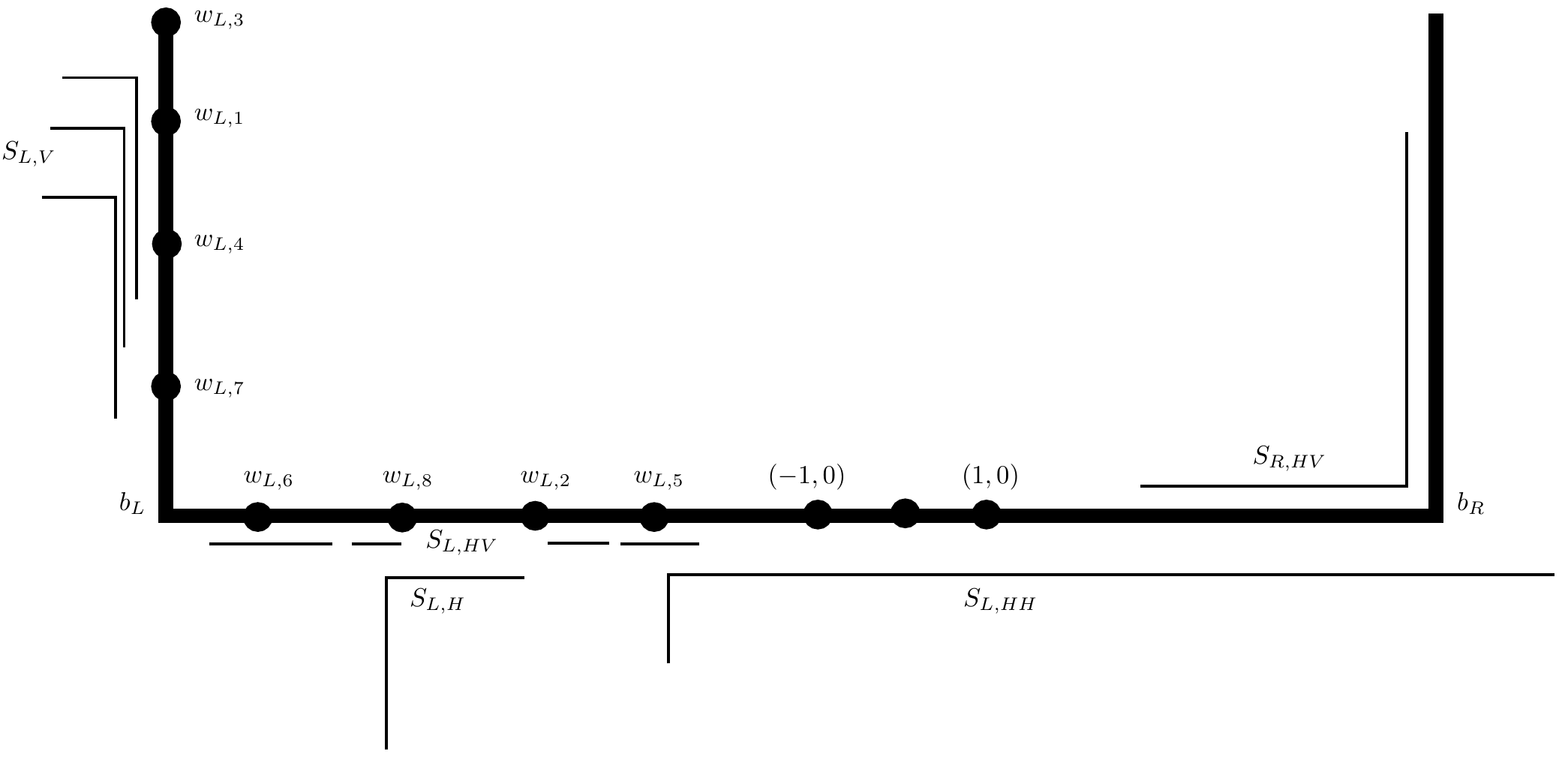}}
\caption{Construction of a representation $\rep$ for a split graph satisfying conditions i)--vi) of Theorem \ref{thm:b1-enpgSplitStructure}.}\label{fig:B1ENPG-split-CharacterizationIf}
\end{figure}
\end{proof}

From the proof of Theorem \ref{thm:b1-enpgSplitStructure} we obtain the following corollary where a \emph{caterpillar} is a tree in which all vertices are within distance 1 of a central path.
\begin{corollary}\label{coro:B1ENPGSplitSubsetB1ENPTSplit}
~\\
i) Every connected $\boneenpg$ split graph $G=S(K,S,E)$ has a representation $\rep$ such that $\cup \pp$ is a caterpillar with central path $\cup \pp_K$ and maximum degree 3.\\
ii) $\boneenpg \cap \textsc{SPLIT} \subseteq \enpt \cap \textsc{SPLIT}.$
\end{corollary}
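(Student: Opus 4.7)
The plan is to read both statements directly off the sufficiency construction in the proof of Theorem~\ref{thm:b1-enpgSplitStructure}, with one small adjustment. Recall that in that construction $\cup \pp_K$ is a single grid path consisting of the horizontal segment $L_H$ together with two (possibly empty) vertical segments $L_{L,V}, L_{R,V}$, so it has at most two bends at $b_L, b_R$. Every $P_v$ with $v \in K$ is a sub-path of $\cup \pp_K$, and by the construction every $P_{s'}$ with $s' \in S_{L,HV} \cup S_{R,HV}$ also lies inside $\cup \pp_K$. Hence the only paths that can contribute edges outside $\cup \pp_K$ are the $P_s$ with $s \in S \setminus (S_{L,HV} \cup S_{R,HV})$.

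For each such $P_s$, the construction places its unique bend at a vertex of $\cup \pp_K$, namely the endpoint of $Q_s$ farther from the origin, and the portion of $P_s$ after the bend leaves $\cup \pp_K$ perpendicularly. I would shorten that portion to a single grid edge. Since it lies entirely off $\cup \pp_K$ and the bends of different $P_s$ are pairwise distinct, this shortening changes neither the edge-intersection pattern nor the splitting behaviour of any two paths. After the shortening, $\cup \pp$ equals the spine $\cup \pp_K$ together with one pendant edge per vertex of $S \setminus (S_{L,HV} \cup S_{R,HV})$, each attached at a distinct vertex of the spine. This is precisely a caterpillar whose central path is $\cup \pp_K$. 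For the degree bound, a non-endpoint vertex of the spine contributes $2$ to its degree in $\cup \pp$ plus at most $1$ from a pendant, giving at most $3$, while the two endpoints of the spine contribute $1$ plus at most $1$, giving at most $2$. This establishes part~(i).

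For part~(ii), I would take $H' \defined \cup \pp$ and observe that by~(i) this is a caterpillar, hence a tree, and that every path of $\pp$ is already a path in $H'$. The key point is that splitting is intrinsic to a pair $(P,P')$: it depends only on the degrees in $P \cup P'$, not on the ambient host graph. Therefore every edge-intersection and non-splitting relation carries over verbatim from the grid to $H'$, so $\langle H', \pp \rangle$ is an $\enpt$ representation of $G$ and $G \in \enpt$. The only step requiring careful verification is that the shortening in the second paragraph does not create a new split or destroy a non-splitting intersection, which follows directly from the distinct-bends property; once that check is done the corollary is immediate from the construction.
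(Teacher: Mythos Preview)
Your approach matches the paper's, which derives the corollary directly from the construction in the proof of Theorem~\ref{thm:b1-enpgSplitStructure} without further detail, and your argument is considerably more explicit. There is, however, one gap: for $s \in S_{X,HH}$ the path $P_s$ does not leave $\cup\pp_K$ only at its bend. In order that $N(s)$ equal a prefix of $\sigma_{X,H}$ \emph{intersected with} $K_{\bar X,H}$, the path $P_s$ must cross $e_K$ and continue horizontally past the bend point $b_{\bar X}$ of $\cup\pp_K$, so that it splits there from every $P_v$ with $v\in K_{\bar X,V}$ (compare case~2(b) in the forward direction of the theorem). Thus $P_s$ has a \emph{second} off-spine portion, and your sentence ``$\cup\pp$ equals the spine $\cup\pp_K$ together with one pendant edge per vertex of $S\setminus(S_{L,HV}\cup S_{R,HV})$'' is not quite correct.

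The repair is immediate: shorten this second leg to a single edge as well. All $s\in S_{X,HH}$ then share the one horizontal edge just past $b_{\bar X}$; they still pairwise split at their distinct bend points on $L_{X,H}$, so the representation is unchanged. This adds at most one extra pendant at each of $b_L$ and $b_R$, and since in the construction no $P_s$ has its bend at $b_L$ or $b_R$ (the bend points are chosen strictly between the labeled vertices $w_{X,\cdot}$), every spine vertex still carries at most one pendant. The caterpillar structure and the degree bound $3$ therefore survive, and part~(ii) then follows exactly as you argued.
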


\subsection{Two Consequences of The Characterization of $\boneenpg$ Split Graphs}\label{subsec:split-results}
Throughout this section, we use the notation introduced in the previous section.
$X$ (resp. $Y$) denotes an element of $\set{L,R}$ (resp. $\set{H,V}$), and $\bar{X}=\set{L,R} \setminus \set{X}$.
Given a $\boneenpg$ split graph $G=S(K,S,E)$, we denote by $K_{X,Y}, S_{X,Y}, S_{X,HV}, S_{X,HH}$ and $\sigma_{X,Y}$ the sets and permutations whose existence are guaranteed by Theorem \ref{thm:b1-enpgSplitStructure}.
Furthermore $\sigma_X = \sigma_{X,H} \cdot \sigma_{X,V}$ is the permutation of $K$ obtained by the concatenation of the two permutations $\sigma_{X,H}$ and $\sigma_{X,V}$. We define $S_{X,Y,d}$ as the set of vertices of $S_{X,Y}$ having degree $d$ in $G$. The notations $S_{X,HV,d}$, and $S_{X,HH,d}$ are defined similarly.

The following inequalities are easy to show using the definitions of the sets and counting the number of prefixes, suffixes, or intervals of a given permutation having a given length.
\begin{proposition}\label{prop:SplitGraphSetSizes}
If $G=S(K,S,E)$ is a twin-free and (false twin)-free $\boneenpg$ split graph, then
\begin{eqnarray}
\abs{S_{X,Y,d}} & \leq & \max \set{\abs{K_{X,Y}}+1-d,0} \label{eqn:SXYd}\\
\abs{S_{X,HV,d}} & \leq & \left\{ \begin{array}{ll}
        1 & \textrm{if~} d > \abs{K_{X,V}}\\
        0 & \textrm{otherwise}
    \end{array}
    \right.\label{eqn:SXHVd} \\
\abs{S_{X,V,d} \cup S_{X,HV,d}} & \leq & \left\{ \begin{array}{ll}
        1 & \textrm{if~} d > \abs{K_{X,V}}\\
        \abs{K_{X,V}}+1-d & \textrm{otherwise}
    \end{array}
    \right.\label{eqn:SXHVVd} \\
\abs{S_{X,HH,d}} & \leq & \left\{ \begin{array}{ll}
        1 & \textrm{if~} d \leq \abs{K_{L,H} \cap K_{R,H}}\\
        0 & \textrm{otherwise}
    \end{array}
    \right. \leq 1\label{eqn:SXHHd} \\
\abs{S_d} & \leq & 2 \left( \abs{K} + 2 - d \right). \label{eqn:Sd}
\end{eqnarray}
\end{proposition}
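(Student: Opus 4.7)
The plan is to use Theorem~\ref{thm:b1-enpgSplitStructure}, which, for every $s\in S$, pins down $N(s)$ as either an interval of $\sigma_{X,Y}$, a suffix of $\sigma_{X,H}$ unioned with $K_{X,V}$, or a prefix of $\sigma_{X,H}$ intersected with $K_{\bar X,H}$, according to which part of $\cs$ contains $s$. Since $S$ is a stable set and $G$ is (false twin)-free, any two distinct vertices of $S$ have distinct open neighborhoods. Hence each of \eqref{eqn:SXYd}--\eqref{eqn:SXHHd} reduces to an injective counting argument: count the distinct possible neighborhoods of the prescribed shape and size $d$.

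For \eqref{eqn:SXYd}, an interval of length $d$ in the permutation $\sigma_{X,Y}$ of $K_{X,Y}$ is fixed by its starting position, giving at most $\max\set{\abs{K_{X,Y}}+1-d,0}$ choices. For \eqref{eqn:SXHVd}, writing $N(s)=\sigma_s\cup K_{X,V}$ with $\sigma_s$ a suffix of $\sigma_{X,H}$ disjoint from $K_{X,V}$ forces $\abs{\sigma_s}=d-\abs{K_{X,V}}$; when $d<\abs{K_{X,V}}$ no such suffix exists, and when $d=\abs{K_{X,V}}$ the empty-suffix case yields $N(s)=K_{X,V}$, which is also the full interval of $\sigma_{X,V}$, so the vertex can be placed in $S_{X,V}$ instead; this leaves only the case $d>\abs{K_{X,V}}$, in which the suffix is uniquely determined by its length, giving at most one such vertex. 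Inequality \eqref{eqn:SXHVVd} follows by combining \eqref{eqn:SXYd} applied with $Y=V$ and \eqref{eqn:SXHVd}, noting that $S_{X,V,d}=\emptyset$ when $d>\abs{K_{X,V}}$. For \eqref{eqn:SXHHd}, the neighborhood of $s\in S_{X,HH,d}$ is uniquely determined by $d$, consisting of the first $d$ elements of $K_{\bar X,H}$ in $\sigma_{X,H}$-order, and exists only if $d\leq\abs{K_{L,H}\cap K_{R,H}}$.

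Inequality \eqref{eqn:Sd} is the main obstacle, since simply summing the first four bounds is too loose by a constant factor. The idea is to adopt a side-by-side global viewpoint: for each $X\in\set{L,R}$, form the concatenated permutation $\sigma_X=\sigma_{X,H}\cdot\sigma_{X,V}$ of $K$ and observe that for every $s\in S_{X,H,d}\cup S_{X,V,d}\cup S_{X,HV,d}$, the set $N(s)$ is an interval of $\sigma_X$ of length exactly $d$ — entirely inside the first block, entirely inside the second block, or the length-$d$ suffix of $\sigma_X$ straddling the block boundary, respectively. There are only $\abs{K}+1-d$ intervals of length $d$ in $\sigma_X$, so false-twin-freeness gives $\abs{S_{X,H,d}\cup S_{X,V,d}\cup S_{X,HV,d}}\leq\abs{K}+1-d$. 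Adding the bound $\abs{S_{X,HH,d}}\leq 1$ per side and summing over the two values of $X$ then yields $\abs{S_d}\leq 2(\abs{K}+2-d)$.

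The delicate part of this last step, and where I expect the care to be needed, is the disjointness of the three interval families contributing to each side. This rests on the convention established while proving \eqref{eqn:SXHVd} — that vertices in $S_{X,HV}$ correspond to non-empty suffixes of $\sigma_{X,H}$ — which forces the corresponding interval of $\sigma_X$ to straddle the block boundary and hence to be distinct from those arising from $S_{X,H,d}$ (which lie entirely inside the first block) and from $S_{X,V,d}$ (which lie entirely inside the second block).
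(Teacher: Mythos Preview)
Your argument for \eqref{eqn:SXYd}--\eqref{eqn:SXHHd} is the same injective counting as in the paper: each vertex of $S$ is identified with its neighborhood, and (false twin)-freeness makes this identification injective.

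For \eqref{eqn:Sd} there is a small but genuine difference worth noting. The paper simply states that \eqref{eqn:Sd} follows by summing the earlier inequalities and multiplying by two; taken literally this involves adding \eqref{eqn:SXYd} with $Y=H$, \eqref{eqn:SXHVVd}, and \eqref{eqn:SXHHd}, and one has to be a bit careful with the boundary case $\abs{K_{X,V}}=0$. Your route is cleaner: you pass to the concatenated permutation $\sigma_X=\sigma_{X,H}\cdot\sigma_{X,V}$ and observe that each $s\in S_{X,H,d}\cup S_{X,V,d}\cup S_{X,HV,d}$ has $N(s)$ equal to a length-$d$ interval of $\sigma_X$, yielding $\abs{S_{X,H,d}\cup S_{X,V,d}\cup S_{X,HV,d}}\le\abs{K}+1-d$ directly by injectivity, and then add the contribution of $S_{X,HH,d}$. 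This bypasses any case distinction on $\abs{K_{X,V}}$.

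Your final paragraph is unnecessary caution. You do not need the three interval families to be pairwise disjoint: since $\cs$ is a partition and $G$ is (false twin)-free, the map $s\mapsto N(s)$ is already injective on the \emph{union} $S_{X,H,d}\cup S_{X,V,d}\cup S_{X,HV,d}$, regardless of which block produced a given interval. So the bound $\abs{K}+1-d$ follows immediately once you have shown that every such $N(s)$ is a length-$d$ interval of $\sigma_X$, without any appeal to the non-emptiness convention for the suffix $\sigma_s$.
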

\begin{proof}
(\ref{eqn:SXYd}) If $s \in S_{X,Y,d}$ then $N(s)$ is an interval $\sigma_s$ of $\sigma_{X,Y}$ of size $d$. Since $G$ is (false twin)-free $N(s) \neq N(s')$ whenever $s \neq s'$. Therefore, $\abs{S_{X,Y,d}}$ is at most the number of such intervals which is given by the right hand size of the inequality.

(\ref{eqn:SXHVd}) If $s \in S_{X,HV,d}$ then $N(s) = K_{X,V} \cup \sigma_s$ where $\sigma_s$ is a suffix of $\sigma_{X,H}$. Therefore, $d > \abs{K_{X,V}}$ and $\sigma_s$ is the unique suffix of $\sigma_{X,H}$ of size $d-\abs{K_{X,V}}$.

(\ref{eqn:SXHVVd}) Follows from (\ref{eqn:SXYd}) and (\ref{eqn:SXHVd}).

(\ref{eqn:SXHHd}) If $s \in S_{X,HH,d}$ then $N(s) = K_{\bar{X},H} \cap \sigma_s$ where $\sigma_s$ is a prefix of $\sigma_{X,H}$ of size $d$. If $d > \abs{K_{L,H} \cap K_{R,H}}$ no such prefix exists, otherwise there is exactly one such prefix.

(\ref{eqn:Sd}) By summing up (\ref{eqn:SXHVd}), (\ref{eqn:SXHVVd}), and also (\ref{eqn:SXYd}) for $Y=H$ and finally multiplying by two for the two possible values of $X$.
\end{proof}

Summing up (\ref{eqn:Sd}) for all the possible values of $d \in [\abs{K}]$ we get the following corollary.
\begin{corollary}
If $G=S(K,S,E)$ is a (false twin)-free $\boneenpg$ split graph, then
$\abs{S}$ is $\bigoh (\abs{K}^2)$.
\end{corollary}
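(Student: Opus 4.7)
The plan is to partition $S$ by degree and then sum the degree-wise bound (\ref{eqn:Sd}) from Proposition~\ref{prop:SplitGraphSetSizes}. Since $G$ is connected and $S$ is a stable set, every $s \in S$ has at least one neighbor in $K$, so $d_G(s) \geq 1$; and since $N(s) \subseteq K$, we also have $d_G(s) \leq \abs{K}$. Hence $S$ decomposes as the disjoint union $S = \bigcup_{d=1}^{\abs{K}} S_d$, where $S_d$ denotes the set of vertices of $S$ of degree exactly $d$ in $G$.

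Next, I would sum $\abs{S_d} \leq 2(\abs{K} + 2 - d)$ over $d = 1, \ldots, \abs{K}$. Reindexing via $j = \abs{K} + 2 - d$ turns the right-hand side into twice the arithmetic series $\sum_{j=2}^{\abs{K}+1} j$, which is a standard quadratic partial sum. This yields $\abs{S} = \bigoh(\abs{K}^2)$, as required.

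There is no real obstacle here: the substantive combinatorial estimates were already absorbed into Proposition~\ref{prop:SplitGraphSetSizes}, and the corollary is merely its telescoped consequence. The only point worth checking carefully is the range of summation, which is legitimated by the standing connectedness assumption on $G$ (ruling out $d = 0$) and by $N(s) \subseteq K$ (ruling out $d > \abs{K}$).
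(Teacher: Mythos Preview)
Your proof is correct and follows exactly the approach of the paper, which proves the corollary in one line by summing inequality~(\ref{eqn:Sd}) over all $d \in [\abs{K}]$. You have simply made explicit the justification for the range of summation and the arithmetic of the resulting sum.
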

Using similar arguments one can show that if $G=S(K,S,E)$ is twin-free then $\abs{K}$ is $\bigoh (\abs{S}^2)$ implying that $\abs{S}$ is $\Omega (\sqrt{\abs{K}})$. More specifically, one should consider the set of endpoints or bend points of the paths $\pp_S$ all of which are in $\cup \pp_K$ and observe that no four such points that are pairwise consecutive in $\cup \pp_K$ may surround the endpoints of two paths $P_u, P_v \in \pp_K$ since otherwise $u$ and $v$ are twins.

\begin{theorem}\label{thm:BoneProperlyIncludedInBTwo} The following strict inclusions hold:
\begin{itemize}
\item $\boneenpg \cap \textsc{SPLIT} \subsetneq \benpg{2} \cap \textsc{SPLIT}.$
\item $\boneenpg \cap \textsc{SPLIT} \subsetneq \enpt \cap \textsc{SPLIT}.$
\end{itemize}
\end{theorem}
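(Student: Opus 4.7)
Both inclusions hold easily: the first because every $\boneenpg$ representation is in particular a $\btwoenpg$ representation, and the second by Corollary~\ref{coro:B1ENPGSplitSubsetB1ENPTSplit}. The nontrivial content is therefore strictness.

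The plan is to combine a counting obstruction coming from Proposition~\ref{prop:SplitGraphSetSizes} with an explicit construction. Specializing inequality \eqref{eqn:Sd} to $d=3$ gives $|S_3|\leq 2(|K|-1)$ for every twin-free and (false twin)-free $\boneenpg$ split graph, so any such split graph with more than $2(|K|-1)$ vertices of degree $3$ cannot be $\boneenpg$.

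My candidate witness is the split graph $G_n=S(K,S,E)$ in which $K=\{v_1,\ldots,v_n\}$ has size $n\geq 5$ and $S$ contains one stable-set vertex $s_T$ per 3-subset $T\subseteq K$, adjacent exactly to the vertices of $T$. Distinct triples induce distinct neighborhoods, so $G_n$ is (false twin)-free; every $v_i$ has both neighbors and non-neighbors in $S$, so $G_n$ is twin-free; $K$ is maximal and $G_n$ is connected. Since $|S_3|=\binom{n}{3}>2(n-1)$ for every $n\geq 5$, Proposition~\ref{prop:SplitGraphSetSizes} rules out $G_n\in\boneenpg$.

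To complete the proof, I would exhibit representations of $G_n$ in $\btwoenpg$ and in $\enpt$. For $\btwoenpg$, I would lay the clique paths as two-bend U-shapes sharing a common horizontal top carrying the edge $e_K$, with each $P_{v_i}$ reaching a private height on both the left and the right vertical arm; each $P_{s_T}$ with $T=\{v_i,v_j,v_k\}$ would be a suitable two-bend path that meets the three private arm-endpoints of the vertices in $T$ and splits at the arms of every other clique path, the non-splitting analysis mirroring the case distinction in the proof of Theorem~\ref{thm:b1-enpgSplitStructure}. For $\enpt$, I would transfer this layout to a caterpillar host whose spine carries $e_K$ and whose leaves attached at spine vertices play the role of the arm-endpoints, routing each $P_{s_T}$ through the three leaves corresponding to $T$. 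The main obstacle will be the $\enpt$ step: a tree has no analogue of the two-dimensional freedom of a grid, and the only way for $P_{s_T}$ to non-splittingly meet a clique path while avoiding a clique path that properly contains it is to branch off the spine precisely at the endpoint of the former, so making the non-splitting relation select exactly a prescribed triple forces a delicate coordination between the order in which the clique paths sit on the spine and the leaves used by each stable-set path; this coordination is the crux of the argument.
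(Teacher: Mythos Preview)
Your obstruction argument is sound: inequality~\eqref{eqn:Sd} with $d=3$ indeed gives $|S_3|\le 2(|K|-1)$, and your twin/false-twin checks for $G_n$ are correct. The gap is entirely on the constructive side, and it is not a matter of missing details. Your $\btwoenpg$ sketch asks each $P_{s_T}$ to ``meet the three private arm-endpoints'' of $T$, but a two-bend path has only two endpoints and three straight segments; against a U-shaped clique layout there is no mechanism by which such a path can single out an \emph{arbitrary} triple of clique paths while splitting from all others. Your $\enpt$ sketch fares worse: ``routing each $P_{s_T}$ through the three leaves corresponding to $T$'' is impossible, since a path in a tree meets at most two leaves. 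You yourself flag this as ``the crux of the argument'' and leave it unresolved. More fundamentally, the witness $G_n$ is almost certainly the wrong choice: with $\binom{n}{3}$ pairwise distinct $3$-element neighbourhoods you are demanding cubically many combinatorially distinguishable stable-set paths, whereas the freedom of a bounded-bend path (or a tree path) relative to a fixed clique arrangement is far smaller. There is no evidence that $G_n$ lies in $\btwoenpg$ or $\enpt$ at all.

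The paper avoids this trap by engineering a single small witness whose $\btwoenpg$ and $\enpt$ representations are built into its definition. It takes $K=[0,10]$ and two permutations $\sigma_L$ (the identity) and $\sigma_R=(0,5,10,4,9,3,8,2,7,1,6)$, and lets $S$ consist of the twenty pairs that are consecutive in $\sigma_L$ or $\sigma_R$ together with $\{0,2\},\{0,3\},\{0,4\}$. Then $|S_2|=23>22=2(|K|+2-2)$ violates~\eqref{eqn:Sd}, while laying the clique paths on a three-armed tree so that one side realises $\sigma_L$ and another realises $\sigma_R$ makes every consecutive pair the neighbourhood of a short stable-set path on the appropriate arm, and the three extra pairs are realised by two-bend paths crossing $e_K$ (Figure~\ref{fig:B2ENPGTree}); the same picture serves as both the $\btwoenpg$ and the $\enpt$ representation. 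The lesson is that the separating example has to be designed together with its target representation; picking an arbitrary graph that fails~\eqref{eqn:Sd} and hoping it lands in $\btwoenpg\cap\enpt$ does not work.
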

\begin{proof}
By the definition of a $\benpg{k}$ graph, we have $\boneenpg \cap \textsc{SPLIT} \subseteq \benpg{2} \cap \textsc{SPLIT}$ and by Corollary \ref{coro:B1ENPGSplitSubsetB1ENPTSplit}, we have $\boneenpg \cap \textsc{SPLIT} \subseteq \enpt \cap \textsc{SPLIT}$. In the following, we provide a split graph with a representation which is both $\benpg{2}$ and $\enpt$. We show that this split graph is not $\boneenpg$.

Let $K=[0,10]$, $\sigma_L$ be the identity permutation $(0,1,2,3,4,5,6,7,8,9,10)$ on $K$ and $\sigma_R$ be the permutation $(0,5,10,4,9,3,8,2,7,1,6)$ on $K$. Let $G=S(K,S,E)$ where $S$ contains 23 vertices: one for every pair that is consecutive in one of $\sigma_L, \sigma_R$ (there are 10 in every permutation and we note that these pairs are distinct) and one for each of the pairs $\set{0,2},\set{0,3},\set{0,4}$ (which are not consecutive in any of these permutations). Every vertex in $S$ is adjacent to the corresponding pair in $K$. Note that $G$ is (false-twin)-free and $\abs{S_2}=\abs{S}=23>22=2(\abs{K}+2-2)$. By Proposition \ref{prop:SplitGraphSetSizes} (\ref{eqn:Sd}), $G$ is not $\boneenpg$. Figure \ref{fig:B2ENPGTree} depicts a set of paths that constitute a $\benpg{2}$ representation and an $\enpt$ representation of $G$.
\begin{figure}
\centering
\commentfig{\includegraphics{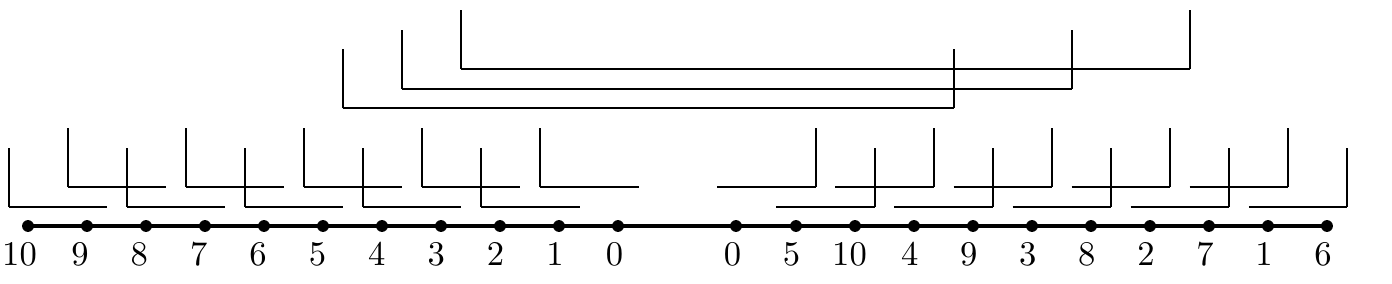}}
\caption{The $\benpg{2}$ representation of a non-$\boneenpg$ split graph used in the proof of Theorem~\ref{thm:BoneProperlyIncludedInBTwo}. The paths representing vertices of $K$ are not drawn. They are implied by the numbers of the vertices: for every $i \in [0,10]$, $P_i$ is the shortest path between the two vertices labeled $i$. The paths with two bends, intersects every path of $\pp_K$, but splits from every path having an endpoint after its bend points. Therefore, these path represent vertices with neighborhood $\set{0,2}$, $\set{0,3}$, $\set{0,4}$.}. \label{fig:B2ENPGTree}
\end{figure}
\end{proof}

\subsection{NP-completeness of $\boneenpg$ split graph recognition}\label{subsec:split-npc}
We now proceed with the $\textsc{NP}$-completeness of $\boneenpg$ recognition in split graphs. We first present a preliminary result that can be useful per se.

A graph is $d$-regular if all its vertices have degree $d$. A $3$-regular graph is also termed \emph{cubic}.
A \emph{diamond} is the graph $K_4 - e$ obtained by removing an edge from a clique on four vertices.
Clearly, if the edge set of a graph $G$ can be partitioned into two Hamiltonian cycles, then $G$ is $4$-regular. However, in the opposite direction we have the following:

\begin{theorem}\label{thm:decompositionOfFourRegular}
The problem of determining whether the edge set of a diamond-free $4$-regular graph can be partitioned into two Hamiltonian cycles is $\npc$.
\end{theorem}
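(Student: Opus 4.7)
Membership in \textsc{NP} is immediate: a nondeterministically guessed pair of edge-disjoint Hamiltonian cycles can be verified in polynomial time by checking that each is a Hamiltonian cycle and that together they use every edge of the input graph exactly once.

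For the hardness direction, my plan is to reduce from a suitable \npc{} variant of the Hamiltonian cycle problem, with Hamiltonian cycle in triangle-free cubic graphs being a natural candidate. Given such an instance $G$, I would construct a diamond-free $4$-regular graph $G'$ in a gadget-based manner so that $G$ has a Hamiltonian cycle if and only if $G'$ admits a partition of its edges into two Hamiltonian cycles.

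A natural starting ingredient is the prism $G \times K_2$: it is $4$-regular whenever $G$ is cubic, and a short case analysis shows it is diamond-free whenever $G$ is triangle-free (the four vertices of a hypothetical diamond either all lie on one side, or split $3$--$1$, or split $2$--$2$ between the two copies; in every case the fact that the only ``cross'' edges are the vertical ones joining copies of the same vertex rules the configuration out). The prism alone, however, does not yield a tight correspondence between Hamiltonian cycles of $G$ and Hamiltonian decompositions of $G \times K_2$, so I would additionally attach to each edge (and, if necessary, each vertex) of $G$ a small $4$-regular diamond-free gadget. The gadgets would be designed so that any Hamiltonian decomposition of the enlarged graph is forced to traverse each edge gadget in one of only two ``canonical'' ways, corresponding to a binary ``in/out'' choice for the associated edge of $G$; piecing these local choices together globally then yields a Hamiltonian cycle of $G$.

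The two directions of the equivalence would be verified as follows: starting from a Hamiltonian cycle of $G$, one explicitly assembles two edge-disjoint Hamiltonian cycles of $G'$ by routing them through the gadgets in a predetermined way; conversely, from a Hamiltonian decomposition of $G'$ one reads off the selected gadget traversals and recovers a Hamiltonian cycle of $G$. The main obstacle is the joint gadget design: one must simultaneously maintain $4$-regularity at all attachment points, avoid creating any diamond across the boundary between adjacent gadgets, and make the encoding rigid enough that no spurious decompositions of $G'$ exist. Preventing diamonds typically forces some padding or subdivision inside each gadget, which in turn must be compensated for with additional edges in order to preserve $4$-regularity, and I expect the ``only if'' direction --- extracting a Hamiltonian cycle of $G$ from an arbitrary Hamiltonian decomposition of $G'$ --- to be the most delicate step, requiring an invariant-based analysis of how the two cycles can interact with each gadget.
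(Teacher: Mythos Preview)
Your proposal is not a proof but a plan, and the plan has a genuine gap: you never actually construct the gadgets. You correctly observe that the prism $G\times K_2$ alone does not give the desired equivalence, and then say you ``would additionally attach \ldots a small $4$-regular diamond-free gadget'' without specifying what the gadget is, how it attaches while preserving $4$-regularity and diamond-freeness at the interface, or why any Hamiltonian decomposition of the result is forced into one of only two canonical traversals per gadget. The entire weight of the reduction rests on this unspecified design, and you yourself flag the ``only if'' direction as the delicate step; as written, nothing has been established beyond membership in \textsc{NP}.

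You are also working much harder than necessary. The paper's proof avoids gadgets entirely: starting from a cubic \emph{bipartite} graph $G$ (Hamiltonicity of these is $\npc$ by Akiyama--Nishizeki--Saito), it takes $H=L(G)$, the line graph. Then $H$ is automatically $4$-regular, and $H$ is diamond-free because every triangle of $L(G)$ must arise from three edges of $G$ incident to a common vertex (as $G$ is bipartite and hence triangle-free), and two such triangles sharing an edge would force a pair of parallel edges in the simple graph $G$. The equivalence ``$G$ is Hamiltonian $\iff$ $E(L(G))$ decomposes into two Hamiltonian cycles'' is a classical theorem of Kotzig, so the reduction is finished in two lines. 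The key idea you are missing is that the line-graph construction, combined with Kotzig's theorem, delivers the correspondence for free and makes any gadget engineering unnecessary.
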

\begin{proof}
We prove by reduction from the Hamiltonicity problem of cubic bipartite graphs which is known to be $\npc$ (\cite{ANS80-HamiltonicityofBipartiteCubic}.
Let $G$ be a cubic bipartite graph whose Hamiltonicity has to be decided, and let $H=L(G)$ be its line graph.
$H$ is clearly $4$-regular. In the sequel we will show that $H$ is also diamond-free.
In addition, we know that $G$ is Hamiltonian if and only if the edge set of its line graph $H$ can be partitioned into two Hamiltonian cycles (\cite{K57}).
This concludes the proof. It remains to show that $H$ is diamond-free.

Suppose for a contradiction that $H$ contains a diamond on vertices $\set{e_1,e_2,e_3,e_4}$ that are pairwise adjacent except for the pair $e_1, e_4$.
Then $\set{e_1,e_2,e_3}$ and $\set{e_2,e_3,e_4}$ are two triangles of $H$.
Every triangle of $H$ corresponds to either a triangle of $G$, or to three edges of $G$ incident to a common vertex. Since $G$ is bipartite, only the latter case is possible.
Then $e_1,e_2,e_3$ (resp. $e_2,e_3,e_4$) are edges of $G$ incident to a vertex $v$ (resp. $v'$).
Since $G$ is cubic we have $v \neq v'$.
We conclude that $e_2 = e_3 = vv'$, a contradiction.
\end{proof}

We are now ready to prove the main result of this section.
\begin{theorem}
The recognition problem of $\boneenpg$ graphs is $\npc$ even when restricted to split graphs.
\end{theorem}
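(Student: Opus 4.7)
The plan is a polynomial-time reduction from the decision problem of Theorem~\ref{thm:decompositionOfFourRegular}. Membership in \textsc{NP} is routine, and in the split case is witnessed by the polynomial-size data described in Theorem~\ref{thm:b1-enpgSplitStructure} (the two partitions of $K$, the four permutations $\sigma_{X,Y}$, and the partition of $S$ into eight classes with their prescribed intervals, prefixes, and suffixes): all of conditions \itemref{Interval}--\itemref{VerticalSmallerThanOtherHorizontal} can be verified in polynomial time, and the neighborhoods they produce can be compared to those of $G$.

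Given a diamond-free $4$-regular graph $H$ on $n$ vertices (hence $2n$ edges), I would construct in polynomial time a split graph $G' = S(K,S,E)$ as follows. The clique $K$ will consist of $V(H)$ together with a constant number of ``anchor'' dummy vertices, and the independent set $S$ will contain a degree-$2$ vertex $s_e$ with $N(s_e)=e$ for every edge $e\in E(H)$, plus a few auxiliary higher-degree vertices whose purpose is to pin the anchors to prescribed positions at the ends of $\sigma_L$ and $\sigma_R$ in any representation. The counts will be calibrated so that $|S_2|$ meets, up to the fixed padding, the tight upper bound of Proposition~\ref{prop:SplitGraphSetSizes}(\ref{eqn:Sd}); this tightness is the hinge of the correctness argument.

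For the $(\Leftarrow)$ direction of correctness I would start from a decomposition $E(H)=E(H_1)\cup E(H_2)$ into two Hamiltonian cycles, linearize each $H_i$ by breaking it at an edge incident to an anchor, and use the resulting Hamiltonian paths as $\sigma_L$ (from $H_1$) and $\sigma_R$ (from $H_2$). The two missing edges are then placed respectively in the $S_{L,HH}$/$S_{L,HV}$ and $S_{R,HH}$/$S_{R,HV}$ slots, and the $(\Leftarrow)$ construction in the proof of Theorem~\ref{thm:b1-enpgSplitStructure} produces the required $\boneenpg$ representation. For the $(\Rightarrow)$ direction I would apply Theorem~\ref{thm:b1-enpgSplitStructure} to any representation of $G'$ to extract $\sigma_L,\sigma_R$ and the canonical partition of $S$. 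Tightness of the counting in Proposition~\ref{prop:SplitGraphSetSizes}(\ref{eqn:Sd}) forces each edge of $H$ into a unique ``side'', yielding a partition $E(H)=E_L\cup E_R$ with $|E_L|=|E_R|=n$. A degree-counting argument, together with the anchors' forced positions at the endpoints of $\sigma_L$, then shows that the single ``extra'' edge contributed to $E_L$ beyond the $n-1$ consecutive pairs of $\sigma_L$ must close $\sigma_L$ into a Hamiltonian cycle of $H$, for otherwise some vertex of $K$ would have the wrong degree in $E_L$; symmetrically for $E_R$.

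The main obstacle, and where most of the technical work will go, is the design of the anchor padding. Without the anchors pinning the ends of $\sigma_L$ and $\sigma_R$, the single extra edge permitted by the characterization at each side (via an $S_{L,HH}$ or $S_{L,HV}$ slot) could attach anywhere compatible with Theorem~\ref{thm:b1-enpgSplitStructure} and would not in general close $\sigma_L$ into a cycle, which would break the correspondence. A secondary technical point is to keep $G'$ twin-free and false-twin-free so that Observation~\ref{obs:twin} and Proposition~\ref{prop:SplitGraphSetSizes} apply cleanly; this is where the diamond-freeness of $H$ inherited from the source problem will be used, to rule out local coincidences in the edge encoding that would otherwise collapse distinct $s_e$ into twins and invalidate the tight counting.
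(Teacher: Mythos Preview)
Your plan follows the paper's approach in spirit---the same source problem, the same encoding of edges as degree-$2$ vertices of $S$, and the same reliance on tightness in the counts of Proposition~\ref{prop:SplitGraphSetSizes}---but your gadgetry and your diagnosis of the hard step both diverge from what the paper actually does. The paper does \emph{not} add anchor vertices to $K$; instead it deletes one vertex $v$ from the $4$-regular graph, so that $K=V(G)-v$ and the four former neighbours $v_1,\dots,v_4$ are distinguished. Four vertices $s_1,\dots,s_4$ of degree $|K|-1$ with $N(s_i)=K-v_i$ are added to $S$, and these are what pin the endpoints of $\sigma_L,\sigma_R$. The deleted vertex then automatically closes each Hamiltonian path of $G-v$ into a Hamiltonian cycle of $G$; no ``extra edge in an $S_{X,HH}$/$S_{X,HV}$ slot'' is needed, and in the $(\Leftarrow)$ direction all those classes are simply empty. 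Your anchor-based construction may be salvageable, but it is strictly more complicated, and the calibration you allude to (making $|S_2|$ hit the bound~(\ref{eqn:Sd}) tightly while inserting dummy anchors into $K$) is not obviously possible: each anchor added to $K$ inflates the right-hand side of~(\ref{eqn:Sd}).

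More importantly, you have misplaced the role of diamond-freeness. The constructed split graph is twin-free and false-twin-free automatically (distinct edges of a simple graph have distinct endpoint sets, and any two clique vertices are separated by an incident edge), so diamond-freeness is not needed there. In the paper it is used in the $(\Rightarrow)$ direction, and precisely to rule out $S_{X,HH,2}\neq\emptyset$: if some degree-$2$ vertex $s$ lands in $S_{L,HH}$, the case analysis forces a triangle on $\{v_1,v_4,x\}$ in $G-v$, hence a diamond $\{v,v_1,v_4,x\}$ in $G$. Only after $S_{X,V,2}=\emptyset$ (from $|K_{X,V}|\le 1$) and $S_{X,HH,2}=\emptyset$ are established does the refined count $|S_{X,H,2}\cup S_{X,HV,2}|\le n-1$ become tight against $|S_2|=|E(G-v)|=2(n-1)$, and only this tightness turns each $\sigma_X$ into a Hamiltonian path. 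Without the diamond argument a stray $S_{X,HH,2}$ term would slacken the count and your reconstruction of the Hamiltonian decomposition would fail.
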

\begin{proof}
The proof is by reduction from the problem of decomposing a $4$-regular, diamond-free graph into two Hamiltonian cycles which is shown to be $\npc$ in Theorem \ref{thm:decompositionOfFourRegular}.
Given a $4$-regular graph $G$ on $n+1$ vertices, we remove an arbitrary vertex $v$ of $G$ and obtain the graph $G'=G-v$ on $n$ vertices all of which having degree 4, except the four neighbours $\set{v_1,v_2,v_3,v_4}$ of $v$ each of which having degree 3.
We construct the split graph $G''=S(K,S,E)$ where $K=V(G')$, $S=E(G') \cup \set{s_1,s_2,s_3,s_4}$.
Furthermore, the neighborhood of a vertex $s \in S$ is determined as follows.
If $s=s_i$ for some $i \in [4]$ then $N_{G''}(s) = K - v_i$, otherwise $s$ is an edge $uv$ of $G'$ in which case $N_{G''}(s) = \set{u,v}$.
It remains to show that $G''$ is $\boneenpg$ if and only if $E(G)$ can be partitioned into two Hamiltonian cycles.

Assume that $E(G)$  can be partitioned into two Hamiltonian cycles $C_L, C_R$.
This induces a partition of $E(G')$ into two paths $Q_L$ and $Q_R$
which in turn induces a partition of $S-\set{s_1,s_2,s_3,s_4}$ into $S_{L,H}=E(Q_L)$ and $S_{R,H}=E(Q_R)$.
Note that the endpoints of $Q_L$ and $Q_R$ are the degree $3$ vertices of $G'$, i.e. $\set{v_1, v_2, v_3, v_4}$.
Let without loss of generality $v_1, v_2$ (resp. $v_3,v_4$) be the endpoints of $Q_L$ (resp. $Q_R$).
For $X \in \set{L,R}$ let $K_{X,H}=K=V(G')$ and $K_{X,V}=\emptyset$.
We set $\sigma_{X,H}$ as the order of the vertices of $G'$ in $Q_X$ (which is a permutation of the vertices of $K=V(G')$).
Then $v_1$ and $v_2$ (resp. $v_3$ and $v_4$) are the first and last vertices of the permutation $\sigma_{L,H}$ (resp. $\sigma_{R,H}$).
For $X \in \set{L,R}$ we set $S_{X,V}=S_{X,HH}=S_{X,HV}=\emptyset$.
We now verify that these settings satisfy the conditions of Theorem~\ref{thm:b1-enpgSplitStructure}. Conditions \itemref{PrefixIntersection}, \itemref{SuffixUnion}, \itemref{SuffixPrefix}, \itemref{UniqSuffixUnion} and \itemref{VerticalSmallerThanOtherHorizontal} easily follow since the sets $S_{X,V}$, $S_{X,HV}$, $S_{X,HH}$ and $K_{X,V}$ are empty. As for Condition \itemref{Interval} we consider two cases.
If $s=s_i$ for some $i \in [4]$ then $N_{G''}(s)=K-v_i$ is an interval of $\sigma_{X,H}$ for some $X \in \set{L,R}$ since $v_i$ is either the first or the last vertex of one of these permutations.
If $s$ is an edge $uv \in E(Q_X)$ of $G'$ then $u$ and $v$ are consecutive in the permutation $\sigma_{X,H}$.
Since all the conditions are satisfied, we conclude that $G''$ is $\boneenpg$.

Now assume that $G''$ is $\boneenpg$.
For $X \in \set{L,R}, Y \in \set{H,V}$, let $K_{X,Y}$, $\sigma_{X,Y}$, $S_{X,Y}$, $S_{X,HV}$ and $S_{X,HH}$ be sets and permutations whose existence are guaranteed by Theorem~\ref{thm:b1-enpgSplitStructure}.
We first show that $\abs{K_{X,V}} \leq 1$.
Assume for a contradiction that $\abs{K_{X,V}} > 1$ for some $X \in \set{L,R}$, say $X=L$.
Then we have $\abs{K_{R,H}} > 1$, and $\abs{K_{L,H}}, \abs{K_{R,V}} < n-1$.
By Proposition \ref{prop:SplitGraphSetSizes}, these imply $S_{L,H,n-1}=S_{L,HH,n-1}=S_{R,HH,n-1}=\emptyset$.
Moreover, $\abs{S_{R,H,n-1}} \leq 2$ and this may hold with equality only when $K_{R,H}=K$.
Finally, we have $\abs{S_{X,V,n-1} \cup S_{X,HV,n-1}} \leq 1$.
Summing up all inequalities we obtain $\abs{S_{n-1}} \leq 4$.
We recall that all the vertices of $S$ have degree $2$ except the four special vertices with degree $n-1$.
Therefore, $S_{n-1} = \set{s_1,s_2,s_3,s_4}$, and we conclude that all the inequalities hold with equality.
In particular $\abs{S_{R,H,n-1}}=2$, implying $K_{R,H}=K$ and $K_{R,V}=\emptyset$.
Then we have $S_{R,V,n-1} \cup S_{R,HV,n-1} = \emptyset$, i.e. one of the inequalities is strict, a contradiction.
Therefore, $\abs{K_{X,V}} \leq 1$, implying
\begin{equation}\label{eqn:SplitNPCompletenessOne}
S_{X,V,2} = \emptyset.
\end{equation}

Recall that $\sigma_X = \sigma_{X,H} \cdot \sigma_{X,V}$.
We now show that the set of the first and last vertices of $\sigma_L$ and $\sigma_R$ is $\set{v_1,v_2,v_3,v_4}$.
Let $i \in [4]$ and consider each one of the cases $s_i \in S_{X,H}$, $s_i \in S_{X,HH}$ and $s_i \in S_{X,HV}$ (the case $s_i \in S_{X,V}$ is impossible since $\abs{K_{X,V}} \leq 1$).
It is easy to verify for every case that $v_i$ is either the first or the last vertex of $\sigma_X$.
By the pigeonhole principle we conclude that the set of first and last vertices of $\sigma_L$ and $\sigma_R$ is $\set{v_1,v_2,v_3,v_4}$.
We assume without loss of generality that $v_1$ (resp. $v_2)$ is the first (resp. last) vertex of $\sigma_L$ and that $v_3$ (resp. $v_4)$ is the first (resp. last) vertex of $\sigma_R$.

Our next step is to show that $S_{X,HH,2}=\emptyset$.
Assume for a contradiction that this does not hold, and let $s$ be a vertex (without loss of generality) of $S_{L,HH,2}$.
Then $\sigma_s$ is a prefix with two vertices of $\sigma_{L,H} \cap K_{R,H} = \sigma_{L,H} \setminus K_{R,V} = \sigma_{L,H} - v_4$.
Clearly, the first vertex of $\sigma_s$ is $v_1$.
If the second vertex $w$ of $\sigma_{L,H}$ is not $v_4$, then $\sigma_s = v_1 w$ is an interval of $\sigma_{X,H}$ implying that $s\in S_{L,H}$, a contradiction.
Therefore, $w=v_4$, and $\sigma_s = v_1 x$ where $x$ is the third vertex of $\sigma_{L,H}$.
We conclude that $S_{L,HH,2}=\set{v_1 x}$.
Recall that $v_4$ has three incident edges in $G'$.
Since $v_4$ is the leftmost vertex of $\sigma_R$,
none of these edges is in $S_{R,HH} \cup S_{L,VH}$.
Moreover, at most one of them is in $S_{R,H} \cup S_{R,HV}$.
Therefore, at least two of them are in $S_{L,H}$.
Then these edges are necessarily $v_4 v_1$ and $v_4 x$.
We conclude that $\set{v_1, v_4, x}$ induces a triangle in $G'$.
In other words $\set{v_1, v_4, x, v}$ induces a diamond on $G$, a contradiction. Therefore
\begin{equation}\label{eqn:SplitNPCompletenessTwo}
S_{X,HH,2}=\emptyset.
\end{equation}

Finally, if $K_{X,V}=\emptyset$ we have $S_{X,HV}=\emptyset$ and $\abs{S_{X,H,2}} \leq n-1$. Otherwise, $\abs{K_{X,V}}=1, \abs{K_{X,H}}=n-1$ and we have $\abs{S_{X,HV,2}} \leq 1$ and $\abs{S_{X,H,2}} \leq n-2$. In both cases we have
\begin{equation}\label{eqn:SplitNPCompletenessThree}
\abs{S_{X,H,2} \cup S_{X,HV,2}} \leq n-1.
\end{equation}

Combining (\ref{eqn:SplitNPCompletenessOne}), (\ref{eqn:SplitNPCompletenessTwo}) and (\ref{eqn:SplitNPCompletenessThree}) we obtain $\abs{S_2} \leq 2 (n-1)$.
Since $\abs{S_2}=\abs{E(G')}=\abs{E(G)}-4=2(n-1)$, all the inequalities must hold with equality, in particular
$\abs{S_{X,H,2} \cup S_{X,HV,2}} = n-1$.
Therefore, every two consecutive vertices in $\sigma_X$ are adjacent in $G'$. In other words, the permutation $\sigma_X$ corresponds to a path $Q_X$ of $G'$.
The endpoints of $Q_L$ (resp. $Q_R$) are $v_1$ and $v_2$ (resp. $v_3$ and $v_4$). Adding two edges incident to $v$ to each $Q_X$, we get two edge disjoint Hamiltonian cycles of $G$.
\end{proof}

\section{Co-bipartite Graphs}\label{sec:cobipartite}
In Section~\ref{subsec:cobipartite-maintheorem} we characterize $\boneenpg$ co-bipartite graphs. We show that there are two types of representations for $\boneenpg$ co-bipartite graphs. For each type of representation, we characterize their corresponding graphs. These characterizations imply a polynomial-time recognition algorithm. In Section~\ref{subsec:efficient-recognition} we present an efficient (linear-time) implementation of the algorithm.

\subsection{Characterization of $\boneenpg$ Co-bipartite Graphs}\label{subsec:cobipartite-maintheorem}

We proceed with definitions and two related lemmas (Lemma~\ref{lem:difference2meet}, Lemma~\ref{lem:meet2difference}) that will be used in each of the above mentioned characterizations.

Let $S$ be a path of a graph $H$ with endpoints $u,v$. Two sets $\pp_u$, $\pp_v$ of paths \emph{meet at} $S$ if for $x \in \set{u,v}$ (a) every path of $\pp_x$ contains $x$ (b) every path of $\pp_x$ has an endpoint that is a vertex of $S$ different than $x$, and (c) a pair of paths $P_u \in \pp_u$, $P_v \in\pp_v$ may intersect only in $S$ (see Figure~\ref{fig:TwoPathSetsMeet}).

\begin{figure}
\centering
\commentfig{\includegraphics{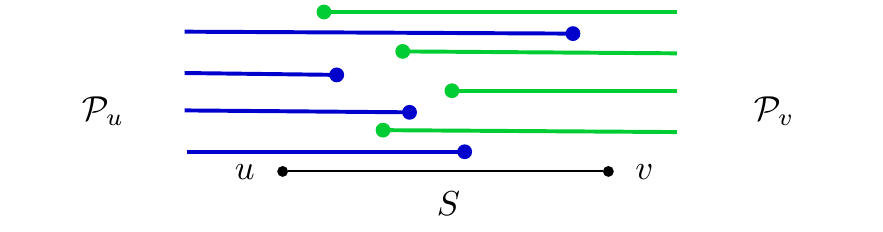}}
\caption{Two path sets $\pp_u$, $\pp_v$ meet at a path $S$ with endpoints $u$ and $v$.}
\label{fig:TwoPathSetsMeet}
\end{figure}

A graph $G=(V,E)$ is a \emph{difference graph} (equivalently \emph{bipartite chain graph}) if every $v_i \in V$ can be assigned a real number $a_i$ and there exists a positive real number $T$ such that (a) $\abs{a_i} < T$ for all $i$ and (b) $\set{v_i, v_j} \in E$ if and only if $\abs{a_i - a_j} \geq T$. Every difference graph is bipartite where the bipartition is according to the sign of $a_i$.

\begin{theorem}\label{theorem:difference-degrees}\cite{HPS90}
If $G = (V,E)$ is a bipartite graph with bipartition $V=X \cup Y$ then the following statements are equivalent:
\begin{enumerate}[i)]
\item $G$ is a difference graph.
\item Let $\delta_1 < \delta_2 < \ldots \delta_s$ be distinct nonzero degrees in $X$, and $\delta_0 = 0$. Let $\sigma_1 < \sigma_2 < \ldots \sigma_t$ be distinct nonzero degrees in $Y$, and $\sigma_0 = 0$. Let $X = X_0 \cup X_1 \cup \ldots X_s$, $Y = Y_0 \cup Y_1 \cup \ldots \cup Y_t$, where $X_i = \set{x \in X | d(x) = \delta_i}$, $Y_j = \set{y \in Y | d(y) = \delta_j}$. Then $s=t$ and for $x \in X_i$, $y\in Y_j$, $\set{x,y} \in E$ if and only if $i+j > t$.
\end{enumerate}
\end{theorem}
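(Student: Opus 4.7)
The plan is to prove the equivalence through the standard observation that, in a difference graph, the neighborhoods on each side of the bipartition are totally ordered by inclusion.

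For the direction (i)$\Rightarrow$(ii), I would fix the real labels $a_v$ and threshold $T$ from the definition of a difference graph. Without loss of generality $a_v > 0$ on $X$ and $a_v \leq 0$ on $Y$, so the adjacency condition $\abs{a_x - a_y} \geq T$ becomes $a_x \geq a_y + T$. Hence $N(x) = \set{y \in Y : a_y \leq a_x - T}$, which shows immediately that if $a_{x} \leq a_{x'}$ then $N(x) \subseteq N(x')$, and symmetrically for $Y$. In particular, two vertices on the same side have equal degree if and only if they have identical neighborhoods, so the degree classes $X_i$ and $Y_j$ are exactly the equivalence classes under equal neighborhood. Ordering them by increasing degree, each $x \in X_i$ has neighborhood equal to a non-empty suffix $Y_{k_i} \cup \ldots \cup Y_t$, and $k_i$ is strictly decreasing in $i$ since the $\delta_i$ are strictly increasing. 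This produces $s$ distinct integer values in $\set{1, \ldots, t}$, so $s \leq t$; the symmetric argument on $Y$ yields $t \leq s$, hence $s = t$. A direct index count then forces $k_i = t - i + 1$, giving the rule $\set{x,y} \in E$ iff $i + j > t$; the cases $i=0$ or $j=0$ accommodate the isolated vertices, for which no edges are introduced (consistent with $i+j \leq t$).

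For (ii)$\Rightarrow$(i), I would construct a real labeling directly from the class structure. Set $a_x = i$ for $x \in X_i$ and $a_y = -j$ for $y \in Y_j$, and take $T = t + 1$. Then $\abs{a_x - a_y} = i + j$ whenever $x \in X$, $y \in Y$, so the threshold condition $\abs{a_x - a_y} \geq T$ reduces to $i + j \geq t+1$, matching the hypothesized adjacency exactly. The bound $\abs{a_v} < T$ holds for every vertex since $\abs{a_x} \leq s = t < t+1 = T$ on $X$ and $\abs{a_y} \leq t < T$ on $Y$. No edges are erroneously introduced among vertices of the same side because the difference-graph condition is only required to characterize edges of $G$, which is already bipartite with parts $X$ and $Y$.

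The main technical subtlety is proving $s = t$. It relies jointly on the strict inequalities $\delta_1 < \ldots < \delta_s$ and $\sigma_1 < \ldots < \sigma_t$ and on the nestedness of neighborhoods: each new class on one side must activate exactly one additional class on the other. Once $s = t$ is established, the remaining steps are routine index bookkeeping in one direction and a one-shot labeling construction in the other.
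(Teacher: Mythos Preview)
The paper does not prove this theorem: it is quoted verbatim with the citation \cite{HPS90} and used as a black box in the subsequent lemmata, so there is no ``paper's own proof'' to compare against. Your argument is the standard one from the original Hammer--Peled--Sun paper: nested neighborhoods from the real labels give the degree-class structure in one direction, and an explicit integer labeling $a_x=i$, $a_y=-j$, $T=t+1$ gives the difference-graph representation in the other.

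One small wording issue: your last sentence in the (ii)$\Rightarrow$(i) direction (``the difference-graph condition is only required to characterize edges of $G$, which is already bipartite'') is not quite the right justification. The definition in the paper is a biconditional over \emph{all} pairs, so you do need that same-side pairs satisfy $\abs{a_i-a_j}<T$. Your construction in fact guarantees this, since within $X$ the differences are at most $s=t<t+1=T$ and likewise within $Y$; you should simply say that rather than appeal to bipartiteness.
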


\begin{theorem}\label{theorem:difference-2K2free}\cite{HPS90}
A graph is a difference graph if and only if it is bipartite and $2K_2$-free.
\end{theorem}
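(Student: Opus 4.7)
The plan is to prove both implications separately, with the forward direction straightforward and the converse relying on the chain structure of neighborhoods that $2K_2$-freeness forces on a bipartite graph.

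For the forward direction ($\Rightarrow$), I would let $G$ be a difference graph with labels $a_i$ and threshold $T > 0$, and partition $V$ into $X = \{v_i : a_i \geq 0\}$ and $Y = \{v_i : a_i < 0\}$. Any two vertices in $X$ satisfy $|a_i - a_j| < T$ (since both lie in $[0,T)$), and similarly for $Y$; hence $G$ is bipartite with bipartition $(X,Y)$. To rule out an induced $2K_2$ with edges $\{v_i, v_j\}$ and $\{v_k, v_l\}$, I would assume without loss of generality $a_i \leq a_j$, $a_k \leq a_l$, and $a_i \leq a_k$; then $a_l - a_i \geq a_l - a_k \geq T$, forcing $v_i$ and $v_l$ to be adjacent, contradicting the assumed $2K_2$.

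For the converse ($\Leftarrow$), let $G = (X \cup Y, E)$ be bipartite and $2K_2$-free. The crucial structural lemma is that in each part the neighborhoods are linearly ordered by inclusion: if $u, v \in X$ and neither $N(u) \subseteq N(v)$ nor $N(v) \subseteq N(u)$, then witnesses $x \in N(u) \setminus N(v)$ and $y \in N(v) \setminus N(u)$ induce a $2K_2$ on $\{u, x, v, y\}$. I would then order the vertices so that $N(x_1) \subseteq \cdots \subseteq N(x_p)$ in $X$ and $N(y_1) \subseteq \cdots \subseteq N(y_q)$ in $Y$, which yields a staircase adjacency pattern: there exists a nondecreasing threshold function such that $x_i \sim y_j$ iff $j$ exceeds this threshold at $i$. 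The final step is to realize this staircase by real labels: with $T = 1$, assign $a_{x_i} \in [0,1)$ increasing in $i$ and $a_{y_j} \in (-1, 0]$ decreasing in $j$, calibrated so that $a_{x_i} - a_{y_j} \geq 1$ exactly on the edges of the staircase.

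The main obstacle is the explicit numerical calibration: the labels must stay strictly inside $(-T, T)$, avoid the boundary case $|a_i - a_j| = T$ on non-edges, and reproduce the staircase exactly. A clean workaround is to work with a finite set of rationals driven by the degree sequence, mirroring the degree-based characterization of Theorem~\ref{theorem:difference-degrees}: the equivalence classes $X_i, Y_j$ there are precisely the equal-neighborhood classes guaranteed by the chain property, and one may assign each class a single value taken from an arithmetic progression inside $(-1,1)$ whose common difference is chosen so that the sum $a_{x_i} + |a_{y_j}|$ crosses $1$ at exactly the staircase boundary $i + j > t$.
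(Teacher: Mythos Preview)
The paper does not prove this theorem. It is stated with the citation \cite{HPS90} and used as a black box; no argument is given in the paper itself. Consequently there is nothing in the paper to compare your proposal against.

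That said, your outline is the standard proof and is essentially correct. One small quibble: in the converse you place $a_{y_j}$ in $(-1,0]$, which clashes with your earlier convention that the sign of $a_i$ determines the side of the bipartition (a vertex with label $0$ would land in $X$). It is cleaner to keep all $X$-labels strictly positive and all $Y$-labels strictly negative, e.g.\ take $T=1$ and, with $t$ the common number of nonzero degree classes from Theorem~\ref{theorem:difference-degrees}, assign label $i/(t+1)$ to every vertex of $X_i$ and label $-j/(t+1)$ to every vertex of $Y_j$; then $|a_x - a_y| \geq 1$ iff $i+j \geq t+1$, i.e.\ $i+j>t$, matching the adjacency rule of Theorem~\ref{theorem:difference-degrees} exactly while keeping all labels in the open interval $(-1,1)$.
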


\begin{lemma}\label{lem:difference2meet}
Let $G_B = B(K,K',E)$ a difference graph, and $t$ be the number of distinct nonzero degrees of vertices of $K$ in $G_B$.
Let $H$ be a grid and $S$ be a path of $H$ with length at least $t+2$ and no bends.
Then there is a $\boneenpg$ representation $\rep$ of $G_C=C(K,K',E)$ such that $\pp_K$ and $\pp_{K'}$ meet at $S$.
\end{lemma}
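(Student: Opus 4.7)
My approach is to construct $\pp$ explicitly as a family of straight (bend-free) subpaths of $S$. Applying Theorem~\ref{theorem:difference-degrees} to $G_B = B(K,K',E)$, since $G_B$ is a difference graph we obtain degree-class partitions $K = X_0 \cup X_1 \cup \cdots \cup X_t$ and $K' = Y_0 \cup Y_1 \cup \cdots \cup Y_t$ (the subscripts $0$ being the possibly empty degree-$0$ classes) such that for $x \in X_i$ and $y \in Y_j$, $\set{x,y} \in E$ if and only if $i + j > t$. Label the vertices of $S$ consecutively as $u = s_0, s_1, \ldots, s_L = v$ with $L \geq t+2$, and write $e_k = s_{k-1} s_k$ for $k \in [L]$.

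\textbf{Construction.} For every $x \in X_i$ define $P_x$ to be the subpath of $S$ from $s_0 = u$ to $s_{i+1}$, and for every $y \in Y_j$ define $P_y$ to be the subpath of $S$ from $s_{t+1-j}$ to $s_L = v$. All paths are bend-free subpaths of $S$, so each has $0$ bends and the representation is $\boneenpg$. Since $i+1$ and $t+1-j$ both lie in $[1, t+1] \subseteq [1, L-1]$, the second endpoint of every $P_x$ (resp.\ $P_y$) is strictly between $u$ and $v$ on $S$, so every path has length at least $1$.

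\textbf{Verification.} The three meet conditions follow at once: $P_x$ contains $u$ and has another endpoint in $S \setminus \set{u}$, symmetrically for $P_y$ and $v$, and since every path lies inside $S$, any intersection is also inside $S$. For the $\enpg$-adjacencies, any two paths in $\pp_K$ share $s_0$ and are nested subpaths of $S$, so they edge-intersect and their union is still a subpath of $S$ (no split), giving the clique on $K$; the same holds for $K'$. For a cross pair with $x \in X_i$ and $y \in Y_j$, the edge sets are $\set{e_1, \ldots, e_{i+1}}$ and $\set{e_{t+2-j}, \ldots, e_L}$ respectively, which overlap if and only if $i+1 \geq t+2-j$, i.e.\ if and only if $i + j > t$ --- exactly the $G_B$-adjacency rule. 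When they overlap, the union is still a subpath of $S$, hence $P_x \sim P_y$. Therefore $\enpg(\pp) = G_C$, as required.

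\textbf{Main difficulty.} The only delicate point is choosing the offsets on the $K$-side and on the $K'$-side so that the adjacency threshold $i + j > t$ translates without a fencepost error into the overlap condition of the two nested families of prefixes and suffixes of $S$, while simultaneously respecting the positive-length requirement in condition (b) of the definition of meeting. The hypothesis that $S$ has length at least $t + 2$ is exactly what is needed: it supplies $t+1$ admissible interior endpoints on each side of $S$, one for each of the $t+1$ possible values of $i$ (resp.\ $j$).
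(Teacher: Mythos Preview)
Your proof is correct and follows essentially the same approach as the paper: both apply Theorem~\ref{theorem:difference-degrees} to obtain the degree-class partitions $X_0,\dots,X_t$ and $Y_0,\dots,Y_t$, and then represent each $x\in X_i$ by a prefix of $S$ of length $i+1$ and each $y\in Y_j$ by a suffix of $S$ of length $j+1$, so that overlap occurs precisely when $i+j>t$. The only difference is cosmetic---the paper labels the vertices of $S$ as $-1,0,\dots,t+1$ while you use $s_0,\dots,s_L$---and your explicit verification of the ``meet'' conditions and the clique edges is slightly more detailed than the paper's.
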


\begin{proof}
Let $\delta_1 < \delta_2 < \ldots \delta_s$ (resp. $\sigma_1 < \sigma_2 < \ldots \sigma_t$) be the distinct nonzero degrees in $K$ (resp in $K'$) in $G_B$. By Theorem~\ref{theorem:difference-degrees} we have $s=t$. Let $-1,0,1, \ldots,t+1$ be $t+3$ vertices of $S$ such that $0$ and $t+2$ are the endpoints of $S$ and they appear in this order on $S$. Let $x$ (resp. $x'$) be a vertex of $K$ (resp. $K'$), and let $i$ be such that $d_{G_B}(x)=\delta_i$ (resp. $(d_{G_B}(x')=\sigma_{i'})$). The path $P_x$ (resp. $P_{x'}$) is constructed between vertices $-1$ and $i$ (resp. $t-j$ and $t+1$).

With this construction $\pp_K, \pp_{K'}$ represent the cliques $K$ and $K'$, moreover they meet at $S$. By the construction two paths $P_x \in \pp_K$, $P_{x'} \in \pp_{K'}$ intersect if and only if $i+j>t$. By Theorem~\ref{theorem:difference-degrees} $x$ and $x'$ are adjacent if and only if $i+j>t$. Therefore, $\pp$ is a representation of $G=C(K,K',E)$.
\end{proof}

\begin{lemma}\label{lem:meet2difference}
~\\
i) If two sets $\pp_K, \pp_{K'}$ of one-bend paths meet at a path $S$ then $G_B = B(K,K',E)$ is a difference graph.\\
ii) If a cobipartite graph $G=C(K,K',E)$ is an interval graph, then $G_B = B(K,K',E)$ is a difference graph.
\end{lemma}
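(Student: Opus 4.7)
The plan is to begin with (ii), which is short. By Theorem~\ref{theorem:difference-2K2free} it suffices to show that $G_B$ is $2K_2$-free. Suppose for contradiction that $x_1,x_2 \in K$ and $x'_1,x'_2 \in K'$ induce a $2K_2$ in $G_B$, with edges $\set{x_1,x'_1}$ and $\set{x_2,x'_2}$ and non-edges $\set{x_1,x'_2}$ and $\set{x_2,x'_1}$. In the cobipartite graph $G$ the pairs $\set{x_1,x_2}$ and $\set{x'_1,x'_2}$ are automatically adjacent, so these four vertices would induce a $C_4$ in $G$, contradicting the fact that interval graphs are chordal. Thus $G_B$ is $2K_2$-free and hence a difference graph.

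For (i), I would parametrize $S$ by distance from $u$, placing $u$ at position $0$ and $v$ at position $L$. Denote by $a_x$ the position on $S$ of the endpoint $w_x$ of $P_x$ guaranteed by condition (b) for $x \in K$, and by $b_{x'}$ the analogous position of $w_{x'}$ for $x' \in K'$. The central step is a geometric claim: each trace $P_x \cap S$ equals the subpath of $S$ from $u$ to $w_x$, and symmetrically $P_{x'} \cap S$ equals the subpath from $w_{x'}$ to $v$. I would prove this by case analysis on the location of the (at most one) bend of $P_x$: because $u$ is an endpoint of $S$, the only feasible configurations for a one-bend path that contains $u$ and has an endpoint $w_x$ on $S$ are either $P_x$ straight and collinear with $S$ (possibly extended past $u$ off $S$), or $P_x$ bending at $u$ with an off-$S$ perpendicular tail; in both cases the intersection with $S$ is the subpath from $u$ to $w_x$.

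With the traces pinned down, condition (c) forces $P_x \cap P_{x'} \subseteq S$, so $P_x$ and $P_{x'}$ share an edge iff the two traces overlap in an edge, iff $a_x > b_{x'}$. A brief degree check then confirms that such intersections are non-splitting: at any interior vertex of the overlap the union $P_x \cup P_{x'}$ has degree exactly $2$; and the only possible off-$S$ extensions of paths in $\pp_K$ (resp.\ $\pp_{K'}$) sit at $u$ (resp.\ $v$), so even the two boundary vertices of the overlap have union-degree at most $2$. Consequently $\set{x,x'} \in E(G_B)$ if and only if $a_x - b_{x'} \geq 1$.

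To close, I would exhibit a difference-graph certificate by setting $\alpha_x = L + a_x$ for $x \in K$, $\alpha_{x'} = b_{x'} - L$ for $x' \in K'$, and threshold $T = 2L + 1$. One checks immediately that $\abs{\alpha_z} < T$ for every vertex $z$; for cross pairs $\alpha_x - \alpha_{x'} = 2L + a_x - b_{x'}$, which is $\geq T$ iff $a_x - b_{x'} \geq 1$; and within either side the differences are bounded by $L < T$, so no spurious edges arise. The technical heart of (i) is the geometric case analysis that controls the shape of each trace; once it is in hand the non-splitting verification and the labeling are routine.
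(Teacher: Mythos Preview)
Your proposal is correct, and the two parts relate to the paper's proof in different ways.

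For part~(i) your approach coincides with the paper's: both assign to each path the position of its endpoint inside $S$ and read off a difference labeling. The paper does this in one line (setting $a_i = \abs{E(p_S(u,r_i))}$, $a_j = -\abs{E(p_S(l_j,v))}$, $T=\abs{E(S)}+1$) and does not spell out either the trace claim or the non-splitting check; you make both explicit. One small remark: your enumeration of ``feasible configurations'' omits the case where $P_x$ runs along $S$ from $w_x$ through $u$, continues collinearly past $u$, and only then bends at some point outside $S$. This is a legitimate one-bend path containing $u$ with endpoint $w_x$, yet it is neither ``straight'' nor ``bending at $u$''. The conclusion $P_x \cap S = [u,w_x]$ still holds in that case, so the gap is cosmetic, but you should include it. (Both you and the paper also tacitly use that $S$ is straight; this holds in every application of the lemma.)

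For part~(ii) you take a genuinely different route. The paper proves (ii) by reducing it to (i): it manipulates the interval representation (subdividing $e_K,e_{K'}$ and truncating paths that contain both) until $\pp_K,\pp_{K'}$ meet at a segment, and then invokes (i). Your argument bypasses (i) entirely: a $2K_2$ in $G_B$ would yield an induced $C_4$ in the cobipartite graph $G$, contradicting chordality of interval graphs, so $G_B$ is $2K_2$-free and hence a difference graph by Theorem~\ref{theorem:difference-2K2free}. Your argument is shorter and self-contained; the paper's argument has the advantage of exhibiting the concrete meeting configuration, which is the form actually used downstream (e.g.\ in Lemma~\ref{lem:cobipartiteTypeII}).
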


\begin{proof}
\begin{enumerate}[i)]
\item
Let $u,v$ be the endpoints of $S$. Let $T = \abs{E(S)}+1$ and $r_i$ (resp. $l_j$) be the endpoint of the path $P_i \in \pp_K$ (resp. $P_j \in \pp_{K'}$) among the internal vertices of $S$. Let $a_i = \abs{E(p_S(u,r_i))}$ (resp. $a_j = - \abs{E(p_S(l_j,v))} $) where $p_T(x,y)$ is the unique path between vertices $x$ and $y$ of a tree $T$. By definition, $\abs{a_i} \leq \abs{E(S)} < T$ for every $i \in K \cup K'$. Two paths $P_i \in \pp_K$, $P_j \in \pp_{K'}$ have an edge in common if and only if $\abs{a_i-a_j} \geq \abs{E(S)}+1 = T$. Therefore, $G_B$ is a difference graph.

\item
Fix an interval representation of $G$. For $X \in \set{K,K'}$ let $e_X$ be the edge of the representation that is common to all the paths $\pp_X$ representing the clique $X$.
We can assume without loss of generality that $e_K$ and $e_{K'}$ are the leftmost and rightmost edges of the representation.
We now subdivide $e_K$ and $e_{K'}$ by adding new vertices $v_K$ and $v_{K'}$ respectively.
Finally, if a path contains both $e_K$ and $e_{K'}$ we truncate one edge from its end so that it contains $v_K$ but not $v_{K'}$.
In the new representation, $\pp_K$ and $\pp_{K'}$ meet at the segment between $v_K$ and $v_{K'}$.
\end{enumerate}
\end{proof}

Two representations $\rep$ and $\repprime$ are \emph{bend-equivalent} if they are representations of the same graph $G$ and the paths $P_v \in \pp$ and $P_v' \in \pp'$ representing the same vertex $v$ of $G$ have the same number of bends. We proceed with the following lemma that classifies all the $\boneenpg$ representations of a co-bipartite graph into two types.

\newcommand{\reppr} {\left< H,\pp' \right>}

\begin{lemma}\label{lem:cobipartiteSegments}
Let $G=C(K,K',E)$ be a connected $\boneenpg$ co-bipartite graph with a representation $\rep$. Then $G$ has a bend-equivalent representation $\reppr$ that satisfies exactly one of the following
\begin{enumerate}[i)]
\item $\abs{ \cs(\cup \pp'_K, \cup \pp'_{K'}) }=1$ and $\cup \pp'$ is a tree with maximum degree at most $3$ with at most two vertices of degree $3$ as depicted in Figure~\ref{fig:B1ENPG-cobipartite-representations} (a).
\item $\abs{ \cs(\cup \pp'_K, \cup \pp'_{K'})}=2$ and the paths $\cup \pp'_K$ and $\cup \pp'_{K'}$ intersect as depicted in Figure~\ref{fig:B1ENPG-cobipartite-representations} (b).
\end{enumerate}
\end{lemma}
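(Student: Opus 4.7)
The plan is to apply Proposition~\ref{prop:BendsInClique} to each of the two cliques $K$ and $K'$: this gives that both $\cup \pp_K$ and $\cup \pp_{K'}$ are paths of $H$ with at most two bends, each containing a distinguished edge ($e_K$ and $e_{K'}$ respectively) that belongs to every path of the corresponding clique. Since $G$ is connected and both $K$ and $K'$ are non-empty, there is at least one edge of $G$ between $K$ and $K'$, hence at least one pair $(P_u, P_v) \in \pp_K \times \pp_{K'}$ with $P_u \sim P_v$; in particular $\cs(\cup \pp_K, \cup \pp_{K'}) \neq \emptyset$, so $\abs{\cs(\cup \pp_K, \cup \pp_{K'})} \geq 1$.

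The next step is a geometric case analysis on the shape of these two rectilinear paths. Each decomposes into at most three maximal axis-aligned straight sub-paths, and two straight sub-paths share an edge only if they lie on the same row (both horizontal) or the same column (both vertical). Tabulating the configurations by the number of bends of $\cup \pp_K$ and of $\cup \pp_{K'}$ (each in $\set{0,1,2}$), up to the grid symmetries and the swap $K \leftrightarrow K'$, one finds that every configuration either already produces one of the two shapes of Figure~\ref{fig:B1ENPG-cobipartite-representations}, or admits a local modification that does. For $\abs{\cs(\cup \pp_K, \cup \pp_{K'})}=1$, the union of two paths glued along a common sub-path is a tree whose only branch vertices are the two endpoints of the shared segment (giving at most two vertices of degree $3$ and maximum degree $3$), which is exactly case (i). For $\abs{\cs(\cup \pp_K, \cup \pp_{K'})}=2$, maximality of each segment forces the two segments to lie on lines of different orientations, so the two clique-union paths cross as in case (ii).

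The remaining configurations to eliminate are those where $\cup \pp_K$ and $\cup \pp_{K'}$ share three colinear matches (both $Z$-shaped with aligned rows/columns), or where two segments lie on parallel lines. For these I would construct a bend-equivalent $\pp'$ by sliding the endpoints of individual paths along the straight sub-paths of $\cup \pp_K$ or $\cup \pp_{K'}$; since every path of $\pp_K$ already contains $e_K$ (and likewise for $K'$), the opposite endpoint of each path may be shifted without disturbing clique adjacencies or the single-bend budget, allowing one of the extra intersection segments to be avoided.

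The principal obstacle is this rerouting step: each shift of an endpoint must preserve (i) the bend count of that individual path, (ii) all non-splitting intersections between $P_u \in \pp_K$ and $P_v \in \pp_{K'}$ with $uv \in E$, and (iii) all splittings for non-edges of $G$. The verification amounts to checking, case by case, that the endpoints of $\pp_K$-paths and $\pp_{K'}$-paths can be interleaved along the surviving segment(s) in the same order as in the original representation, so that the edge-intersection pattern between $K$ and $K'$ is preserved; this is where the bulk of the bookkeeping lies.
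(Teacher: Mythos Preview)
Your opening (apply Proposition~\ref{prop:BendsInClique} to each clique, use connectivity to get $\abs{\cs}\geq 1$) and your handling of the one-segment case agree with the paper. The error is in the two-segment case.

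You claim that for $\abs{\cs(\cup\pp_K,\cup\pp_{K'})}=2$ ``maximality of each segment forces the two segments to lie on lines of different orientations.'' The opposite is true. If $\cup\pp_K \cap \cup\pp_{K'}$ contained both a horizontal edge $e_H$ and a vertical edge $e_V$, then the unique one-bend path having $e_H$ and $e_V$ as end edges would have to be a sub-path of both $\cup\pp_K$ and $\cup\pp_{K'}$ (any other path joining $e_H$ to $e_V$ has at least three bends, while each clique union has at most two), and this would put $e_H$ and $e_V$ in the \emph{same} segment. Hence all intersection edges share one orientation, say vertical, and the two segments of case~(ii) are \emph{parallel} vertical sub-paths on two distinct columns---precisely the configuration you flag for elimination. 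Your list of ``remaining configurations'' is therefore inverted: two parallel segments \emph{is} Type~II, not a degeneracy to reroute away.

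Consequently the rerouting step is aimed at the wrong target. What actually needs handling is (a) ruling out $\abs{\cs}\geq 3$ by a bend count (two vertical segments on the same column, or three on distinct columns, each force $\geq 4$ bends in some clique path), and (b) the sub-case of two parallel segments where the bends $u,u'$ of the two clique paths lie on the same side of a segment $S_u$; there every cross-clique pair intersecting in $S_u$ splits at its common endpoint, so one simply deletes $S_u$ from every path of $\pp_{K'}$ and drops to a one-segment representation. This is a single global truncation of one clique's paths, not an endpoint-by-endpoint slide, and it sidesteps the interleaving bookkeeping you anticipate.
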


\begin{proof}
By Proposition~\ref{prop:BendsInClique}, $\cup \pp_K$ and $\cup \pp_{K'}$ are two paths with at most $2$ bends each. Let $e_K$ (resp. $e_{K'}$) be an arbitrary edge of $\cap \pp_K$ (resp. $\cap \pp_{K'}$). The paths $\cup \pp_K$ and $\cup \pp_{K'}$ intersect in at least one edge, because otherwise $G$ is not connected. Therefore, $\abs{\cs(\cup \pp_K, \cup \pp_{K'})} \geq 1$. We consider two disjoint cases:
\begin{itemize}
\item{$\abs{\cs(\cup \pp_K, \cup \pp_{K'})} = 1$.} Let $T=\cup \pp$ and $S$ be the unique segment of $\cs(\cup \pp_K, \cup \pp_{K'})$. The collection $T$ is clearly a tree, since any cycle in $T$ is a cycle in one of $\cup \pp_K, \cup \pp_{K'}$. Any vertex of degree at least $3$ in $T$ is an endpoint of $S$, therefore there are at most $2$ such vertices. On the other hand an endpoint of $S$ has degree at most $3$. Therefore $\Delta(T) \leq 3$ and there are at most $2$ vertices of degree $3$ in $T$.

Let $u$ and $v$ be the two endpoints of $S$. Let also $e_u, e_v$  (respectively $e'_u, e'_v$) be the (at most four) edges not in $S$ but belonging to $\cup \pp_K$ (respectively $\cup \pp'_K$) and incident respectively to $u$ and $v$. Now, shrink all the paths on the branches of $T$ starting with $e_u, e_v, e'_u, e'_v $ and not containing $S$ to respectively the edges $e_u, e_v, e'_u, e'_v $. Clearly, this transformation does not create or remove any split between two paths and does not remove any intersections since all paths intersecting on one of these branches now intersect on the shrunken edge. To maintain bend-equivalence, we add one more edge to every path that loses a bend during the shrinkage (the same edge for all paths of the same branch). This new representation $\reppr$ is bend-equivalent to $\rep$ and $T'=\cup \pp'$ is a tree with the claimed properties.

\item{$\abs{\cs(\cup \pp_K, \cup \pp_{K'})} \geq 2$.} We claim that $\cup \pp_K \cap \cup \pp_{K'}$ contains only horizontal edges, or only vertical edges. Indeed, assume that there is a vertical edge $e_V$ and a horizontal edge $e_H$ in $\cup \pp_K \cap \cup \pp_{K'}$. We observe that there is a unique one bend path with $e_V$ and $e_H$ its end edges, and that any other connecting these edges contains at least three bends. Therefore, both $\cup \pp_K$ and $\cup \pp_{K'}$ contain this path. We conclude that $e_V$ and $e_H$ are in the same segment. As any other edge is either horizontal or vertical, we can proceed similarly for all the edges of $\cup \pp_K \cap \cup \pp_{K'}$ and prove that they all belong to the same segment, contradicting the fact that we have at least $2$ segments. Assume without loss of generality that all the edges of $\cup \pp_K \cap \cup \pp_{K'}$ are vertical. Then every segment is a vertical path. No two segments can be on the same vertical line, because this will require at least one of $\cup \pp_K$, $\cup \pp_{K'}$ to contain four bends. Moreover, three vertical segments in distinct vertical lines imply that $\pp_K$ and $\pp_{K'}$ contain at least four bends each. Therefore, there are exactly $2$ vertical segments and $\pp_K$ (also $\pp_{K'}$) has exactly two bends.
\begin{figure}
\centering
\commentfig{\includegraphics{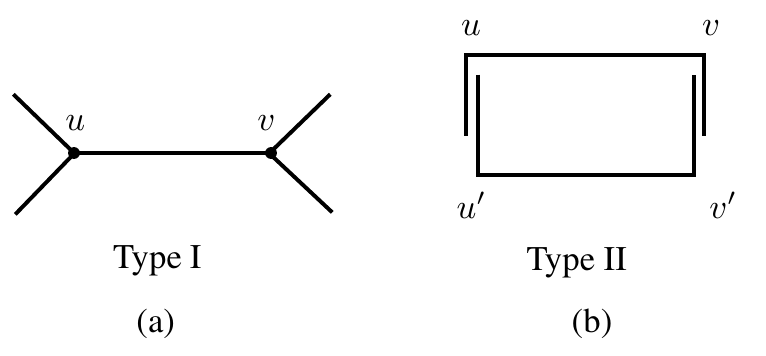}}
\caption{Two types of $\boneenpg$ representation of connected co-bipartite graphs: (a) Type I: $\abs{\cs(K,K')} = 1$, $\cup \pp$ is isomorphic to a tree $T$ with $\Delta(T) \leq 3$ and at most two vertices $u,v$ having degree $3$, (b) Type II: $\abs{\cs(K,K')} = 2$, $\pp_K$ (resp. $\pp_{K'}$) has exactly two bend points $u,v$ (resp. $u', v'$)}
\label{fig:B1ENPG-cobipartite-representations}
\end{figure}

Let $u,v$ (resp. $u',v'$) be the bends of $\cup \pp_K$ (resp. $\cup \pp_{K'}$). Then $\cs(\cup \pp_K, \cup \pp_{K'})$ where $S_u$ (resp. $S_v$) is on the same vertical line as $u$ and $u'$ (resp. $v$ and $v'$). Moreover $e_K$ (resp. $e_{K'}$) is between $u$ and $v$ (resp. $u'$ and $v'$) since otherwise we would have paths crossing both $u$ and $v$ (resp. $u'$ and $v'$) and thus 2 bends. If both the pairs $u,u'$ and $v, v'$ are on different sides of respectively $S_u$ and $S_v$ (as in Figure~\ref{fig:B1ENPG-cobipartite-representations} (b)) then let $H'=H$ and $\pp'=\pp$ be the desired representation. Now consider the situation where $u$ and $u'$ are on the same side of $S_u$.  Every path intersecting $S_u$ crosses the same endpoint of $S_u$ say without loss of generality $u$, implying that if a pair of paths from distinct cliques intersect at $S_u$, they split at this endpoint. Then remove $S_u$ from every path of $\pp_{K'}$ to obtain a bend-equivalent representation that contains one segment. The resulting representation can be transformed into a bend-equivalent representation $\reppr$ as described the previous bullet.
\end{itemize}
\end{proof}

Based on Lemma~\ref{lem:cobipartiteSegments}, a $\boneenpg$ representation of a connected co-bipartite graph $G=C(K, K',E)$ is \emph{Type I} (resp. \emph{Type II}) if $\abs{\cs(K,K')} = 1$ (resp. $\abs{\cs(K,K')} = 2$).

We proceed with the characterization of $\boneenpg$ graphs having a Type II representation that turns out to be simpler than the characterization of the others. In the following lemma, a \emph{trivial} connected component is an isolated vertex.

\begin{lemma}\label{lem:cobipartiteTypeII}
A connected twin-free co-bipartite graph $G=C(K,K',E)$ has a Type II $\boneenpg$ representation if and only if the bipartite graph $G_B=B(K,K',E)$ contains at most two non-trivial connected components each of which is a difference graph.
\end{lemma}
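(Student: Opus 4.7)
The plan is to prove both directions by exploiting the rigid geometry of one-bend paths meeting at a segment.

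For the forward direction, I will first invoke Lemma~\ref{lem:cobipartiteSegments} to obtain that in a Type II representation $\cup \pp_K$ has exactly two bends $u,v$, $\cup \pp_{K'}$ has exactly two bends $u',v'$, and the two segments $S_u,S_v$ of $\cs(\cup \pp_K, \cup \pp_{K'})$ lie on the vertical columns through $\{u,u'\}$ and $\{v,v'\}$. Since every path has at most one bend, $\pp_K$ partitions into paths bending at $u$, paths bending at $v$, and unbent paths, and likewise $\pp_{K'}$. A short geometric argument shows that two paths from different cliques can share an edge non-splittingly only if both bend at the endpoints of the same segment: paths bending on different columns have disjoint vertical parts, and an unbent path lies on a horizontal row distinct from the other clique's. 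Therefore $E(G_B)$ is the disjoint union of the edge sets of two bipartite subgraphs $H_u, H_v$ on disjoint vertex sets, each a meeting of two path sets at a segment. By Lemma~\ref{lem:meet2difference}(i) each of $H_u, H_v$ is a difference graph, and by Theorem~\ref{theorem:difference-2K2free} a difference graph is $2K_2$-free and hence has at most one non-trivial connected component. Summing over the two segments gives at most two non-trivial components in $G_B$, each a difference graph.

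For the backward direction, given the non-trivial difference components $C_1$ and possibly $C_2$ of $G_B$, I will lay out a grid with two vertical segments $S_u$ on column $x=0$ and $S_v$ on column $x=\ell$, a horizontal edge $e_K$ on a row $y_1$ strictly between the two columns, and a parallel edge $e_{K'}$ on a lower row $y_2$. I will apply Lemma~\ref{lem:difference2meet} to realize each $C_i$ by one-bend paths meeting at one of the two segments, and then extend every $K$-side path along row $y_1$ so that it contains $e_K$ with its single bend at $u=(0,y_1)$ or $v=(\ell,y_1)$, and every $K'$-side path along row $y_2$ so that it contains $e_{K'}$ with its bend at $u'$ or $v'$. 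Isolated vertices of $G_B$ in either clique become unbent horizontal paths through $e_K$ or $e_{K'}$.

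The verification is routine: by Proposition~\ref{prop:BendsInClique} each side forms a clique because all its paths share the common horizontal edge and pairwise unions have no degree-$3$ vertex; adjacencies across the bipartition inside each segment match those of the corresponding difference graph by Lemma~\ref{lem:difference2meet}; paths associated with different segments use disjoint grid columns for their vertical parts and disjoint rows for their horizontal parts, so they share no edge; and both segments contribute to $\cs(\cup \pp_K, \cup \pp_{K'})$, confirming Type II. The main technical obstacle will be the single-component case, where the second meeting segment must still exist to keep the representation Type II but should realize no edge of $G_B$. I intend to handle this by using an isolated vertex of $G_B$ in each clique as a decoy, drawing its path so that on the otherwise unused segment it always splits from every path of the opposite clique, via the splitting trick in the final paragraph of the proof of Lemma~\ref{lem:cobipartiteSegments} (extending a path past the bend of the other creates a degree-$3$ vertex). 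The twin-freeness hypothesis will be leveraged both to control the supply of such decoys and to ensure the construction does not create new twins.
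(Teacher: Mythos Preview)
Your forward direction is correct and is essentially the paper's argument: split the one-bend paths of each clique according to which segment they enter, observe that the two sets at each segment meet there, apply Lemma~\ref{lem:meet2difference}(i), and then use $2K_2$-freeness (Theorem~\ref{theorem:difference-2K2free}) to cap the number of non-trivial components at two. The paper leaves that last step implicit but is doing the same thing.

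Your backward direction has a real gap. The decoy plan needs an isolated vertex of $G_B$ in \emph{each} of $K$ and $K'$, and twin-freeness only bounds their number from above, not below. Take $K=\{a_0,a_1,a_2\}$, $K'=\{b_1,b_2\}$ with $E=\{a_1b_1,\,a_2b_1,\,a_2b_2\}$: the co-bipartite graph $G=C(K,K',E)$ is connected and twin-free, and $G_B$ has a single non-trivial component (a $P_4$, hence a difference graph) together with the isolated vertex $a_0\in K$, but no isolated vertex in $K'$. In any Type~II representation $\cup\pp_{K'}$ must have two bends (by the analysis in Lemma~\ref{lem:cobipartiteSegments}), so $P_{b_1}$ and $P_{b_2}$ bend on distinct vertical lines; then no one-bend path $P_{a_2}$, confined to the horizontal row of $\cup\pp_K$ and at most one of those vertical lines, can share an edge with both of them, contradicting $a_2b_1,a_2b_2\in E$. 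Hence this $G$ has \emph{no} Type~II representation, and no decoy arrangement can manufacture one. The paper's own proof does not resolve this: it asserts ``it is sufficient to construct a Type II representation for the maximal case'' and only treats two non-trivial components together with two isolated vertices, so you are not missing an argument the paper supplies---you have located a genuine soft spot in the lemma as stated. (The defect is harmless for Theorem~\ref{thm:cobipartiteEnpgRecognition}, since any such $G$ satisfies the hypotheses of Lemma~\ref{lem:cobipartiteTypeI} with $Z=\emptyset$.)
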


\begin{proof}
$(\Rightarrow)$
Let $\rep$ be a Type II $\boneenpg$ representation of $G$ and $u,v$ (resp. $u', v'$) be the bends of $\cup \pp$ (resp. $\cup \pp'$) as depicted in Figure~\ref{fig:B1ENPG-cobipartite-representations} b). For $x \in \set{u,v}$, let $S_x$ be the segment contained in the path between $x$ and $x'$. The paths of $\pp$ not intersecting with any of $S_u,S_v$ correspond to isolated vertices of $G_B$. Since $G$ is twin-free, there is at most one such path in $\pp_K$ (resp. $\pp_{K'}$).

Each one of the remaining paths intersects exactly one of $S_u$, $S_v$, as otherwise such a path would contain two bends. For $X \in \set{K,K'}$ and $x \in \set{u,v}$ let $\pp_{X_x}$ be the paths of $\pp_X$ intersecting $S_x$. Then $\pp_{K_x}$ and $\pp_{K'_x}$ meet at $S_x$. By Lemma~\ref{lem:meet2difference}, $G_B[K_x \cup K'_x]$ is a difference graph.

$(\Leftarrow)$
It is sufficient to construct a Type II representation for the maximal case, i.e. $G_B$ contains exactly two trivial connected components and two non-trivial connected components. Let $w \in K$ and $w' \in K'$ be the trivial connected components and $B(K_u,K'_u,E_u)$,$B(K_v,K'_v,E_v)$ be the non-trivial connected components of $G_B$.  We construct a rectangle as depicted in Figure~\ref{fig:B1ENPG-cobipartite-representations} b) having vertical lines with $\max(\min(\abs{K_u}, \abs{K'_u}),\min(\abs{K_v}, \abs{K'_v}))+2$ edges, and horizontal lines with one edge $e_K=\set{u,v}$ and $e_{K'}=\set{u',v'}$. For $X \in \set{K,K'}$, and $x \in \set{u,v}$ the paths $\pp_{X_x}$ start with $e_X$ and enter segment $S_x$. The other endpoints of the paths will be in the segment $S_x$. Then, for $x \in \set{u,v}$, $\pp_{K_x}$ and $\pp_{K'_x}$ meet at $S_x$. Since $B(K_x,K'_x,E_x)$ is a difference graph, by Lemma~\ref{lem:difference2meet}, the endpoints can be determined such that $\pp_{K_x} \cup \pp_{K'_x}$ is a representation of $B(K_x,K'_x,E_x)$. The path $P_w$ (resp. $P_{w'}$) consists of the edge $e_K$ (resp. $e_{K'}$). It is easy to verify that this is a representation of $G$.
\end{proof}

We proceed with the characterization of the $\boneenpg$ graphs with a Type I representation. For this purpose we resort to the following definitions from \cite{FHMV04}.

Let $G=B(V,V',E)$ be a bipartite graph and $M \subseteq V \cup V'$. A vertex $v \in V \setminus M$ (resp. $v \in V' \setminus M$) \emph{distinguishes} $M$ if it has a neighbour in $M \cap V'$ (resp. $M \cap V$) and a non-neighbour in $M \cap V'$ (resp. $M \cap V$). A non-empty subset $M$ of $V \cup V'$ is a \emph{bimodule} of $G$ if no vertex distinguishes $M$. It follows from the definition that $V \cup V'$ is a bimodule of $G$, and so are all the singletons and all the pairs of vertices with exactly one from $V$. These bimodules are the \emph{trivial} bimodules of $G$.

A \emph{zed} is a graph isomorphic to a $P_4$ or any induced subgraph of it. We note that a trivial bimodule different from $V \cup V'$ is a zed.

\begin{lemma}\label{lem:cobipartiteTypeI}
A connected twin-free co-bipartite graph $G=C(K,K',E)$ has a Type I $\boneenpg$ representation if and only if there is a set of vertices $Z$ of $G$ such that
\begin{enumerate}[i)]
\item{}\label{item:CoBipartiteTypeI-1} $Z$ is a zed of $G$,
\item{}\label{item:CoBipartiteTypeI-2} $Z$ is a bimodule of $G_B=B(K,K',E)$, and
\item{}\label{item:CoBipartiteTypeI-3} $G_B \setminus Z$ is a difference graph.
\end{enumerate}

Moreover, if $Z$ is a set of vertices of minimum size that satisfies \ref{item:CoBipartiteTypeI-1})-\ref{item:CoBipartiteTypeI-3}) and $Z$ is a set of two non-adjacent vertices of $G$, then for the unique segment $S$ of $\cs(\cup K,\cup K')$ the following hold in every representation $\rep$:
\begin{enumerate}[a)]
\item{}\label{item:CoBipartiteTypeI-A} $S$ is contained in at least one of the paths of $\pp_Z$,
\item{}\label{item:CoBipartiteTypeI-B} the endpoints of $S$ have degree $3$ in $\cup \pp$ and these endpoints constitute $\split(\cup \pp_K,\cup \pp_{K'})$.
\end{enumerate}

\end{lemma}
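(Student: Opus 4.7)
The plan is to prove $(\Rightarrow)$, $(\Leftarrow)$, and the ``Moreover'' clause separately, using Lemma~\ref{lem:cobipartiteSegments} to normalize representations and Lemmas~\ref{lem:difference2meet} and~\ref{lem:meet2difference} to translate between difference graphs and paths meeting at a segment. For $(\Rightarrow)$, I would first apply Lemma~\ref{lem:cobipartiteSegments} to the given Type I representation $\rep$ to assume that $\cup\pp$ is a tree consisting of a central segment $S$ between endpoints $u,v$ and at most four single-edge arms, one per clique at each endpoint. Classify each path by its bend location (no bend, at $u$, or at $v$), and observe the key structural fact that a $K$-path and a $K'$-path bending at the same endpoint $c\in\{u,v\}$ must split at $c$, because both arms at $c$ appear in their union. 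The candidate set $Z$ would then be built by carefully selecting representative bending paths, one from each clique at each corner in use, aiming for $|Z|\le 4$ and ensuring that the within-$Z$ adjacency pattern (clique adjacencies on each side combined with the forced non-adjacencies at corners) forms an induced subgraph of $P_4$. Verification that $Z$ is a bimodule of $G_B$ would rely on the twin-free assumption together with the chain-ordered pattern of endpoints on $S$; and the fact that $G_B\setminus Z$ is a difference graph would follow because, with $Z$ absorbing all corner-bending paths that cause splits between cliques, the remaining paths meet at $S$ after a minor truncation and Lemma~\ref{lem:meet2difference}(i) applies.

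For $(\Leftarrow)$, I would construct a representation from $Z$ as follows. Use Lemma~\ref{lem:difference2meet} to realize $\pp_{K\setminus Z_K}$ and $\pp_{K'\setminus Z_{K'}}$ as one-bend paths meeting at a sufficiently long segment $S$ between grid points $u$ and $v$. For each $z\in Z$, add a one-bend path using one of up to four arms at $u$ or $v$; the zed structure of $G[Z]$ (at most four vertices, an induced subgraph of $P_4$) dictates how the arms are assigned to the vertices of $Z$ so that the required within-$Z$ adjacencies are realized as non-splits (for same-clique pairs, via the clique's shared edge) and the required non-adjacencies as splits (for opposite-clique pairs that share a corner arm). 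The bimodule condition guarantees that each $z$'s adjacency pattern with $V(G)\setminus Z$ is either entirely realized by $S$-intersection or entirely absent, so the extent of $z$'s path along $S$ is uniquely determined. The resulting representation has $|\cs(\cup\pp_K,\cup\pp_{K'})|=1$ along $S$, hence is of Type I.

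For the ``Moreover'' clause, assume $Z=\{z_K,z_{K'}\}$ is of minimum size with $z_K\in K$, $z_{K'}\in K'$ and $z_K\not\sim z_{K'}$ in $G$. In any Type I representation $\rep$, the non-adjacency $z_K\not\sim z_{K'}$ forces $P_{z_K}$ and $P_{z_{K'}}$ to split; since both paths contain an edge of $S$ (namely $e_K,e_{K'}$), this split cannot occur in the interior of $S$ and so must occur at an endpoint of $S$, giving that endpoint degree at least $3$ in $\cup\pp$. If only one endpoint of $S$ hosted such a split, then a singleton $Z'$ containing just the responsible bending path would suffice to make $G_B\setminus Z'$ a difference graph, contradicting the minimality of $|Z|=2$; thus both endpoints of $S$ have degree $3$ and realize $\split(\cup\pp_K,\cup\pp_{K'})$, yielding (b), and since the two paths of $\pp_Z$ must jointly effect both corner splits, at least one of them must span all of $S$, yielding (a). The main obstacle across the proof is the forward direction's careful selection of $Z$: it must have $|Z|\le 4$, form a zed in $G$ (in particular avoiding configurations that would induce $C_4$ in $G[Z]$), be a bimodule of $G_B$, and make $G_B\setminus Z$ a difference graph; handling multiple bending paths on a single arm relies on twin-freeness forcing chain-ordered endpoints on $S$, so that a single per-arm representative suffices.
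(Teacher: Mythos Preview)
Your overall strategy matches the paper's, but the $(\Rightarrow)$ direction has a real gap in how $Z$ is chosen. The paper does \emph{not} pick ``representatives'' among possibly many bending paths. Instead it first splits into two cases according to whether $\{e_K,e_{K'}\}\subseteq E(S)$, and then classifies paths by which \emph{endpoints of $S$ they cross as internal vertices} (not by bend location). In the main case $\{e_K,e_{K'}\}\subseteq E(S)$, twin-freeness is used to show directly that there is \emph{at most one} path of each of the four relevant types ($P^{u}_{K'}$, $P^{v}_{K}$, $P^{uv}_{K'}$, $P^{uv}_{K}$), so $|Z|\le 4$ with no selection involved; in the other case one gets $|Z|\le 1$. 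Your proposed mechanism---``chain-ordered endpoints, so a single per-arm representative suffices''---does not do the required job: if several $K'$-paths crossed $u$, removing only one of them would still leave paths that split from $K$-paths at $u$, so the remaining paths would \emph{not} meet at $S$ and Lemma~\ref{lem:meet2difference} could not be invoked to conclude that $G_B\setminus Z$ is a difference graph. The point is that twin-freeness forces uniqueness of each type, not merely an ordering that lets you pick a representative.

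Two smaller issues. In $(\Leftarrow)$, your reading of the bimodule condition is off: it does not say that each $z\in Z$ is either adjacent to all of $V(G)\setminus Z$ or to none; it says that no vertex outside $Z$ distinguishes two vertices of $Z$ lying on the same side of the bipartition. The paper exploits exactly this: since $x$ and $y$ (the two $K$-vertices of the zed) have identical neighbourhoods in $K'\setminus Z$, the ``outer'' endpoint of each $P_{w'}$ with $w'\in K'\setminus Z$ can be set using $N_G(y)$ alone, and symmetrically with $N_G(y')$ for $w\in K\setminus Z$. In the ``Moreover'' clause, your argument for (b) conflates a split between the two specific paths $P_{z_K},P_{z_{K'}}$ with the degree-$3$ structure of $\cup\pp$; the paper instead argues that if one of the two $Z$-paths fails to contain $S$ then the corresponding clique has no path crossing the far endpoint, so removing a single vertex already yields a path for $\cup(\pp\setminus\{\cdot\})$, contradicting minimality of $|Z|=2$.
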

\begin{proof}
$(\Rightarrow)$
Let $\rep$ be a Type I $\boneenpg$ representation of $G$. By Lemma~\ref{lem:cobipartiteSegments}, $\abs{\cs(K,K')}=1$ and $\cup \pp$ is a tree. Let $u,v$ be the endpoints of the unique segment $S$ of $\cs(K,K')$. We consider the following disjoint cases:

\begin{itemize}
\item {$\set{e_K, e_{K'}} \nsubseteq E(S)$:} Without loss of generality, suppose that $e_K \notin E(S)$ and $u$ is closer to $e_K$ than $v$. Consider two paths $P_{x'},P_{y'} \in \pp_{K'}$ that cross $u$. We observe that these paths are indistinguishable by the paths of $\pp_K$. Namely, every path of $\pp_K$ either does not intersect any one of $P_{x'},P_{y'}$, or intersects both and splits from both at $u$. Therefore the corresponding vertices $x', y'$ are twins. As $G$ is twin-free we conclude that there is at most one path of $\pp_{K'}$ that crosses $u$. Similarly, consider two paths $P_x,P_y \in \pp_K$ that cross $v$. These paths cross also $u$ since $e_K$ is an edge of both paths. Therefore, every path of $\pp_{K'}$ either does not intersect any one of $P_x,P_y$, or intersects both and splits from both at either $u$ of $v$, or intersects both and does not split from any of them. We conclude that there is at most one path of $\pp_K$ that crosses $v$. Let $\pp_{Z'}$ be a set of these at most two paths. Namely, $\pp_{Z'}$ consists of all the paths of $\pp_{K'}$ crossing $u$ and all the paths of $\pp_K$ that cross $v$. We now observe that $\cup (\pp \setminus \pp_{Z'})$ is either a path or the union of two edge-disjoint paths. In both cases no two paths split from each other, and their adjacency is determined only by the intersections. Therefore, the resulting graph $G \setminus Z'$ is an interval graph implying that $G_B \setminus Z'$ is a difference graph. We note that the path $P_{x'} \in \pp_{K'}$ that crosses $u$ is an isolated vertex of $G_B$, therefore for $Z = Z' \setminus \set{x'}$ we have that $G_B \setminus Z$ is a difference graph too, i.e. $Z$ satisfies \ref{item:CoBipartiteTypeI-3}). Since $\abs{Z} \leq 1$, $Z$ satisfies \ref{item:CoBipartiteTypeI-1}) and \ref{item:CoBipartiteTypeI-2}) trivially. This completes the proof of the first part of the claim. As for the second part, since $\abs{Z} \leq 1$, any set of minimum size satisfying the conditions has at most one vertex. Therefore, the second part of the claim holds vacuously.

\begin{figure}
\centering
\commentfig{\includegraphics{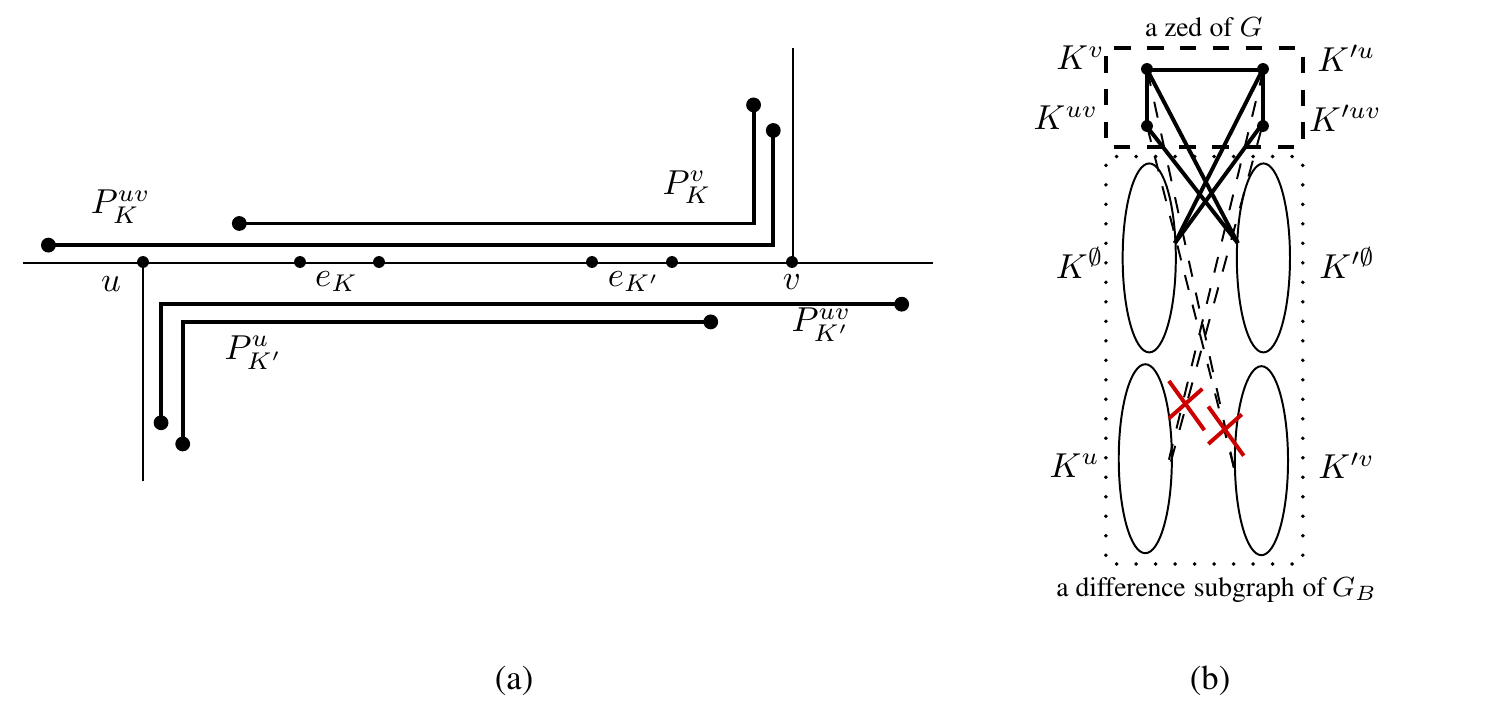}}
\caption{(a) Four special paths corresponding to a zed (b) The type of vertices and edge relations of a $\boneenpg$ co-bipartite graph having a Type I representation. $K^{\emptyset}$ (resp. $K'^{\emptyset}$) is the set of vertices corresponding to the paths of $\pp_K$ (resp. $\pp_{K'}$) crossing neither $u$ nor $v$.}
\label{fig:B1ENPG-cobipartite-TypeI}
\end{figure}

\item {$\set{e_K, e_{K'}} \subseteq E(S)$:} We first note that we can assume $e_K \neq e_{K'}$ since otherwise we can subdivide this edge into two and rename the new edges as $e_K$ and $e_{K'}$. Assume without loss of generality that $e_{K}$ is closer to $u$ than $e_{K'}$, (see Figure~\ref{fig:B1ENPG-cobipartite-TypeI}). Consider two paths $P_{x'},P_{y'} \in \pp_{K'}$ that cross $u$ but not $v$. We observe that these paths are indistinguishable by the paths of $\pp_K$. Therefore, the corresponding vertices are twins. As $G$ is twin-free we conclude that there is at most one path $P_{K'}^u$ of $\pp_{K'}$ that crosses $u$ and does not cross $v$. Similarly there is at most one path $P_K^v$ of $\pp_K$ that crosses $v$ but does not cross $u$, at most one path $P_{K'}^{u,v}$ of $\pp_{K'}$ that crosses both $u$ and $v$, and at most one path $P_K^{u,v}$ of $\pp_K$ that crosses both $u$ and $v$. Let $\pp_Z$ be the set of these at most four paths. As in the previous case, $\cup (\pp \setminus \pp_Z)$  is a path, thus $G_B \setminus Z$ is a difference graph, i.e. $Z$ satisfies \ref{item:CoBipartiteTypeI-3}). Assuming that all the four paths exist, it is easy to verify that their corresponding vertices $K^v,K'^u,K^{u,v},K'^{u,v}$ constitute a $P_4$ with endpoints $K^{u,v},K'^{u,v}$. Therefore, $Z$ is a zed, i.e. $Z$ satisfies \ref{item:CoBipartiteTypeI-1}). Finally, we observe that $P_K^v$ and $P_K^{u,v}$ are distinguishable only by $P_{K'}^u \in \pp_Z$. In other words, they are indistinguishable by paths from $\pp_{K'} \setminus \pp_Z$. By symmetry, we conclude that $Z$ is a bimodule of $G_B$, i.e. it satisfies \ref{item:CoBipartiteTypeI-2}). This concludes the proof of the first part of the claim. To prove the second part, assume by contradiction that there is a minimal set $Z$ satisfying \ref{item:CoBipartiteTypeI-1})- \ref{item:CoBipartiteTypeI-3}) consisting of two vertices and none of the corresponding paths contains the segment $S$. Then these paths are $P^u_{K'}$ and $P^v_K$. We now observe that $P^u_{K'} \sim P^v_K$, i.e. $K^v$ and $K'^u$ are adjacent in $G$, contradicting the assumption that the vertices of $Z$ are non-adjacent in $G$. This concludes the proof of \ref{item:CoBipartiteTypeI-A}). If both paths contain $S$, then these paths are $P^{uv}_K$ and $P^{uv}_{K'}$ and we have $\split(\cup \pp_K,\cup \pp_{K'}) \supseteq \split(P^{uv}_K ,P^{uv}_{K'}) = \set{u,v}$, proving \ref{item:CoBipartiteTypeI-B}) for this case. Otherwise, one of the paths does not contain $S$. Let, without loss of generality this path be $P^u_{K'}$. Then no path of $\pp_{K'}$ crosses $v$. We conclude that $\cup (\pp \setminus \set{P^u_{K'}})$ is a path, implying that the corresponding vertices induce a difference graph on $G_B$, contradicting the assumption that $Z$ is a minimal set satisfying \ref{item:CoBipartiteTypeI-1})-\ref{item:CoBipartiteTypeI-3}).
\end{itemize}
$(\Leftarrow)$
Given a zed $Z$ of $G$ satisfying the conditions of the lemma, we construct a Type I representation $\rep$ as follows. Without loss of generality we assume that $Z$ is a $P_4$ with endpoints $y \in K, y' \in K'$ and internal vertices $x \in K, x' \in K'$. Let $\ell=\min (\abs{K},\abs{K'})+2$. The graph $H$ is a 3 by $\ell+3$ vertices grid where each vertex is represented by an ordered pair from $[-1,\ell+1] \times [-1,1]$. The path $P_x$ (resp. $P_y$) is between $(0,0)$ (resp. $(-1,0)$) and $(\ell,1)$ with a bend at $(\ell,0)$. The path $P_{x'}$ (resp. $P_{y'}$) is between $(\ell,0)$ (resp. $(\ell+1,0)$) and $(0,-1)$ with a bend at $(0,0)$. It is easy to verify that this correctly represents $Z$. The representation of the difference graph $G_B \setminus Z$ is two sets of paths that meet at the line segment between $(0,0)$ and $(\ell,0)$. By Lemma~\ref{lem:difference2meet}, the endpoints of the paths within this segment can be determined in accordance with the difference graph $G_B \setminus Z$. The other endpoints of these paths are determined so as to satisfy the adjacencies of vertices of $Z$ with other vertices, as follows: The other endpoint of every path of $\pp_{K' \cap N_G(y)}$ (resp. $\pp_{K' \setminus N_G(y)}$) is $(\ell,0)$ (resp. $(\ell+1,0)$). The other endpoint of every path of $\pp_{K \cap N_G(y')}$ (resp. $\pp_{K \setminus N_G(y')}$) is $(0,0)$ (resp. $(-1,0)$).
\end{proof}

By Lemmata~\ref{lem:cobipartiteTypeII} and \ref{lem:cobipartiteTypeI} we have the following Theorem.

\begin{theorem}\label{thm:cobipartiteEnpgRecognition}
Let $G=C(K,K',E)$ be a connected, twin-free co-bipartite graph, and $G_B=B(K,K',E)$. Then, $G \in \boneenpg$ if and only if at least one of the following holds:
\begin{enumerate}[i)]
\item $G_B$ contains at most two non-trivial connected components each of which is a difference graph.
\item $G$ contains a zed $Z$ that is a bimodule of $G_B$ such that $G_B \setminus Z$ is a difference graph.
\end{enumerate}
\end{theorem}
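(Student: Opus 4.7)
The plan is to obtain this theorem as a direct corollary of the two characterization lemmata already established, using Lemma~\ref{lem:cobipartiteSegments} as the bridge that guarantees every $\boneenpg$ representation falls into exactly one of the two canonical forms (Type I or Type II). So the proof is essentially a case split on the type of a chosen representation, and no new combinatorial argument should be required.

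For the forward direction, suppose $G$ is $\boneenpg$ and fix any $\boneenpg$ representation $\rep$. By Lemma~\ref{lem:cobipartiteSegments}, $G$ admits a bend-equivalent representation that is either Type I or Type II. If it is Type II, then Lemma~\ref{lem:cobipartiteTypeII} immediately yields condition~(i): $G_B$ has at most two non-trivial components each of which is a difference graph. If it is Type I, then Lemma~\ref{lem:cobipartiteTypeI} furnishes a zed $Z$ that is a bimodule of $G_B$ with $G_B\setminus Z$ a difference graph, which is condition~(ii). Either way at least one of the two conditions holds.

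For the backward direction, assume at least one of (i) or (ii) holds. If (i) holds, Lemma~\ref{lem:cobipartiteTypeII} constructs a Type II $\boneenpg$ representation of $G$ directly. If (ii) holds, Lemma~\ref{lem:cobipartiteTypeI} constructs a Type I $\boneenpg$ representation. In both cases $G\in\boneenpg$, completing the equivalence.

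There is essentially no obstacle beyond invoking the two lemmata; the only mild subtlety is noting that Lemma~\ref{lem:cobipartiteSegments} legitimately lets us restrict attention to the two canonical representation types without loss of generality (the bend-equivalence guarantees we stay inside $\boneenpg$), so the forward direction is indeed exhaustive. The twin-free and connectedness hypotheses are exactly the ones required by both Lemma~\ref{lem:cobipartiteTypeII} and Lemma~\ref{lem:cobipartiteTypeI}, so no additional preprocessing is needed and the proof reduces to two short citations followed by the case split described above.
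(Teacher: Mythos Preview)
Your proposal is correct and matches the paper's own treatment: the paper states the theorem immediately after the two characterization lemmata with the single sentence ``By Lemmata~\ref{lem:cobipartiteTypeII} and \ref{lem:cobipartiteTypeI} we have the following Theorem,'' and your write-up simply spells out the implicit case split, correctly invoking Lemma~\ref{lem:cobipartiteSegments} to justify that the forward direction is exhaustive.
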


Since all the properties mentioned in Theorem~\ref{thm:cobipartiteEnpgRecognition} can be tested in polynomial time we have the following corollary.

\begin{corollary}
$\boneenpg$ co-bipartite graphs can be recognized in polynomial time.
\end{corollary}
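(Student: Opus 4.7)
By Observations~\ref{obs:twin} and \ref{obs:connected}, we may preprocess the input graph $G$ in polynomial time: compute its connected components, and within each component iteratively remove one vertex from each pair of twins until no twins remain. Each such reduction step preserves membership in $\boneenpg$, so $G$ is $\boneenpg$ if and only if every resulting connected twin-free component is $\boneenpg$. Hence it suffices to give a polynomial-time procedure that decides, for a given connected twin-free co-bipartite graph $G = C(K,K',E)$, whether $G \in \boneenpg$. We also need a partition of $V(G)$ into two cliques $K,K'$; such a partition can be obtained in polynomial time by computing the complement $\overline{G}$ and checking that it is bipartite (e.g., by 2-coloring each component).

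With such a partition in hand, we apply Theorem~\ref{thm:cobipartiteEnpgRecognition} and test its two conditions separately. To test condition (i), we compute the connected components of the bipartite graph $G_B = B(K,K',E)$ in linear time. If there are more than two non-trivial components we reject; otherwise we verify, for each non-trivial component $C$, that $C$ is a difference graph. By Theorem~\ref{theorem:difference-2K2free} this amounts to checking that $C$ is bipartite (already known) and $2K_2$-free, and the absence of an induced $2K_2$ can be decided by brute-force enumeration of edge pairs in time $O(|E|^2 \cdot n)$, or faster using known algorithms. Either way the test runs in polynomial time.

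To test condition (ii), the crucial observation is that a zed, being $P_4$ or any induced subgraph of it, has at most four vertices. Thus the number of candidate sets $Z \subseteq V(G)$ is $O(n^4)$, and we can enumerate all of them. For each candidate $Z$ we (a) check that $G[Z]$ is isomorphic to one of the finitely many zeds; (b) check that $Z$ is a bimodule of $G_B$, which by definition requires only verifying, for every $v \in (V(G)\setminus Z)$, that $v$ is not adjacent in $G_B$ to some vertex of $Z$ on the opposite side while being non-adjacent to another, i.e.\ inspecting $N_{G_B}(v) \cap Z$ for uniformity, in $O(n)$ time per vertex; and (c) check, as in the previous paragraph, that $G_B \setminus Z$ is a difference graph. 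All three subtests are polynomial, so the overall procedure runs in polynomial time. By Theorem~\ref{thm:cobipartiteEnpgRecognition} it correctly decides whether $G$ is $\boneenpg$, completing the proof. The only mildly delicate step is the bimodule verification, but it reduces to checking a constant-size predicate at each external vertex and thus presents no real obstacle.
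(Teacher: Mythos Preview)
Your proof is correct and follows essentially the same approach as the paper, which simply observes that all the properties mentioned in Theorem~\ref{thm:cobipartiteEnpgRecognition} can be tested in polynomial time. You supply the details the paper omits: in particular, the key observation that a zed has at most four vertices, so there are only $O(n^4)$ candidate sets $Z$ to enumerate for condition~(ii), and each candidate can be checked (zed, bimodule of $G_B$, $G_B\setminus Z$ a difference graph) in polynomial time.
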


\subsection{Efficient Recognition Algorithm}\label{subsec:efficient-recognition}
In this section we describe an efficient algorithm, namely Algorithm~\ref{alg:cobipartiteRecognition}, to recognize whether a co-bipartite graph is $\boneenpg$ using the characterization of Theorem~\ref{thm:cobipartiteEnpgRecognition}. In Algorithm~\ref{alg:cobipartiteRecognition},  \textsc{isTypeI} is a function taking as input a connected twin-free cobipartite graph and a subset $Z$ of vertices to decide if there is $Z'\supseteq Z$ for the graph being $\boneenpg$ of Type I. Similarly, \textsc{isTypeII} takes a connected twin-free cobipartite graph $G$ and returns "YES" if $G$ is  $\boneenpg$ of Type II, and "NO" otherwise. As for function \textsc{FindBimoduleZed}, it takes a twin-free cobipartite graph $G$ and a $Z$ of $G$ to return the minimum superset of $Z$ that is a zed of $G$ and a bimodule of $G_B$, if any. Lastly, the function \textsc{isDifference} in Algorithm~\ref{alg:cobipartiteRecognition} takes a bipartite graph $G$ and either indicates that $G$ is a difference graph or provides a $2K_2$ certifying that $G$ is not a difference graph.

\begin{theorem}\label{thm:cobipartiteEnpgRecognition-timecomplexity}
Given a co-bipartite graph $G=C(K,K',E)$, Algorithm~\ref{alg:cobipartiteRecognition} decides  in time $O(\abs{K}+\abs{K'}+\abs{E})$ whether $G$ is $\boneenpg$.
\end{theorem}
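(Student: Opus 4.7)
The plan is to inherit correctness from Theorem~\ref{thm:cobipartiteEnpgRecognition} and then audit Algorithm~\ref{alg:cobipartiteRecognition} step by step to show every component fits inside an $O(|K|+|K'|+|E|)$ budget. First I would handle the preprocessing: by Observations~\ref{obs:twin} and~\ref{obs:connected} we may restrict attention to a connected, twin-free instance, and both reductions are standard linear-time routines (twins via bucket-sorting of sorted neighbourhood lists, components via BFS on $G_B$).

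For the Type~II branch, \textsc{isTypeII} computes the connected components of $G_B$, accepts only if at most two are non-trivial, and calls \textsc{isDifference} on each. By Theorem~\ref{theorem:difference-2K2free}, \textsc{isDifference} amounts to testing $2K_2$-freeness of a bipartite graph; I would implement it via the degree-sequence characterization of Theorem~\ref{theorem:difference-degrees}, bucket-sorting the two sides by degree in $O(n+m)$ and sweeping the adjacency lists to check the $i+j>t$ condition, producing a $2K_2$ certificate in constant additional time whenever the test fails. This yields a linear-time Type~II test.

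For the Type~I branch, the key observation is that a zed has at most four vertices, so the candidate set $Z$ is of constant size throughout. I would organize \textsc{isTypeI} as an outer loop that alternates \textsc{FindBimoduleZed}$(G,Z)$, which enlarges $Z$ to its smallest superset that is both a zed of $G$ and a bimodule of $G_B$ (computed by repeatedly scanning, for each current $z\in Z$, the neighbourhoods of vertices in $K\cup K'\setminus Z$ to detect distinguishers), with \textsc{isDifference}$(G_B\setminus Z)$. Each failure of \textsc{isDifference} returns a $2K_2$ obstruction whose vertices are used to augment $Z$ in a manner guided by Lemma~\ref{lem:cobipartiteTypeI}. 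Because every augmentation strictly increases $|Z|$ and $|Z|\le 4$, the loop runs at most a constant number of times; each round costs linear time (bounded scans over constant-size $Z$ plus one linear-time difference-graph test), so the whole procedure is linear.

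The hard part will be justifying the augmentation rule: I must show that whenever a valid witness $Z^\star$ exists, the vertices of any $2K_2$ returned by \textsc{isDifference} can be incorporated into a growing $Z\subseteq Z^\star$ without ever committing to a choice that rules out all valid $Z^\star$. This boils down to a short case analysis using the bimodule property together with Lemma~\ref{lem:cobipartiteTypeI}\ref{item:CoBipartiteTypeI-A}--\ref{item:CoBipartiteTypeI-B}, showing that each $2K_2$ necessarily meets $Z^\star$ in at least one vertex and that the greedy expansion stays inside some minimum witness. Once this soundness claim is established, summing the $O(1)$ linear-time iterations of the outer loop, together with the linear-time preprocessing and Type~II branch, delivers the advertised $O(|K|+|K'|+|E|)$ bound.
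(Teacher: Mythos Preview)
Your outline is close to the paper's own argument on preprocessing, on \textsc{isTypeII}, and on the two facts that drive \textsc{isTypeI}: the candidate $Z$ has size at most four, and any certificate must meet every $2K_2$ of $G_B\setminus Z$. The gap is that you have misread what \textsc{isTypeI} in Algorithm~\ref{alg:cobipartiteRecognition} actually does. It is not a greedy loop that commits to one vertex of the returned $2K_2$; it \emph{branches}, calling \textsc{isTypeI}$(G,Z\cup\{u\})$ for \emph{each} $u$ in the $2K_2$ and returning ``YES'' if any branch succeeds. Consequently, your announced ``hard part'' --- proving that a greedy choice always stays inside some minimum witness $Z^\star$ --- is neither needed nor, as stated, obviously true (different $2K_2$'s may hit different minimal witnesses, so a greedy commitment can dead-end). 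For the actual branching procedure the correctness argument is just the observation you already made: since $G_B\setminus Z^\star$ is $2K_2$-free, every certificate $Z^\star\supseteq Z$ contains at least one vertex of $U$, so trying all four $u\in U$ is exhaustive. No further ``stay inside a witness'' claim is required.

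The running-time accounting also has to be redone for a recursion tree rather than a single loop. Each recursive call strictly increases $|Z|$, and once $|Z|>4$ the input is not a zed and the call returns in $O(1)$; hence the tree has depth at most five with branching factor at most four, i.e.\ $O(1)$ nodes (the paper bounds this crudely by $5^{5-|Z|}$). Each node does one \textsc{findBimoduleZed} and at most one \textsc{isDifference}, both linear, which gives the claimed $O(|K|+|K'|+|E|)$. Finally, your description of \textsc{findBimoduleZed} as ``repeatedly scanning neighbourhoods to detect distinguishers'' is too loose to pin down a linear bound; you should use that once $|Z\cap K|=2$ (say $Z\cap K=\{u_1,u_2\}$), the distinguishers are exactly $N_{G_B}(u_1)\triangle N_{G_B}(u_2)$, computable in $O(|K'|)$, and at most one more symmetric pass on the $K'$-side suffices before either a bimodule zed is found or $|Z|$ exceeds four.
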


\begin{proof}
Let $n=\abs{K}+\abs{K'}$, $m=\abs{E}$.
Let $T_{diff}(n,m)$ be the running time of $\textsc{isDifference}$ on a graph with $n$ vertices and $m$ edges,
and let $T_{bm}(n,m)$ be the running time of $\textsc{findBimoduleZed}$ that finds a minimum zed of $G$ that is a bimodule of $G_B$ and contains a given zed $Z$. Finally let $\alpha(n,m) \defined T_{diff}(n,m)+T_{bm}(n,m)$.

The correctness of the algorithm follows from Observations~\ref{obs:twin}, \ref{obs:connected}, Lemma~\ref{lem:cobipartiteSegments} and from the correctness of the functions \textsc{isTypeI} and \textsc{isTypeII} that we prove in the sequel.

The correctness of \textsc{isTypeI} is based on Lemma~\ref{lem:cobipartiteTypeI}. A subset $Z$ of vertices of $G$ satisfying \ref{item:CoBipartiteTypeI-1})-\ref{item:CoBipartiteTypeI-3}) of Lemma~\ref{lem:cobipartiteTypeI} is termed as a \emph{certificate} through this proof. We now show that given a twin-free co-bipartite graph $G$ and $Z \subseteq V(G)$, \textsc{isTypeI} returns "YES" if and only if there exists a certificate $Z' \supseteq Z$. Moreover, we show that its running time is at most $5^{5-\abs{Z}} \alpha(n,m)$ when $\abs{Z} \leq 4$ and constant otherwise.

We first observe that if $Z$ is not a zed, then no superset of $Z$ is a zed, and the algorithm returns correctly "NO" in constant time at line \ref{lin:NotAZed}. Therefore, our claim is correct whenever $Z$ is not a zed. We proceed by induction on $5-\abs{Z}$. If $5-\abs{Z}=0$, then $Z$ is not a zed and the algorithm returns "NO" in constant time. In the sequel we assume that $Z$ is a zed. In this case, $Z$ is verified to be a zed by \textsc{isTypeI} in constant time and \textsc{isTypeI} proceeds to line \ref{lin:TestIfIsABimodule} to find (in time $T_{bm}(n,m)$) the minimal bimodule $Z'$ of $G_B$ that contains $Z$ and is a zed of $G$. We consider three cases according to the branching of \textsc{isTypeI}.
\begin{itemize}
\item{$\mathbf{Z' = Z}$ (i.e. $Z$ is a bimodule of $G_B$), \textbf{and} $\mathbf{G_B \setminus Z}$ \textbf{is a difference graph:}} $G_B \setminus Z$ is verified to be a difference graph by \textsc{isTypeI} at line \ref{lin:TestIfIsDifference}. It returns "YES" which is correct by Lemma~\ref{lem:cobipartiteTypeI} since $Z$ is a certificate. The running time is $\alpha(n,m)$, and the result follows since $1 \leq 5^{5-\abs{Z}}$.

\item{$\mathbf{Z' = Z}$ (i.e. $Z$ is a bimodule of $G_B$), \textbf{but} $\mathbf{G_B \setminus Z}$ \textbf{is not a difference graph:}} As $G_B \setminus Z$ is not a difference graph, there is a set $U \subseteq K \cup K' \setminus Z$ such that $G_B[U]$ is a $2 K_2$. Every certificate $Z' \supseteq Z$ must contain at least one vertex of $U$ because otherwise $G_B \setminus Z'$ contains $G_B[U]$ which is a $2K_2$. Therefore, \textsc{isTypeI} proceeds recursively calling \textsc{isTypeI} on $(G, Z \cup \set{u})$ for each $u \in U$. The algorithm returns "YES" if and only if one of the guesses succeeds. Then, the total running time is at most $\alpha(n,m)+4 \cdot 5^{5-(\abs{Z}+1)}\alpha(n,m)<\left(1+4 \cdot 5^{4-\abs{Z}}\right)\alpha(n,m)$. Since $1 \leq 5^{4-\abs{Z}}$ we conclude that the running time is at most $5^{5-\abs{Z}}\alpha(n,m)$.

\item{$\mathbf{Z' \neq Z}$ (i.e. $Z$ is not a bimodule of $G_B$):} If $Z'$ exists, the definition of a bimodule implies that any certificate that contains $Z$ has to contain $Z'$. Therefore, \textsc{isTypeI}$(G,Z')$ is invoked and its result is returned. Otherwise, no certificate contains $Z$ and "NO" is returned. The running time of \textsc{isTypeI} is $T_{bm}(n,m)+5^{5-\abs{Z'}} \alpha(n,m) < (1+5^{5-\abs{Z'}}) \alpha(n,m) \leq 5^{5-\abs{Z}}\alpha(n,m)$.
\end{itemize}

Since \textsc{isTypeI} is invoked initially at line~\ref{lin:InvokeIsType1} with $Z = \emptyset$, together with Lemma~\ref{lem:cobipartiteTypeI} this implies that the algorithm recognizes correctly graphs having a Type I representation. Moreover, the running time of line~\ref{lin:InvokeIsType1} is $5^{5-\abs{\emptyset}} \alpha(n,m)=O(\alpha(n,m))$.

The correctness of \textsc{isTypeII} follows directly from Lemma~\ref{lem:cobipartiteTypeII}. The connected components of $G_B$ can be calculated in $O(n+m)$ time using breadth first search. Therefore, the running time of \textsc{isTypeII} is $O(T_{diff}(n,m))=O(\alpha(n,m))$.

We now calculate the running time of the algorithm. All the twins of a graph can be removed in time $O(n+m)$ using partition refinement, i.e. starting from the trivial partition consisting of one set, and iteratively refining this partition using the closed neighborhoods of the vertices (see \cite{HabibPV99}). Each set of the resulting partition constitutes a set of twins. Summarizing, we get that the running time of Algorithm~\ref{alg:cobipartiteRecognition} is $O(\alpha(n,m))=O(T_{diff}(n,m)+T_{bm}(n,m))$.

$T_{diff}(n,m)$ is $O(n+m)$ (see \cite{HK06}). It remains to prove the correctness of \textsc{findBimoduleZed} and calculate its running time $T_{bm}(n,m)$. We consider the case where $Z$ contains at most one vertex from each one of $K$ and $K'$ and the complementing case where $Z$ contains at least two vertices from $K$ separately.
\begin{itemize}
\item{$Z = \emptyset$ or $Z$ is a singleton or $Z$ is a pair of vertices of $K \times K'$}.
By definition, $Z$ is both a zed of $G$ and a bimodule of $G_B$. Therefore, $Z$ is the minimal bimodule of $G_B$ that is a zed of $G$, and contains $Z$. In this case \textsc{findBimoduleZed} return $Z$ in constant time.

\item{Without loss of generality $Z \cap K$ contains at least two vertices $u_1,u_2$.} We note that $Z \cap K = \set{u_1, u_2}$, because otherwise $Z$ contains a $K_3$ contradicting the fact that it is a zed. Let $Z'$ be the superset of $Z$ obtained by adding to it all the vertices that distinguish $u_1$ and $u_2$. Formally, $Z' \defined (N_{G_B}(u_1) \triangle N_{G_B}(u_2)) \cup Z$. If $Z'$ is not a zed we can return that no superset of $Z$ is both a zed of $G$ and a bimodule of $G_B$. Now, let $Z'$ be a zed and let $U' = Z' \cap K'$. If $\abs{U'} \leq 1$ then $Z'$ is the minimal subset that contains $Z$ and is both a zed of $G$ and a bimodule of $G_B$. If $\abs{U'} > 2$ then $Z'$ is not a zed. Assume $\abs{U'} = 2$ and let $U'=\set{u'_1, u'_2}$. We now add to $Z'$, the set of vertices of $K$ that distinguish $U'$ to get $Z''$. If $Z''=Z'$ then $Z'$ is the minimal superset of $Z$ that is both a zed of $G$ and a bimodule of $G_B$. Otherwise every bimodule that contains $Z'$ has to contain also $Z''$. However $\abs{Z'' \cap K} > \abs{Z \cap K} = 2$, implying that $Z''$ contains a $K_3$, and is thus not a zed. In this case, we conclude that there is no superset of $Z$ as required.
\end{itemize}
As for the running time, we observe that all the operations can be performed in constant time except lines~\ref{lin:FindNeigborhoodDifferenceK} and \ref{lin:FindNeigborhoodDifferenceKPrime} that take time $O(\abs{K'})$ and $O(\abs{K})$, respectively. Therefore, the running time $T_{bm}(n,m)$ of \textsc{FindBimoduleZed} is at most $O(\abs{K}+\abs{K'})=O(n)$. We conclude that the running time of Algorithm~\ref{alg:cobipartiteRecognition} is $O(T_{diff}(n,m)+T_{bm}(n,m))=O(n+m)$.
\end{proof}

We conclude with an interesting remark, pointing to a fundamental difference between $\epg$ and $\enpg$ graphs. A graph is $\bepg{k}$ if it has an $\epg$ representation $\rep$ such that every path of $\pp$ has at most $k$ bends. It is known that given a $\bepg{k}$ representation it is always possible to modify the paths such that every path has exactly $k$ bend; indeed, if there is a path with less than $k$ bends, one can subdivide the edges of the host grid (and consequently all the paths containing the related edges) as needed to introduce new bends until it has exactly $k$ bends, without creating any new intersection or split \cite{GolumbicPC}. The following proposition states that this does not hold for $\benpg{k}$ graphs.

\begin{proposition}\label{prop:3K2containsPathsWithZeroBend}
Every $\boneenpg$ representation of a graph $G = C(K,K',E)$ such that $G_B = B(K,K',E)$ is isomorphic to $3K_2$ contains at least one path with zero bend.
\end{proposition}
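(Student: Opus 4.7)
The plan is to combine the rigidity provided by Lemma~\ref{lem:cobipartiteTypeI}\itemref{CoBipartiteTypeI-B} with a short interval-inequality argument over the six paths.

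\emph{Step 1 (ruling out Type~II and fixing the shape of $\cup \pp$).} Since $G_B = 3K_2$ has three non-trivial connected components, Lemma~\ref{lem:cobipartiteTypeII} forbids a Type~II representation; by Lemma~\ref{lem:cobipartiteSegments} every representation is therefore Type~I with a unique segment $S \in \cs(\cup \pp_K, \cup \pp_{K'})$. A direct check shows that no subset of $V(G)$ with at most one vertex satisfies condition \itemref{CoBipartiteTypeI-3} of Lemma~\ref{lem:cobipartiteTypeI}, whereas every pair $Z = \set{k_i,k'_j}$ with $i \neq j$ (where $K = \set{k_1,k_2,k_3}$, $K' = \set{k'_1,k'_2,k'_3}$, and $k_i k'_\ell \in E$ iff $i=\ell$) is a bimodule of $G_B$, a zed of $G$, and yields $G_B - Z \cong K_2 + 2K_1$, a difference graph. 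Hence the minimum certificates are exactly these non-adjacent pairs, and the second part of Lemma~\ref{lem:cobipartiteTypeI} applies: both endpoints $u,v$ of $S$ have degree~$3$ in $\cup \pp$ and constitute $\split(\cup \pp_K, \cup \pp_{K'})$. Together with Proposition~\ref{prop:BendsInClique}, this forces each of $\cup \pp_K$ and $\cup \pp_{K'}$ to bend exactly once at $u$ and once at $v$, with their off-$S$ branches going in distinct directions at each endpoint.

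\emph{Step 2 (path types and interactions).} Every $P \in \pp_K$ is a sub-path of $\cup \pp_K$ containing $e_K$, hence has exactly one of three shapes: straight on $S$, bent at $u$, or bent at $v$; analogously for $\pp_{K'}$. Since the off-$S$ edges of $\cup \pp_K$ and $\cup \pp_{K'}$ are pairwise disjoint (different branch directions at each of $u,v$), two paths from different cliques share an edge if and only if their $S$-edge intervals overlap. A degree check at $u$, $v$, and at the interval endpoints shows that two such paths split iff both are bent at the same corner; otherwise they are non-splitting whenever they edge-intersect. In particular, for $P \in \pp_K$ bent at $u$ with $S$-edge interval $[u,b]$ and $P' \in \pp_{K'}$ bent at $v$ with $S$-edge interval $[a',v]$, the pair never splits, and $P \sim P'$ iff $a' \leq b$ (otherwise they are edge-disjoint).

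\emph{Step 3 (pigeonhole and contradiction).} Assume for contradiction that every one of the six paths bends. For each $i$, $P_{k_i} \sim P_{k'_i}$ precludes them from bending at the same corner; hence one bends at $u$ and the other at $v$. By pigeonhole at least two indices, say $1$ and $2$, share the same type, and WLOG $P_{k_i}$ is bent at $u$ with $S$-edge interval $[u,b_i]$ and $P_{k'_i}$ is bent at $v$ with $S$-edge interval $[a'_i,v]$, for $i \in \set{1,2}$. Adjacency $k_i \sim k'_i$ forces $a'_i \leq b_i$ for $i=1,2$, while non-adjacency of $\set{k_1,k'_2}$ and $\set{k_2,k'_1}$, combined with the fact from Step~2 that such opposite-corner pairs never split, forces $b_1 < a'_2$ and $b_2 < a'_1$. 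Chaining gives $a'_2 \leq b_2 < a'_1 \leq b_1 < a'_2$, a contradiction. Therefore at least one of the six paths has zero bends. The main obstacle lies in Step~1, pinning down the forced "$U$-shape" of both clique-unions; once that rigidity is in hand, the interval manipulation of Step~3 is immediate.
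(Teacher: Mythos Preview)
Your Step~1 contains a genuine error. You claim that ``each of $\cup \pp_K$ and $\cup \pp_{K'}$ bends exactly once at $u$ and once at $v$'', but this is not merely unjustified---it is false. Apply part~\itemref{CoBipartiteTypeI-A} of Lemma~\ref{lem:cobipartiteTypeI} (which you never invoke): for your minimum certificate $Z=\set{x,y'}$, at least one of $P_x, P_{y'}$ contains $S$. Say $P_x \supseteq S$ with $x \in K$. Then $P_x$ is a one-bend path crossing both $u$ and $v$, so $P_x$ (hence $\cup\pp_K$) bends at \emph{at most one} of $u,v$. Consequently at one endpoint, say $u$, the $K$-branch is collinear with $S$ and $\cup\pp_K$ goes straight through $u$; only $\cup\pp_{K'}$ bends there.

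This collapses your Step~2 classification. A path $P_{k_i}\in\pp_K$ that crosses $u$ straight into the $K$-branch is still ``bent at $v$'' in your trichotomy, yet at $u$ it contributes the $S$-edge and the $K$-branch edge; together with a path $P_{k'_j}\in\pp_{K'}$ bending at $u$ (contributing the $S$-edge and the $K'$-branch edge) the union has degree~$3$ at $u$, so they \emph{split}. Your rule ``split iff both are bent at the same corner'' therefore fails, and the inequalities of Step~3 no longer encode the correct adjacency relation. (A careful reformulation---``split iff one path crosses the corner at which the other bends''---does eventually lead to the same contradiction, but that is not the argument you wrote.)

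The paper's proof avoids this altogether by using part~\itemref{CoBipartiteTypeI-A} directly: it fixes $P_x\supseteq S$, observes that the unique $G_B$-neighbour $x'$ of $x$ forces $P_{x'}\subseteq S$ (else $P_{x'}$ would split from $P_x$ at $u$ or $v$), and then shows via a short bend count on $\cup\pp_{K'}$ that $S$, hence $P_{x'}$, has no bend. No interval arithmetic or pigeonhole is needed.
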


\begin{proof}
Consider a set $Z$ consisting of two non-adjacent vertices of $G$. Then $Z$ is a trivial bimodule of $G_B$ and a zed of $G$. Moreover, by Theorem~\ref{theorem:difference-2K2free} $G_B \setminus Z$ is a difference graph since it does not contain a $2K_2$. Therefore, $Z$ satisfies conditions \ref{item:CoBipartiteTypeI-1})-\ref{item:CoBipartiteTypeI-3}) of Lemma~\ref{lem:cobipartiteTypeI}. Then $G$ is $\boneenpg$.

Let $\rep$ be a $\boneenpg$ representation of $G$. Since $G_B$ has three non-trivial connected components, by Lemma~\ref{lem:cobipartiteTypeII}, $\rep$ is a Type I representation. For any single vertex $v$ of $G$, the graph $G_B \setminus \set{v}$ contains a $2K_2$ therefore fails to satisfy condition \ref{item:CoBipartiteTypeI-3}). We conclude that $Z$ is a set of minimum size satisfying the conditions \ref{item:CoBipartiteTypeI-1})-\ref{item:CoBipartiteTypeI-3}) of Lemma~\ref{lem:cobipartiteTypeI}. Moreover, $Z$ consists of two non-adjacent vertices of $G$. Therefore, the unique segment $S$ of $\cs(K,K')$ has the properties~\ref{item:CoBipartiteTypeI-A}) and \ref{item:CoBipartiteTypeI-B}) mentioned in the same Lemma.

\alglanguage{pseudocode}
\begin{algorithm}
\caption{$\boneenpg~\cap$ Co-bipartite Recognition}
\label{alg:cobipartiteRecognition}
\begin{algorithmic}[1]
\Require {A co-bipartite graph $G = C(K,K',E)$ }
\If {$G$ is not connected}
\Return "YES"
\Comment{$G$ has a trivial $\boneenpg$ representation.}
\EndIf

\State Make $G$ twin-free using modular decomposition.

\If {\Call{isTypeI}{$G, \emptyset$}} \Return "YES".\label{lin:InvokeIsType1}
\EndIf
\If {\Call{isTypeII}{$G$}} \Return "YES".\label{lin:InvokeIsType2}
\EndIf
\State \Return "NO".

\Statex
\Function{isTypeI}{$G=C(K,K',E),Z$}
\Require{$G$ is connected, twin-free, $Z \subseteq V(G)$}
\Ensure{returns whether there is a certificate $Z' \supseteq Z$ for $G$ being Type I}
\State $G_B \gets B(K,K',E)$.
\If {$G[Z]$ is not a zed} \Return "NO". \EndIf \label{lin:NotAZed}
\State $Z' \gets$ \Call{findBimoduleZed}{$G,Z$}.\label{lin:TestIfIsABimodule}
\If {$Z' = Z$}
\Comment $Z$ is a zed of $G$ and also a bimodule of $G_B$
    \If {\Call{isDifference}{$G_B \setminus Z$}} \Return "YES". \EndIf \label{lin:TestIfIsDifference}
    \State Let $U \subseteq (K \cup K') \setminus Z$ such that $G_B[U]$ is a $2 K_2$.
    \For {$u \in U$}
        \If {\Call{isTypeI}{$G, Z \cup \set{u}$}} \Return "YES". \EndIf
    \EndFor
    \State \Return "NO".
\Else
    \If {$Z' \neq NULL$}
        \Return \Call{isTypeI}{$G, Z'$}.
    \Else~
        \Return "NO".
    \EndIf
\EndIf

\EndFunction

\Statex
\Function{isTypeII}{$G=C(K,K',E)$}
\Require{$G$ is connected, twin-free}
\Ensure{returns whether $G$ has a Type II representation}
\State $G_B \gets B(K,K',E)$.
\State Remove all isolated vertices from $G_B$.\Comment{There are at most two of them}
\State Calculate the connected components $G_1,\ldots,G_k$ of $G_B$.
\If {$k > 2$} \Return "NO". \EndIf
\If {not \Call{isDifference}{$G_1$}} \Return "NO". \EndIf
\If {not \Call{isDifference}{$G_2$}} \Return "NO". \EndIf
\State \Return "YES".
\EndFunction

\Statex
\Function{findBimoduleZed}{$G=C(K,K',E),Z$}
\Require{$G$ is twin-free, $Z$ is a zed of $G$}
\Ensure{Returns the minimum superset of $Z$ that is a zed of $G$ and a bimodule of $G_B$}
\If {$\abs{Z \cap K} \leq 1$ and $\abs{Z \cap K'} \leq 1$} \Return $Z$. \EndIf
\State Let without loss of generality $Z \cap K = \set{u_1, u_2}$.
\State $Z' \gets (N_{G_B}(u_1) \triangle N_{G_B}(u_2)) \cup Z$.\label{lin:FindNeigborhoodDifferenceK}
\If {$Z'$ is not a zed} \Return NULL. \EndIf
\State $U' \gets Z' \cap K'$.
\If {$\abs{U'} \leq 1$} \Return $Z'$. \EndIf
\State Let without loss of generality $U' = \set{u'_1, u'_2}$.
\State $Z'' \gets (N_{G_B}(u'_1) \triangle N_{G_B}(u'_2)) \cup Z'$.\label{lin:FindNeigborhoodDifferenceKPrime}
\If {$Z''=Z'$}
    \Return $Z'$
\Else~
    \Return NULL.
\EndIf
\EndFunction

\Statex
\Function{isDifference}{$G$} \Comment{\cite{HK06}}
\Require{$G$ is bipartite}
\Ensure{Returns "YES" if $G$ is a difference graph and a $2K_2$ of $G$ otherwise.}
\EndFunction
\end{algorithmic}
\end{algorithm}

Let $Z=\set{x,y'}$ where $x \in K$ and $y' \in K'$, and let $y$ and $x'$ be the unique neighbors in $G_B$ of $x$ and $y'$ respectively. Let also $u,v$ be the endpoints of $S$. By property~\ref{item:CoBipartiteTypeI-A}, without loss of generality $P_x$ contains $S$. Therefore, $P_{x'}$ is contained in $S$ as otherwise it would split from $P_x$ in at least one of $u,v$, contradicting the fact that $x$ and $x'$ are adjacent. By property~\ref{item:CoBipartiteTypeI-B} of the lemma, $u$ and $v$ are split points. To conclude the claim, we now show that $P_{x'}$ has no bends. Assume by contradiction that $P_{x'}$ has a bend $w$. Then $w$ is a bend of $S$ and also of $P_x$. Therefore, $P_x$ does not bend neither at $u$ nor in $v$ as otherwise it would contain $2$ bends. We conclude that both $u$ and $v$ are bends of $\cup \pp_{K'}$. Clearly, $w$ is also a bend of $\cup \pp_{K'}$. Then $\cup \pp_{K'}$ has $3$ bends, contradicting Proposition~\ref{prop:BendsInClique}.
\end{proof}

\section{Summary and Future Work}\label{sec:summary}
In \cite{BESZ14-ENPG-TCS} we showed that $\enpg$ contains an infinite hierarchy of subclasses that are obtained by restricting the number of bends. In this work we showed that $\boneenpg$ graphs are properly included in $\btwoenpg$ graphs. The question whether $\benpg{2} \subsetneq \benpg{3} \subsetneq \ldots$ remains open.

In this work, we studied the intersection of $\boneenpg$ with some special chordal graphs. We showed that the recognition problem of $\boneenpg$ graphs in $\npc$ even for a very restricted sub family of split graphs. On the other hand we showed that this recognition problem is polynomial-time solvable within the family of co-bipartite graphs. A forbidden subgraph characterization of $\boneenpg$ co-bipartite graphs is also work in progress.

We also showed that unlike $\bepg{k}$ graphs that always have a representation in which every path has exactly $k$ bends, some $\boneenpg$ graphs can not be represented using only paths having (exactly) one bend. One can define and study the graphs of edge intersecting non-splitting paths with exactly $k$ bends. Another possible direction is to follow the approach of \cite{CCH16} and consider $\boneenpg$ representations restricted to subsets of the four possible rectilinear paths with one bend.

We showed that trees and cycles are $\boneenpg$. The characterization of their representations is work in progress. A natural extension of such a characterization is to investigate the relationship of $\boneenpg$ graphs and cactus graphs. Another possible extension is to use the characterization of the special case of $C_4$ to characterize induced sub-grids. A non-trivial characterization would imply that not every bipartite graph is $\boneenpg$. Therefore, it would be natural to consider the recognition problem of $\boneenpg$ bipartite graphs. The following interpretation of our results suggests that the latter problem is $\nph$: A clique provides substantial information on the representation, and when the graph is partitioned into two cliques we are able to recognize $\boneenpg$ graphs. However, the absence of one such clique (in case of split graphs) already makes the problem $\nph$. In case of bipartite graphs both of the cliques are absent.



\end{document}